\setlist[enumerate,1]{font=\bfseries,label=\arabic*.}
\setlist[enumerate,2]{label=\arabic*.}
\setlist[enumerate,3]{font=\itshape,label=\arabic*.}
\numberwithin{equation}{section}
\def\@bibdataout@init{}\def\pre@bibdata{}\makeatother
\colorlet{darkgreen}{green!50!black}
\colorlet{darkblue}{blue!70!black}
\colorlet{darkred}{red!70!black}
\colorlet{lightblue}{white!85!blue}
\colorlet{lightred}{white!70!red}
\newtheorem{theorem}{Theorem}[section]
\newtheorem{corollary}[theorem]{Corollary}
\newtheorem{lemma}[theorem]{Lemma}
\newtheorem{proposition}[theorem]{Proposition}
\theoremstyle{definition}
\newenvironment{algorithm}[1]{\begin{mdframed}[linewidth=1pt]%
    \algothm}%
    {\endalgothm\end{mdframed}\vspace{.5\baselineskip}}
\theoremstyle{remark}
\newtheorem*{remark}{Remark}
\newtheorem*{example}{Example}
\newtheorem*{examples}{Examples}
\newcommand{\Cor}[1]{Corollary~\ref{#1}}
\newcommand{\Fig}[1]{Figure~\ref{#1}}
\newcommand{\Lem}[1]{Lemma~\ref{#1}}
\newcommand{\Prop}[1]{Proposition~\ref{#1}}
\newcommand{\Sec}[1]{Section~\ref{#1}}
\newcommand{\Thm}[1]{Theorem~\ref{#1}}
\newcommand{\Alg}[1]{Algorithm~\ref{#1}}
\newcommand{\equ}[1]{equation~\eqref{#1}}
\newcommand{\Equ}[1]{Equation~\eqref{#1}}
\newcommand{\ie}{\emph{i.e.}}
\newcommand{\etc}{\emph{etc.}}
\newcommand{\GL}{\operatorname{GL}}
\newcommand{\M}{\operatorname{M}}
\newcommand{\PGL}{\operatorname{PGL}}
\newcommand{\PSL}{\operatorname{PSL}}
\newcommand{\PSO}{\operatorname{PSO}}
\newcommand{\PSU}{\operatorname{PSU}}
\newcommand{\PU}{\operatorname{PU}}
\newcommand{\SL}{\operatorname{SL}}
\newcommand{\SO}{\operatorname{SO}}
\newcommand{\SU}{\operatorname{SU}}
\newcommand{\Spin}{\operatorname{Spin}}
\newcommand{\Sp}{\operatorname{Sp}}
\newcommand{\U}{\operatorname{U}}
\newcommand{\so}{\operatorname{so}}
\newcommand{\su}{\operatorname{su}}
\renewcommand{\O}{\operatorname{O}}
\renewcommand{\sl}{\operatorname{sl}}
\renewcommand{\u}{\operatorname{u}}
\newcommand{\ab}{\operatorname{ab}}
\newcommand{\amax}{\operatorname{amax}}
\newcommand{\bit}{\operatorname{bit}}
\newcommand{\can}{\operatorname{can}}
\newcommand{\ccan}{\operatorname{ccan}}
\newcommand{\len}{\operatorname{len}}
\newcommand{\nil}{\operatorname{nil}}
\newcommand{\poly}{\operatorname{poly}}
\newcommand{\tr}{\operatorname{tr}}
\renewcommand{\Re}{\operatorname{Re}}
\newcommand{\st}{\;\mid\;}
\newcommand{\eps}{\epsilon}
\newcommand{\into}{\hookrightarrow}
\newcommand{\longto}{\longrightarrow}
\newcommand{\onto}{\twoheadrightarrow}
\renewcommand{\setminus}{\smallsetminus}
\renewcommand{\tensor}{\otimes}
\newcommand{\bigcomm}[1]{\big\llbracket #1\big\rrbracket}
\newcommand{\braket}[1]{\langle #1\rangle}
\newcommand{\comm}[1]{\llbracket #1\rrbracket}
\providecommand{\abs}[1]{\lvert#1\rvert}
\providecommand{\norm}[1]{\lVert#1\rVert}
\newcommand{\C}{\mathbb{C}}
\newcommand{\Q}{\mathbb{Q}}
\newcommand{\R}{\mathbb{R}}
\newcommand{\Z}{\mathbb{Z}}
\newcommand{\hell}{\hat{\ell}}
\newcommand{\tG}{\widetilde{G}}
\newcommand{\tOmega}{\widetilde{\Omega}}
\newcommand{\tO}{\widetilde{O}}
\newcommand{\tTheta}{\widetilde{\Theta}}
\newcommand{\tomega}{\widetilde{\omega}}
\newcommand{\va}{\vec{a}}
\newcommand{\ve}{\vec{e}}
\newcommand{\vg}{\vec{g}}
\newcommand{\vt}{\vec{t}}
\newcommand{\vu}{\vec{u}}
\newcommand{\vv}{\vec{v}}
\newcommand{\vx}{\vec{x}}
\newcommand{\vy}{\vec{y}}
\renewcommand{\H}{\mathbb{H}}
\def\app#1#2{\mathrel{\setbox0=\hbox{$#1\sim$}%
    \setbox2=\hbox{\rlap{\hbox{$#1\propto$}}\lower1.1\ht0\box0}%
    \raise0.25\ht2\box2}}
\newcommand{\defeq}{\stackrel{\text{def}}=}
\newenvironment{ctp}{\[\begin{tikzpicture}}
    {\end{tikzpicture}\] \ignorespacesafterend}
\newenvironment{eq}[1]{\begin{equation} \label{#1}}
    {\end{equation}\ignorespacesafterend}
\newcommand{\eatline}{\vspace{-\baselineskip}}
\newcommand{\blankline}{\vspace{\baselineskip}}
\begin{document}
\title{Breaking the cubic barrier in the Solovay--Kitaev algorithm}
\author{Greg Kuperberg}
\email{greg@math.ucdavis.edu}
\thanks{Partly supported by NSF grants CCF-2009029 and CCF-2317280}
\affiliation{University of California, Davis}

\date{\today}

\begin{abstract}
\centerline{\textit{\normalsize Dedicated to the memory of
    Abdelrhman Elkasapy (1983--2017)}}
\blankline

We improve the Solovay--Kitaev theorem and algorithm for a general finite,
inverse-closed generating set acting on a qudit.  Prior versions of the
algorithm efficiently find a word of length $O(n^{3+\delta})$ to approximate
an arbitrary target gate to $n$ bits of precision.  Using two new ideas,
each of which reduces the exponent separately, our new bound on the word
length is $O(n^{1.44042\ldots+\delta})$.  Our result holds more generally
for any finite set that densely generates any connected, semisimple real
Lie group, with an extra length term in the noncompact case to reach
group elements far away from the identity.
\end{abstract}

\maketitle

\tableofcontents

\section{Introduction}
\label{s:intro}

The Solovay--Kitaev theorem \cite[Lem.~4.7]{Kitaev:error} is a foundational
result in quantum computing, and an important result in group theory.
In its algorithmic form, it says the following: Let $A$ be a finite set of
$d \times d$ unitary matrices that densely generate $\SU(d)$, and such that
$a \in A$ implies that $a^{-1} \in A$.  Then there is a polynomial-time
classical algorithm to approximate any $g \in \SU(d)$ with a product
\[ w = \prod_{j=1}^\ell a_j \qquad a_j \in A\]
of length $\ell = \poly(n)$, such that $w$ approximates $g$ to $n$ bits of
precision.  Taking $\norm{g-w} < \eps = 2^{-n}$ as the approximation error
(using some Banach norm on matrices), the length bound is also commonly
written as $\ell = \poly(\log(1\/\eps))$.

In the context of quantum computing, $A$ is interpreted as a gate set,
and the Solovay--Kitaev algorithm is needed to approximate an arbitrary
unitary operator $g \in \SU(d)$ (or more precisely $g \in \PU(d)$, which is
nearly equivalent) in a fault-tolerant manner.  Here, dense generation is
the only known way to circumvent the problem that if $g$ is selected from
a continuous family, then it cannot directly be fault tolerant, because
it cannot be protected from imprecision in its parameters.  The theorem
also provides an efficient gate-set independence result for the quantum
computing model, because it provides a way to approximate any gate set
$A$ with efficiently computable short words in any other gate set $B$.
In the context of group theory, the Solovay--Kitaev theorem is related
to other results that quantify how quickly a finite set $A \subset G$
densely generates a Lie group $G$; see \Sec{ss:approx}.

Different proofs of the Solovay--Kitaev theorem produce approximating words
of length $O(n^\alpha)$ for different values of $\alpha$.  (Note that
the asymptotic bounds are usually only uniform in $n$; the specific
constant and sometimes the exponent may depend on $d$ and $A$.)  To our
knowledge, the competitive word length bound in the general case in the
existing literature is $O(n^\alpha)$ for any $\alpha > 3$, as established
by Kitaev, Shen, and Vyalyi \cite[Thm.~8.5]{KSV:cqc}.  (See \Sec{ss:approx}
concerning special gate sets.)

Our main result is the following improvement to the algorithmic
Solovay--Kitaev theorem for general gate sets.

\begin{theorem} Let $A = A^{-1} \subset \SU(d)$ be a finite set of matrices
that densely generate $\SU(d)$ for some fixed $d$.  Let
\[ \alpha > \log_\phi(2) = 1.44042\ldots, \]
where $\phi = (1+\sqrt5)/2$ is the golden ratio, and let $\delta > 0$.
Then there is a polynomial time classical algorithm to approximate any $g
\in \SU(d)$ to $n$ bits of precision by a product $w$ in $A$ with word length
$\ell = O(n^\alpha)$ and compressed word length $\hell = O(n^{1+\delta})$.
\label{th:qudit} \end{theorem}

The statement of \Thm{th:qudit} mentions the concept of a compressed
word, by definition a word expressed as an algebraic circuit using the
concatenation operation; see \Sec{ss:models}.  As discussed there, $\hell =
\tOmega(n)$ for most $g \in \SU(d)$, so the bound on the compressed word
length is nearly optimal.  The compressed word length is also related to
the time complexity of the algorithm; see \Thm{th:runtime}.  (Dawson and
Nielsen \cite{DN:solkit} also use compressed words in their treatment of
the Solovay--Kitaev theorem.)

We also generalize \Thm{th:qudit} to an arbitrary connected semisimple
Lie group $G$ that may or may not be compact.

\begin{theorem} Let $G$ be a semisimple, connected real Lie group with a
left-invariant Finsler metric $d_G(\cdot,\cdot)$, and let $A = A^{-1} \subset
G$ densely generate $G$.  Let $\alpha > \log_\phi(2)$ and $\delta > 0$.
Then there is a polynomial-time algorithm to approximate any $g \in G$
by a product $w$ in $A$ with word length $\ell = O(n^\alpha + R)$
and compressed word length $\hell = O(n^{1+\delta})$,
where $d_G(g,w) < 2^{-n}$ and $d_G(g,1) < R$.
\label{th:semisimple} \end{theorem}

\begin{corollary} If $G \subseteq \SL(d,\C)$ is a semisimple, real matrix
Lie group in \Thm{th:semisimple}, then
\[ \ell = O((n + N)^\alpha), \]
where $\norm{g-w} < 2^{-n}$ and $\norm{g} < 2^N$.
\label{c:matnorm} \end{corollary}

We establish \Cor{c:matnorm} in \Sec{ss:metrics}.

We need to address several issues to interpret \Thm{th:semisimple} and
\Cor{c:matnorm}.
\begin{itemize}
\item We discuss left-invariant Finsler metrics in \Sec{ss:metrics}.
In particular, they are all equivalent for the purpose of Theorems
\ref{th:qudit} and \ref{th:semisimple}.
\item An element $g \in G$ requires a data representation when $G$ cannot
be realized as a group of matrices.  This is addressed in \Sec{ss:nonmat}.
\item The gates in $A$ and the target $g \in G$ have real parameters, which
also require a data representation.  This is addressed in \Sec{ss:models}.
\end{itemize}

Modulo these clarifications, we give a more precise description of the
input and time complexity of Theorems~\ref{th:qudit} and \ref{th:semisimple}.

\begin{theorem} For a general gate set $A$, the algorithm in
\Thm{th:semisimple} requires $n+O(\log(n)+R)$ bits of precision for each
gate $a \in A$ and $n+O(R)$ bits of precision for the target $g \in G$.
and then runs in time $\tO(n^{2+\delta}+(R+n)^2)$ for any $\delta > 0$.
If $G$ is an algebraic group and $A$ is an algebraic gate set, then the
algorithm runs in time $\tO(n^\alpha+R)$.
\label{th:runtime} \end{theorem}

\subsection{Ideas in the algorithm and proof}
\label{ss:ideas}

Our proofs of \Thm{th:qudit} and \Thm{th:semisimple} rest on two new ideas.

All proofs of the general Solovay--Kitaev theorem that we have seen
(including ours) are multiscale constructions.   We bring a target
group element $g \in G$ successively closer to the identity with a
recurrence of the form $g' = gs^{-1}$ for some step $s \approx g$.
In the Kitaev--Shen--Vyalyi version, a step $s$ with distance $O(2^{-n})$
has word length $O(n^{2+\delta})$.  A total word $w$ built from these
steps converges at a singly exponential rate, which means that a linear
number of steps are needed and $w$ has length $O(n^{3+\delta})$.  In the
Dawson--Nielsen approach \cite{DN:solkit}, the total word $w$ converges
doubly exponentially in the number of steps, so that each step $s$ and
the total word $w$ both have length $O(n^\alpha)$ with the same exponent
$\alpha$.  However, their doubly exponential convergence requires more
expensive steps, so that they only obtain
\[ \alpha = \log_{3/2}(5) = 3.96936\ldots. \]

One new idea in this article (\Sec{ss:zgolf}) is a multiscale framework that
combines the advantages of the Kitaev--Shen--Vyalyi and Dawson--Nielsen
algorithms.  We first formulate a recursive algorithm to create a step
$s$ (mainly using group commutators) with roughly exponential distance
$\Theta(2^{-n})$ and word length $O(n^2)$.  We then formulate a separate
recursive algorithm to precisely aim the steps (using group conjugation)
to achieve a total word $w$ that converges to $g$ at a doubly exponential
rate, so that $w$ also has length $O(n^2)$.  This approach appears to be
more economical than making precise steps solely from earlier precise steps,
as both Kitaev--Shen--Vyalyi and Dawson--Nielsen do.

If we make the roughly exponential steps using group commutators, then the
steps have word length $O(n^2)$ with an exponent of $2 = \log_2(4)$. More
explicitly, if two steps $u$ and $v$ have some length $\ell$ and distance
$\Theta(2^{-n})$, then their commutator $uvu^{-1}v^{-1}$ has length $4\ell$
and distance $\Theta(4^{-n})$.  Our other new idea (in \Sec{ss:steps}) is to
replace the ordinary group commutator by more efficient higher commutators
by borrowing from Elkasapy and Thom \cite{ET:lower,Elkasapy:lower}, who
studied the efficiency of higher group commutators in a different context.
They found a sequence of higher commutators whose efficiency converges
to $\log_\phi(2)$.  Using their higher commutators, we can achieve an
improved algorithm where the rough steps, precise steps, and total words
all have word length $O(n^\alpha)$ for any $\alpha > \log_\phi(2)$.

\acknowledgments

The author would like to especially thank Adam Bouland and Tudor
Giurgica-Tiron for discussions that led to the results in this article. The
author would also like to thank Indira Chatterji, Sami Douba, Jeff Danciger,
Sanchayan Dutta, Fran\c{c}ois Gueritaud, Fanny Kassel, Fran\c{c}ois Labourie,
and Daniel Tobias for useful discussions.  Finally, the author would like
to thank the anonymous referee for expert corrections to the manuscript.

The author learned from Andreas Thom that his former student Abdelrhman
Elkasapy died from an illness shortly after obtaining results which are
important for this article.  This article is dedicated to Elkasapy for
this reason.

\section{Context and history}
\label{s:history}

\subsection{Motivation for this work}
\label{ss:motive}

Bouland and Giurgica-Tiron \cite{BG:ifsk} recently found a generalization
of the Solovay--Kitaev theorem to the inverse-free case, \ie, in which $A$
densely generates $\SU(d)$ or $\SL(d,\C)$, without using the inverses of the
generators.  To prove their result, they find inverse-free self-cancelling
words that are more complicated than the group commutator $ghg^{-1}h^{-1}$.
Moreover, some inverse-free self-cancelling words are more efficient
than others.  This led the author to consider higher commutators for the
original Solovay--Kitaev problem with inverses, and then to discover the
important results of Elkasapy and Thom.  Later, while trying to simplify
the multiscale iteration, the author further improved the word length
exponent with doubly exponential convergence.

Without any control over the exponent $\alpha$, the author previously proved
\Thm{th:semisimple} for any perfect, connected, matrix Lie group $G$, with
a $\poly(n)$ length bound \cite[Thm.~2.4]{K:jones}.  (See also Aharonov,
Arad, Eban, and Landau \cite[Thm.~7.5]{AAEL:tutte}.)  However, the author
discovered mistakes in this previous work, which became another motivation
for the present work. The algorithm there constructs a frame at each scale
$\Theta(r^n)$ for some $r < 1$, and uses it to make a deformed lattice which
is an $O(r^n)$-net.  Since the algorithm makes precise steps from precise
steps, the lattices are part of the recursion to the steps, and the result
is a complicated exponent that depends on the specific Lie group $G$.
The theorem states and proves a valid length bound $\ell = \poly(n)$,
except that it is missing a macroscopic term when $G$ is not compact.
However, the last paragraph of the proof erroneously claims an $O(n^3)$
bound that does not hold for that algorithm.

Perfect Lie groups are more general and less robust than semisimple Lie
groups.  We do not know whether the Solovay--Kitaev exponent $\alpha$
is uniformly bounded for the class of perfect Lie groups.

\subsection{Modes of approximation}
\label{ss:approx}

Let $G$ be a connected Lie group, and let $A \subset G$ be a finite subset
such that the words in $A$ of all lengths are dense in $G$.  The set $A$
may or may not be closed under inverses; if it is, then we say that $A$
is \emph{symmetric}.  Regardless, we assume that $1 \in A$.  Let $A^\ell$
denote the words in $A$ of length at most $\ell$.  Given $g \in G$ and
a radius $r$, let $B_G(r,g)$ denote the open ball of radius $r$ around
$g \in G$ with respect to a suitable metric on $G$.  Then in general,
there are three possible standards for how well $A^\ell$ approximates $G$:
\begin{enumerate}
\item Existence of approximation: $A^\ell$ could be an $\eps$-net of $G$
for some $\eps > 0$, or an $\eps$-net of $B_G(R,1)$ for a large radius
$R$ if $G$ is not compact.

\item Algorithmic approximation: Given $g \in G$, there could be an
(polynomially) efficient algorithm to find $w \in A^\ell \cap B_G(\eps,g)$,
assuming efficient algorithms to compute the coordinates of $g$ and each
$a \in A$.

\item Statistical approximation, or equidistribution: There could be roughly
the same number of elements in $A^\ell \cap B_G(\eps,g)$ for all $g \in G$,
or all $g \in B_G(R,1)$ when $G$ is not compact.
\end{enumerate}
Either statistical or algorithmic approximation implies existence of
approximation for any given $\eps$, and any given $R$ when $G$ is not
compact.  Nonetheless, as far as we know, all current results
for existence of approximation either come with an algorithm, or also
establish statistical approximation.

Note that if $G$ has a nontrivial quotient group $H = G/N$ for some
closed, normal subgroup $N$, then $H$ is another connected Lie group,
and approximation in $G$ is at least as difficult as approximation in $H$.

Note also that if $\Gamma \subset G$ is a finitely generated, dense
subgroup, then all three approximation problems are equivalent up to
constant factors for all finite, symmetric sets $A \subset \Gamma$ that
exactly generate $\Gamma$.

For the rest of this subsection, let $\eps = 2^{-n}$.

If $G$ is a perfect Lie group, then a well-known volume argument shows that
$A^\ell$ can only be an $\eps$-net when $\ell = \Omega(n)$, with an extra
long-distance $\Omega(R)$ term if $G$ is not compact.  Note that a similar
bound holds for compressed words; see \Sec{ss:models}.  The case when $G$
is not perfect is less interesting, because then $G$ has a nontrivial,
connected abelian quotient $G_{\ab}$.  For the Solovay--Kitaev problem
in an abelian Lie group $G_{\ab}$, an analogous counting argument instead
implies that
\[ \ell = \Omega(\eps^{-\alpha}) = 2^{\alpha n)}
    \qquad \alpha = \frac{\dim(G_{\ab})}{|A|-1}. \]

As mentioned in the introduction, the prior best algorithmic result
when $G = \SU(d)$ and $A$ is symmetric and otherwise general establishes
$\ell = O(n^{3+\delta})$ using the KSV algorithm \cite[Thm.~8.5]{KSV:cqc}.
Problem A3.1 in Nielsen and Chuang \cite{NC:qcqi} claims that approximations
exist with length $\ell = \tO(n^2)$, but this exercise seems erroneous
and we do not know how to repair it.

For certain special gate sets and $G = \SU(2)$,
collaborators at Dalhousie, Microsoft, and Waterloo
\cite{KMM:practical,KBS:topological,BGS:vbasis,RS:optimal,KBRY:framework}
found efficient approximation algorithms with $\ell = O(n)$, which remarkably
is optimal up to a constant factor.  These algorithms use a number-theoretic
approach that does not seem to apply to general gate sets; they run in
randomized polynomial time modulo heuristic hypotheses in number theory.

Equidistribution is easier to consider when $G$ is a compact group.
As far as we know, all current statistical approximation results establish
a spectral gap for the operator average of elements of $A$ acting on a
unitary representation $V$ of $G$.  As explained by Harrow, Recht, and
Chuang \cite[Thm.~1]{HRC:discrete}, if $A$ has a uniform spectral gap,
independent of $V$, then there is a word in $A$ with optimal length
$\ell = O(n)$ to approximate any $g \in G$.   By a remark of Varju
\cite[Secs.~2.1\&3]{Varju:compact}, if $G$ is compact, then a spectral
gap result for all symmetric $A$ implies a spectral gap result for all
asymmetric $A$.

Spectral gap results of this type have a separate and longer history.
First, Kazhdan established that certain finitely generated discrete groups
$\Gamma$ have a universal spectral gap property, which he called property
(T) \cite{Kazhdan:dual}.  Later, Margulis \cite{Margulis:means}, Sullivan
\cite{Sullivan:additive}, Drinfeld \cite{Drinfeld:additive}, and Lubotzky,
Phillips, and Sarnak \cite{LPS:hecke1,LPS:hecke2} found examples of finite
sets $A$ in particular compact $G$ with uniform spectral gaps.  Lubotzky,
Phillips, and Sarnak also emphasized the connection between spectral gaps
and explicit bounds for equidistribution.  This work led to the conjecture
that any $A$ that densely generates any compact, semisimple $G$ has a uniform
spectral gap.  Building on work of Bourgain and Gamburd \cite{BG:su2,BG:sud},
Benoist and de Saxce \cite{BdS:gap} established a uniform spectral gap when
$G$ is simple and $A$ is an algebraic gate set.

Boutonnet, Ioana, and Salehi-Golsefidy \cite{BIG:local} recently generalized
the spectral gap condition to a local condition that makes sense for
noncompact groups and infinite measure spaces.  They also generalized the
theorem of Benoist and de Saxce to prove the following: If $G$ is a simple
real Lie group that need not be compact, then every finitely generated,
dense, algebraic group $\Gamma \subset G$ has a symmetric subset $A
\subseteq \Gamma$ with a local spectral gap.  Their Corollary H implies
a nonalgorithmic existence version of our result for every symmetric
algebraic gate set $A$ that densely generates $G$, with an optimal word
length of $\ell = O(n+R)$.

When $G$ is compact and semisimple and $A$ densely generates $G$ and is
otherwise arbitrary, Varju \cite[Sec.~1.5]{Varju:compact} established
equidistribution with words of length $\ell = O(n^3)$.  We wonder whether
our methods can be used to reduce this exponent.

Finally, in the inverse-free Solovay--Kitaev theorem of
Bouland--Giurgica-Tiron \cite{BG:ifsk} for $\SU(d)$ and $\SL(d,\C)$, the
word length exponent $\alpha$ increases with $d$.  It is an interesting
question whether there is a polynomial-time algorithm for this inverse-free
case with any bound for $\alpha$ independent of $d$.

\section{Preliminaries}
\label{s:prelim}

We will use asymptotic notation $g(x) = O(f(x))$ (and the companion
notations $\Omega(f(x))$ and $\Theta(f(x))$) with the standard meaning
that $|g(x)| < Cf(x)$ eventually as $x \to 0$, or as $x \to \infty$,
depending on context.  In some cases we will use the same notation in
unconditional or nonasymptotic form, meaning that $|g(x)| < Cf(x)$ for
all $x$.  In other cases, we use a multivariate variate version such as
$g(x,y) = O(f(x,y))$, which then means that $|g(x,y)| < Cf(x,y)$ when
$\min(x,y)$ is sufficiently large (or sufficiently small, \etc.)

We will also use the standard notation
\[ \tOmega(f(x)) \defeq \bigcup_{\alpha > 0} O(\log(f(x))^\alpha f(x)), \]
along with the analogous $\tOmega(f(x))$ and $\tTheta(f(x))$.  In particular,
a function $f(n) = \tTheta(n)$ is called \emph{almost linear}.

In the introduction, we also use the abbreviation $\poly(f(x))$, which
means the union of $O(f(x)^\alpha)$ for all $\alpha > 0$.

\subsection{Models of numbers, words, and computation}
\label{ss:models}

In this article, we will estimate the serial time complexity of
algorithms in the RAM machine model \cite{CR:random}, which is known to
be almost-linear equivalent to a Turing machine with an infinite regular
tree tape \cite{PR:versus}.  Models of serial computation are arguably
only stable up to almost-linear time equivalence, so that $\tO(f(n))$
is more robust than $O(f(n))$ as a time complexity bound in general.

We will use a version of interval arithmetic to input and calculate with
real numbers.  Explicitly, we represent a real number $x \in \R$ by a
dyadic rational number $x' \in \Q$ and allowed error $\eps \in \Q_+$ such
that $\abs{x-x'} < \eps$.  In general, we will not require that $\eps =
2^{-n}$; if $\eps \le 2^{-n}$, then we say that $x$ is given with $n$
bits of precision.  We do not require that $\eps$ be a power of $2$.
If $G \subseteq \M(d,\R)$ is a matrix group and $g \in G$, then we will
likewise represent it by $g' \in \GL(d,\R) \subseteq \M(d,\R)$ such that
\[ \max(\norm{g-g'}_\infty,\norm{g^{-1}-{g'}^{-1}}_\infty) < \eps, \]
where $\norm{\cdot}_\infty$ is the operator norm on $\M(d,\R)$.

Thanks to fast integer multiplication \cite{SS:schnelle} and other
techniques, various numerical primitives can be calculated in almost
linear time in the number of digits of precision.  First, we can convert
numbers between base 2 and base $b$ for any $b \ge 2$, or we can let an
approximation $x' \in \Q$ be any rational with the numerator and denominator
in any particular base $b$.  Thus, the interval arithmetic data type is base
independent, up to almost-linear-time reductions.  Also, using Newton's
method (with fast division, which itself comes from Newton's method),
we can also expand the digits of a nondegenerate, isolated solution to
a fixed system of polynomial equations.  This will be used in our algorithms.

Fast integer multiplication also supports the hypothesis of \Thm{th:runtime}
in a different way.  Many irrational numbers that arise in calculus can be
expanded in almost linear time \cite{BB:familiar,LMS:volume}, including
not only all algebraic numbers but also many transcendental values.
Thus we obtain the same time complexity estimates if the matrix entries
of the gates $a \in A$ or the target $g \in G$ are given in closed form
complicated proven or likely transcendental values such as $\sqrt{e+\pi}$.

If the matrix entries of the gates $a \in A$ are all algebraic numbers,
then as stated in \Thm{th:runtime}, we will use a second data type that
yields better time complexity.  In this case, the real components of the
gates (the real and imaginary parts of all of the matrix entries) lie in a
common real number field $K \subseteq \R$.  Given a rational basis of $K$,
we can describe any $x \in K$ as a vector of rational numbers.

A word $w$ over the gate set $A$ can either mean an element $w \in F_A$ in
the free group $F_A$ generated by $A$, with inverse letters identified as
inverse group elements; or the value $w \in G$ of the corresponding product
in the group $G$.  Since there is a canonical evaluation map from $F_A$,
we can take $w \in F_A$ and coerce it to an element of $G$ as needed. This
coercion is usually but not always implicit.  For example, if $A$ is an
algebraic gate set, then we can relate the word length $\len(w)$ of $w
\in F_A$ to the bit complexity of $w \in G$ by the useful estimate
\begin{eq}{e:bitlen} \norm{w_G}_{\bit} = O(\len(w_A)). \end{eq}

A \emph{compressed word} that represents a word $w$ over an alphabet $A$
is an algebraic circuit $\Delta$ that computes $w$ by combining letters
in $A$ via concatenation gates (with arbitrary fan-in) and unary inversion
gates \cite{Lohrey:groups}.  We can evaluate $\Delta$ and thus uncompress
$w$ in almost linear time $\tO(\len(w))$, provided that $O(\len(w))$
letters cancel during the evaluation.  However, a poorly behaved circuit
$\Delta$ may induce an exponential amount of cancellation.  We do not know
whether such a circuit $\Delta$ can be evaluated in almost linear time
in the size of $\Delta$ (although it can be evaluated in polynomial time
\cite[Sec.~4.4]{Lohrey:groups}).  We will sidestep this issue by bounding
the length of $w$ without cancelling letters whenever $w$ is the value of
a given circuit $\Delta$.

Finally, let $\len(\Delta)$ be a circuit $\Delta$ that computes $w$, which
we define as the total fan-in of the concatenation gates in $\Delta$.
By a standard circuit counting argument, there are $\exp(\tO(\hell))$
circuits of length at most $\hell$.  It follows that
\[ \hell = \len(\Delta) = \tOmega(n+R) \]
for most targets $g \in B_G(R,1)$ in the statement of \Thm{th:semisimple},
by the same volume argument that tells us that
\[ \ell = \len(w) = \Omega(n+R). \]

\subsection{Matrix Lie groups}
\label{ss:matrix}

Most of the material in this subsection is discussed in the textbook by
Hall \cite{Hall:gtm}.

An (explicit) \emph{real matrix Lie group} is a subgroup $G \subseteq
\GL(d,\R)$ of the group of invertible real $n \times n$ matrices for
some $d$, which is also a closed subset and a smooth submanifold.  Most,
but not quite all, of the strength of \Thm{th:semisimple} is realized
by considering matrix Lie groups, because there are also Lie groups that
cannot be realized with matrices.  (See \Sec{ss:lie} for a full discussion,
including the definition of a semisimple Lie group.)  For now, observe that
\[ \GL(d,\R) \subseteq \GL(d,\C) \into \GL(2d,\R), \]
which tells us that we can equivalently define real matrix Lie groups with
complex matrices.  A \emph{complex matrix Lie group} is a special case of
a real Lie group, in which $G \subseteq \GL(d,\C)$ as a manifold admits
complex analytic coordinates.

Even though our group elements are usually matrices, we will use lowercase
letters $g \in G$ for group elements and $1 \in G$ for the identity element.

\begin{examples} The groups $\U(d)$ and $\SU(d)$ are both real matrix
Lie groups.  The group $\PU(d) \cong \PSU(d)$ (unitary matrices quotiented
by scalars) is also a matrix group, if we represent each element $u
\in \PU(d)$ by its conjugation action on matrices, $x \mapsto uxu^{-1}$.

Likewise, the groups $\SL(d,\C)$ and $\PGL(d,\C) \cong \PSL(d,\C)$ are
both complex matrix Lie groups.
\end{examples}

If $G$ is a matrix Lie group, then its tangent space $L = T_1(G)$ at the
identity is called the \emph{Lie algebra} of $G$.  The Lie algebra $L$
is closed under the algebra commutator
\[ [x,y] \defeq xy-yx, \]
which in context is called the \emph{Lie bracket} of $L$.  Also, the matrix
exponentiation map $\exp:L \to G$ sends $L$ into $G$ and covers an open
neighborhood of the identity $1 \in G$, although it does not always reach
all of $G$.

\begin{examples} The Lie algebras $\sl(d,\C)$, $\u(d)$, and $\su(d)$
of $\SL(d,\C)$, $\U(d)$, and $\SU(d)$ consist of traceless matrices,
anti-Hermitian matrices, and matrices that are both traceless and
anti-Hermitian, respectively.  \end{examples}

For any group $G$, we will use these abbreviations for conjugation and
the group commutator:
\[ g^h \defeq hgh^{-1} \qquad \comm{g,h} \defeq ghg^{-1}h^{-1}. \]
If $G$ is a matrix Lie group, then
\begin{multline}
\comm{\exp(x),\exp(y)} \\
= \exp\bigl([x,y]
    + O(\norm{x}^2\,\norm{y}+\norm{x}\,\norm{y}^2)\bigr).
\label{e:glbracket} \end{multline}
\Equ{e:glbracket} generalizes to the Baker--Campbell--Hausdorff (BCH)
formula:
\begin{multline} \exp(x)\exp(y) = \\
    \exp\biggl({\displaystyle x+y + \frac{[x,y]}2 + \frac{[x,[x,y]]
    + [y,[y,x]]}{12} + \cdots} \biggr)
\label{e:bch} \end{multline}
The exponent in \equ{e:bch} is expressed entirely in terms of the Lie
bracket of $L$.  The power series has a positive radius of convergence in
both $\norm{x}$ and $\norm{y}$, and thus yields a real analytic formula
for the group law of $G$ in a open neighborhood of $1$.

\section{The algorithm for $\SU(2)$}
\label{s:su2}

In this section, we will prove \Thm{th:qudit} in the qubit case $G = \SU(2)$.
The algorithm in the full generality of \Thm{th:semisimple} is similar, but
it is simpler and clearer in this special case.  Throughout this section,
we assume a finite gate set or generating set
\[ 1 \in A = A^{-1} \subseteq \SU(2) \]
that generates a dense subgroup of $\SU(2)$.  The algorithms are also
simpler if we replace the integer parameter $n$ by a rational parameter $t
\ge 1$ in the statements of Theorems~\ref{th:qudit}, \ref{th:semisimple},
and \ref{th:runtime}.

\subsection{Geometry}
\label{ss:su2geom}

We review some geometric and algebraic facts about the Lie group $\SU(2)$.

Using the standard Pauli matrices,
\[ X = \begin{bmatrix} 0 & 1 \\ 1 & 0 \end{bmatrix} \quad
Y = \begin{bmatrix} 0 & -i \\ i & 0 \end{bmatrix} \quad
Z = \begin{bmatrix} 1 & 0 \\ 0 & -1 \end{bmatrix},
\]
we can represent the quaternions $\H$ with $2 \times 2$ complex matrices
and interpret $\SU(2)$ as the unit 3-sphere in $\H$:
\begin{align*}
\H &\cong \{aI+ibX+icY+idZ \st a,b,c,d \in \R\} \\
\SU(2) &= \{aI+ibX+icY+idZ \st a^2 + b^2 + c^2 + d^2 = 1\}.
\end{align*}
We can then use spherical geometry to define an angular distance between
elements two elements $g,h \in \SU(2)$:
\[ d(g,h) = d_{\SU(2)}{(g,h)} \defeq \arccos \frac{\tr(gh^{-1})}{2}. \]
This distance is invariant under both left and right multiplication in
$\SU(2)$, or \emph{bi-invariant}:
\[ d(gh,h) = d(hg,h) = d(g,1) \qquad d(g^h,1) = d(g,1). \]
In other words, if $g \sim h$ are conjugate elements in a group $G$ with a
bi-invariant metric, then $d(g,1) = d(h,1)$.  In $\SU(2)$ with its spherical
metric, more is true: $d(g,1) = d(h,1)$ if and only if $g \sim h$.

Since the metric is bi-invariant, metric balls
\[ B(r,g) = B_{\SU(2)}(r,g) \]
translate to each other using either left or right multiplication:
\[ B(r,g) = gB(r,1) = B(r,1)g. \]
Using this, we obtain a set arithmetic relation that will be useful for
numerical precision estimates:
\begin{eq}{e:bbmul} B(\eps,g)B(\delta,h) \subseteq B(\eps+\delta,gh).
\end{eq}

Finally, we define the angle $\angle g,h$ between two elements $g,h \in
\SU(2)$ to be the angle at the corner $1 \in \SU(2)$ of the spherical
triangle with vertices at $1$, $g$, and $h$.  We will use this elementary
proposition, which we leave as an exercise in quaternion arithmetic:

\begin{proposition} If $g,h \in \SU(2)$ with
\[ g \sim h \qquad d(g,1) = d(h,1) = \psi \]
and $\angle\, g,h = \theta$, then
\begin{align*}
d(\comm{g,h},1) &= 2\arcsin \bigl( \sin(\psi)^2\sin(\theta) \bigr) \\
    &= (2\psi^2+O(\psi^4))\sin(\theta)
\end{align*}
as $\psi \to 0$, uniformly in $\theta$.
\label{p:cross} \end{proposition}

\subsection{Roughly exponential steps}
\label{ss:steps}

The first part of the algorithm constructs an $A$-word $s_t$ for any given
rational number $t > 0$ such that
\begin{eq}{e:step} 2^{1-t} > d(s_t,1) > 2^{-t}
    \qquad \len(s_t) = O(t^\alpha) \end{eq}
for some favorable value of $\alpha > 1$.  We call such a word (and its
value) a \emph{roughly exponential step}.  Although $s_t$ is required to
have a favorable word length and a somewhat favorable distance, it does
not need to point in any particular direction, and it is unrelated to any
target element $g \in \SU(2)$.  There is also nothing special about powers
of 2 in \eqref{e:step}; it would suffice to require any constraint
of the form $d(s_t,1) = \Theta(b^{-t})$ with any $b > 1$.

We start with the case $\alpha = 2$, using standard group commutators.

\begin{algorithm}{SB2} The input is a finite gate set
\[ A = A^{-1} \subseteq \SU(2) \]
that densely generates $\SU(2)$ and a rational target precision $t > 0$.
The algorithm also depends on two integer constants $a,b > 0$ (depending only
on $A$) that must be chosen favorably.

The output is an $A$-word $s_t$ that satisfies the distance condition
in \eqref{e:step}.
\begin{enumerate}
\item If $t \le 2a$, search for an $A$-word $s_t$ with $\len(s_t) \le b$ that
satisfies the distance condition of \eqref{e:step}, and return this word.
\item If $t > 2a$, recursively (using this algorithm) calculate an
$A$-word $s_{t'}$ with
\[ t' := \frac{t-a}2. \]
Then search for an $A$-word $u$ such that $\len(u) \le b$, and such that
\begin{eq}{e:sb2} s_t := \comm{s_{t'},s_{t'}^u} \end{eq}
satisfies \equ{e:step}.  Return $s_t$.
\end{enumerate}
\label{a:sb2} \end{algorithm}

Step 2 in \Alg{a:sb2} can succeed if the constant $a$ is large enough,
and if $b$ is large enough for a given value of $a$.  Setting
\begin{eq}{e:ghpsi} g = s_{t'} \qquad h = s_{t'}^u
    \qquad \psi = d(s_{t'},1), \end{eq}
the exact distance formula in \Prop{p:cross} implies that $a=1$ suffices.
Also, working only from the asymptotic expression, we can choose $a$
large enough to guarantee that
\[ \psi^2 \sin(\theta) d(\comm{s_{t'},s_{t'}^u}) < 3\psi^2 \sin(\theta). \]
If $a$ is suitably chosen, then in any given instance of step 2 we
can choose $u$ to control the angle $\theta = \angle\,s_{t'},s_{t'}^u$
in \Prop{p:cross}.  Under the hypotheses, $\theta$ only needs a bounded
amount of precision to put $d(s_t,1)$ in the range required by \equ{e:step}.
Thus, $u_{\SU}(2)$ only needs a bounded amount of precision, so we can bound
$\len(u_A)$ by some constant $b$.

If $a$ and $b$ are both suitably chosen, \equ{e:sb2} yields the recurrence
relation
\[ \len(s_t) \le 4(\len(s_{(t-a)/2}) + b). \]
This recurrence implies the word length bound in \eqref{e:step}.
The structure of \Alg{a:sb2}, in particular \equ{e:sb2}, also yields a
compressed word $\Delta_t$ form of $s_t$ that satisfies the recurrence
\[ \len(\Delta_t) \le \len(\Delta_{(t-a)/2}) + b. \]
This second recurrence implies that
\begin{eq}{e:ctlen} \len(\Delta_t) = O(\log(t)) = \tO(1). \end{eq}

We also estimate the time complexity of \Alg{a:sb2}.  As the estimate
for $\len(\Delta_t)$ indicates, the algorithm uses $\tO(1)$ arithmetic
and logical steps.  If we only wanted the compressed word form of $s_t$,
namely $\Delta_t$, then the algorithm could be implemented with the steps
kept in Lie algebra form $\log(s_t)$ with only logarithmic floating-point
precision, and thus run in time $\tO(1)$ in total.  However, the algorithm
is only useful as a subroutine for \Alg{a:zb} later if $(s_t)_{\SU(2)}$
is calculated to $t+\Theta(t)$ bits of precision.

We claim that the gates in $A$ and all other group elements in the algorithm
only need $t+\Theta(t)$ bits of precision for this outcome.  To see this,
it follows from \equ{e:bbmul} that the bits of precision of the value of
an arithmetic circuit in $\SU(2)$ diminish in proportion to its depth.
In our case, the circuit $\Delta_t$ has depth $O(\log(t))$ as well as
size $O(\log(t))$.  The precision claim tells us that the algorithm runs
in $\tO(t)$ time if elements of $\SU(2)$ are modeled with generalized
interval arithmetic, and we use fast multiplication to calculate products.

If the gate set $A$ is algebraic with entries in a number field $K$, then we
instead express each step $s_t$ in exact algebraic form.  By \equ{e:bitlen},
it follows that $\norm{s_t}_{\bit} = O(t^2)$, and that the time complexity
of \Alg{a:sb2} is $\tO(t^2)$.  This is worse than the time complexity
using interval arithmetic, but it will lead to a better result in \Alg{a:zb}.

\Alg{a:sb} below is a generalized version of \Alg{a:sb2} that uses higher
commutators to achieve a better exponent $\alpha$.  We will need some
extra definitions and background results for the generalization.

Let $\omega \in F_k$ be a word in the free group on $k$ letters,
interpreted as a group word that can be evaluated in any group
$G$. If $G$ is a connected matrix Lie group with Lie algebra $L$,
then we define the \emph{cancellation degree}
to be the largest integer $n$ (if there is a largest one) such that
\begin{eq}{e:candeg} \omega(\exp(\eps x_1),\ldots,\exp(\eps x_k))
    = \exp(O(\eps^n)) \end{eq}
for all $x_1,\ldots,x_k \in L$.  If $\omega(\vg) = 1$ identically on
$G^k$, then by default we let $\can_G(\omega) = \infty$.  For example,
the group commutator $\omega(g,h)= \comm{g,h}$ has cancellation degree 2
when $G$ is nonabelian.

We will need a variation of this concept, the \emph{conjugate
cancellation degree} $\ccan_G(\omega)$, which is given by the same formula
\eqref{e:candeg}, but only when the group arguments (equivalently, the
Lie algebra arguments) are conjugate.  If $\can_G(\omega) \ge 3$ for
all $G$, then $\omega$ can be thought of as a \emph{higher commutator}.
(See also \Sec{ss:cannil} and in particular \Thm{th:cannil}.)

A word $\omega \in F_k$ of length $\len(\omega)$ then has a \emph{conjugate
cancellation exponent}
\[ \alpha_G(\omega) \defeq \frac{\log(\len(\omega))}
    {\log(\ccan_G{(\omega))}}. \]
As explained in \Sec{ss:cannil}, Elkasapy and Thom
\cite{ET:lower,Elkasapy:lower} discovered that there are higher commutators
with a lower exponent (and are thus more efficient) than the usual group
commutator $\comm{g,h}$.   Their first observation is that
\begin{align}
\omega(f,g,h) &= \bigcomm{\comm{f,g},\comm{g,h}} \nonumber \\
    &= fgf^{-1}g^{-1} \cdot ghg^{-1}h^{-1} \cdot
        gfg^{-1}f^{-1} \cdot hgh^{-1}g^{-1} \nonumber \\
    &= fgf^{-1}hg^{-1}h^{-1}gfg^{-1}f^{-1}hgh^{-1}g^{-1}
\label{e:len14} \end{align}
has length 14 rather than 16, which yields
\[ \alpha_{\SU(2)}(\omega) = \log_4(14) = 1.90367\ldots. \]
As explained in \Sec{s:elkasapy}, Elkasapy later found a sequence of words
$\omega_n(g,h)$ such that
\[ \lim_{n \to \infty} \frac{\log(\len(\omega_n))}
    {\log(\ccan_{\SU(2)}(\omega_n))} = \log_\phi(2). \]

\begin{lemma} Let $\omega(g,h)$ be a nontrivial group word with
\[ \ccan_{\SU(2)}(\omega) = n \ge 2. \]
Let $g,h \in \SU(2)$ with
\[ d(g,1) = d(h,1) = \psi \qquad \angle g,h = \theta.\]
Then
\[ \cos(d(\omega(g,h),1)) = p(\psi,\theta) \]
for some trigonometric polynomial $p(\psi,\theta)$ such
that $p(\psi,0) = 0$.  In addition,
\[ d(\omega(g,h),1) = \psi^n \sqrt{q(\theta)} + O(\psi^{n+2}) \]
as $\psi \to 0$, where $q(\theta)$ is a trigonometric polynomial such that
$q(0) = 0$.
\label{l:trig} \end{lemma}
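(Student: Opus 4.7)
The plan is to choose algebraic coordinates. Write $g = \cos\psi\cdot I + i\sin\psi\,(\hat a\cdot\vec\sigma)$ and $h = \cos\psi\cdot I + i\sin\psi\,(\hat b\cdot\vec\sigma)$ with $\vec\sigma = (X,Y,Z)$ and unit vectors $\hat a,\hat b$ satisfying $\hat a\cdot\hat b = \cos\theta$; by rotational invariance we may take $\hat a = (1,0,0)$ and $\hat b = (\cos\theta,\sin\theta,0)$. Every entry of $g$, $h$, $g^{-1}$, $h^{-1}$ is then a polynomial in $\cos\psi,\sin\psi,\cos\theta,\sin\theta$, so the same is true of $\omega(g,h)$ and of its trace. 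Combining this with $\cos d(u,1) = \tr(u)/2$ in $\SU(2)$ produces the trigonometric polynomial $p(\psi,\theta) = \tr\omega(g,h)/2$. For the boundary value at $\theta = 0$: the hypothesis $\ccan_{\SU(2)}(\omega)\ge 2$ forces both generator exponent sums of $\omega$ to vanish (otherwise the leading $O(\epsilon)$ term of $\omega(\exp(\epsilon a),\exp(\epsilon b))$ is $\epsilon(e_1 a+e_2 b)$, which is nonzero for generic unit $a,b$), so $\omega$ lies in the commutator subgroup of $F_2$ and evaluates to $1$ on any commuting pair. At $\theta = 0$ the axes coincide and $g,h$ commute, so $\omega(g,h) = 1$ and $\cos d = 1$, i.e.\ $p(\psi,0) = 1$. (The stated ``$p(\psi,0) = 0$'' appears to be a typo, since it would contradict the later requirement $q(0) = 0$.)

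For the asymptotic statement, note first that the definition of $\ccan$ gives $\omega(g,h) = \exp(O(\psi^n))$, hence $d(\omega(g,h),1) = O(\psi^n)$ and $1 - p(\psi,\theta) = O(\psi^{2n})$ uniformly in $\theta$ as $\psi\to 0$. The crucial step I expect to be the delicate one is a parity symmetry. The involution of $\R^3\cong\su(2)$ sending $(\hat a,\hat b)\mapsto(-\hat a,-\hat b)$ is the $\pi$-rotation about $\hat a\times\hat b$, and is realized by conjugation by some $c\in\SU(2)$ (the degenerate case $\hat a\parallel\hat b$ is trivial, since $\omega(g,h)=1$ there). That conjugation sends $g\mapsto g^{-1}$ and $h\mapsto h^{-1}$, so $\omega(g^{-1},h^{-1}) = c\,\omega(g,h)\,c^{-1}$ has the same trace as $\omega(g,h)$. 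In the parametrization, however, $g\mapsto g^{-1}$ is exactly the substitution $\psi\mapsto -\psi$, so $p(\psi,\theta) = p(-\psi,\theta)$: that is, $p$ is even in $\psi$. Combined with $1 - p = O(\psi^{2n})$, the $\psi$-expansion near zero must therefore take the form
\[ p(\psi,\theta) = 1 + \psi^{2n}\,p_{2n}(\theta) + \psi^{2n+2}\,p_{2n+2}(\theta) + \cdots, \]
with trigonometric-polynomial coefficients in $\theta$.

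Inverting $\cos d = 1 - d^2/2 + O(d^4)$ then gives $d^2 = -2\,p_{2n}(\theta)\,\psi^{2n} + O(\psi^{2n+2})$, and taking a square root yields
\[ d(\omega(g,h),1) = \psi^n\sqrt{q(\theta)} + O(\psi^{n+2}),\qquad q(\theta) \defeq -2\,p_{2n}(\theta), \]
with $q\ge 0$ (since $d^2\ge 0$) and $q(0) = 0$ (since $p(\psi,0) = 1$ forces every positive-$\psi$-degree coefficient to vanish at $\theta = 0$). Without the parity step one only obtains the weaker error $O(\psi^{n+1})$; that step uses essentially that $\SU(2)$ has rank one, so $\hat a$ and $-\hat a$ share an adjoint orbit and the conjugating $c$ exists. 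Everything else is bookkeeping with the explicit trace--distance formula and the BCH expansion.
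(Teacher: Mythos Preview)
Your proof is correct and follows essentially the same route as the paper's: fix an explicit conjugate pair $(g,h)$ parametrized by $(\psi,\theta)$, read off that $p=\tfrac12\tr\omega(g,h)$ is a trigonometric polynomial, use the $\psi\mapsto-\psi$ symmetry to get evenness, Taylor-expand and invert $\arccos$, and deduce the boundary values from $\omega$ lying in the commutator subgroup. The paper phrases the parity step more tersely (``$g^{-1}$ and $h^{-1}$ have the same properties as $g$ and $h$''), whereas you exhibit the conjugating element $c$ explicitly; these are the same observation. You are also right that ``$p(\psi,0)=0$'' is a typo for $p(\psi,0)=1$; the paper's own proof contains the same slip (it writes ``$\omega(g,g)=0$'' where $1$ is meant).
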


\begin{proof} We can assume that
\[ g = exp(i\psi Z) \qquad h = \exp(i\psi Z)^{\exp(i\theta X)}, \]
since the hypotheses imply this form up to conjugation.  Then every real
component of the matrix $\omega(g,h)$ is a trigonometric polynomial in
$\psi$ and $\theta$.  The same is true of
\[ p(\psi,\theta) = \frac{\tr(\omega(g,h))}2, \]
except in that it must specifically be a real polynomial, since the trace
of any element of $\SU(2)$ is real.

For the second claim, first note that
\[ p(-\psi,\theta) = p(\psi,\theta), \]
since $g^{-1}$ and $h^{-1}$ have the same properties as $g$ and $h$.
Thus the Taylor series expansion of $p(\psi,\theta)$ only uses even
powers of $\psi$; in particular,
\[ p(\psi,\theta) = 1 - \frac{\psi^{2n} q(\theta)}2
    + O(\psi^{2n+2}) \]
for some integer $n > 0$.  Applying $\arccos$, we get
\[ d(\omega(g,h),1) = \psi^n \sqrt{q(\theta)} + O(\psi^{n+2}). \]
Since $g$ and $h$ are conjugate if and only if $d(g,1) = d(h,1)$, the
exponent $n$ amounts to the definition of $\ccan_{\SU(2)}(\omega)$.
Finally, $p(\psi,0) = 0$ and $q(0) = 0$ because $\omega(g,h)$ has
to have trivial abelianization in order to have any cancellation, which
implies that $\omega(g,g) = 0$.
\end{proof}

Here is our more general algorithm to generate roughly exponential steps
for other values of $\alpha > 1$.

\begin{algorithm}{SB} The input is a finite set
\[ A = A^{-1} \subseteq \SU(2) \]
that densely generates $\SU(2)$, a group word $\omega(g,h)$ with
\[  n = \ccan_{\SU(2)}(\omega) \ge 2, \]
and a rational target precision $t > 0$.  The algorithm also depends
on two integer constants $a,b > 0$ (depending only on $A$) that must be
chosen favorably.

The output is an $A$-word $s_t$ that satisfies the distance condition
in \eqref{e:step}.
\begin{enumerate}
\item If $t \le 2a$, search for an $A$-word $s_t$ with $\len(s_t) \le b$ that
satisfies the distance condition of \eqref{e:step}, and return this word.
\item If $t > 2a$, recursively (using this algorithm) calculate an
$A$-word $s_{t'}$ with
\[ t' := \frac{t-a}n. \]
Then search for an $A$-word $u$ such that $\len(u) \le b$, and such that
\[ s_t := \omega(s_{t'},s_{t'}^u) \]
satisfies the distance condition in \equ{e:step}.
\end{enumerate}
\label{a:sb} \end{algorithm}

Like \Alg{a:sb2}, step 2 in \Alg{a:sb} can succeed if the constant $a$ is
large enough, and if $b$ is large enough for a given value of $a$.  We can
relate \Alg{a:sb} to \Lem{l:trig} with the substitutions \eqref{e:ghpsi}.
First, when $a$ is large enough $d(\omega(g,h),1)$ is very close to
$\psi^n\sqrt{q(\theta)}$.  Second, if we let $m$ be the maximum value of
$\sqrt{q(\theta)}$, then we also want
\[ m2^{-ct'} > 2^{-t} = 2^{1-ct'-a}. \]
It suffices to also take $a > \log_2(m)$.  If $a$ is suitably chosen, then
in any given instance of step 2 we can choose $u$ to control the angle
$\theta$ in \Lem{l:trig}.  Under the hypotheses, $\theta$ and therefore
$u_{\SU(2)}$ only need bounded precision to satisfy \equ{e:step}, so we
can bound $\len(u_A)$ by some constant $b$.

For the word length analysis, let
\begin{eq}{e:elal} \quad \ell = \len(\omega)
    \quad \alpha = \log_n(\ell), \end{eq}
and assume that $\alpha > 1$.  We can assume for convenience and without
loss of generality that at most half of the letters in $\omega(g,h)$
are $h$ or $h^{-1}$.  If so, and if $a$ and $b$ are suitably chosen,
then the length recurrence for $s_t$ in \Alg{a:sb} is
\begin{eq}{e:lenrec} \len(s_t) \le \ell(\len(s_{(t-a)/n}) + b). \end{eq}
This recurrence implies the word length bound in \eqref{e:step} with the
given value of $\alpha$.  \Alg{a:sb} also yields a compressed word $\Delta_t$
for $s_t$ that satisfies the recurrence
\begin{eq}{e:clenrec} \len(\Delta_t) \le \len(\Delta_{(t-a)/n})+b, \end{eq}
which again implies the length bound \eqref{e:ctlen}.

The time complexity and precision demands of \Alg{a:sb} are also
analogous to that of \Alg{a:sb2}.  If the algorithm is implemented with
generalized interval arithmetic, then all of the group elements in $\SU(2)$
that it uses need $n+\Theta(n)$ bits of precision, and the algorithm runs
in time $\tO(n)$.  If instead the algorithm is implemented with exact
arithmetic over a number field $K$, then $\norm{s_t}_{\bit} = O(t^\alpha)$
by \equ{e:bitlen}, and the algorithm runs in time $\tO(t^\alpha)$.

\subsection{Zigzag golf}
\label{ss:zgolf}

In this subsection, we establish \Thm{th:qudit} in the $d=2$ case of a qubit.
Given  $g \in \SU(2)$, \Alg{a:zb} produces an approximation $w_t \sim g$
such that
\begin{eq}{e:wtest} d(w_t,g) < 2^{-t} \qquad \len(w_t) = O(t^\alpha) \end{eq}
for any $\alpha > \log_\phi(2)$.  The algorithm employs the roughly
exponential steps $\{s_t\}$ provided by \Alg{a:sb} in a strategy that
we call \emph{zigzag golf}.  Even though the distance $d(s_t,1)$ is not
all that precise, it is a fixed function of $t$.  We can also control
the direction of $s_t$ by conjugating it with another $A$-word $u$ to
make $s_t^u$.  Given $g \in \SU(2)$, if we construct each approximation
$w_t$ as a product of steps of the form $s_{t'}^u$, then we can interpret
each $s_{t'}^u$ as a golf stroke that moves a golf ball in $\SU(2)$ by a
rigid distance and in a precise direction.  A product $s_{t'}^us_{t'}^v$
of two golf strokes is then a zigzag that can reach a precise distance
and direction.  (Note also that $u$ and $v$ depend implicitly on $t'$.)
Thus $w_t = s_{t'}^us_{t'}^vw_{t'}$ can be a much more precise approximation
to $g$ than the previous $w_{t'}$.  See \Fig{f:zgolf}.

\begin{figure}
\begin{ctp}[thick]
\draw (0,0) circle (4);
\draw[dashed] (4,0) arc (0:180:4 and .95);
\draw (-4,0) arc (180:360:4 and .95);
\draw[darkgreen,dashed,rotate=105.4]
    (-1.40,2.46) arc (142.1:198.7:1.772 and 4);
\draw[darkgreen,dashed,rotate=127.8]
    (-2.040,-0.544) arc (-172.2:-146.4:2.060 and 4);
\draw[draw=darkred,rotate=156.8]
    (1.052,2.628) arc (139.0:194.5:-1.392 and 4)
    node[left,pos=.65,inner sep=2pt] {$s_1^{u_1}$};
\draw[draw=darkred,rotate=43.1]
    (0.376,1.636) arc (155.8:211.4:-0.412 and 4)
    node[below left,pos=.5,inner sep=1pt] {$s_1^{v_1}$};
\draw[draw=darkred,rotate=-169.8]
    (-1.428,1.556) arc (157.1:179.1:1.548 and 4)
    node[left,pos=.7,inner sep=2.3pt] {\small $s_2^{u_2}$};
\draw[draw=darkred,rotate=84.2]
    (0.368,-1.508) arc (-157.9:-135.9:-0.396 and 4)
    node[above,pos=.5,inner sep=3pt] {\small $s_2^{v_2}$};
\draw[darkblue,dashed,rotate=114.5]
    (-0.992,2.648) arc (138.5:221.0:1.32 and 4);
\draw[->,gray,rotate=170.3] (1.632,2.308) arc (144.7:162.2:-2 and 4);
\draw[->,gray,rotate=142.1] (0.348,2.808) arc (135.4:152.9:-0.492 and 4);
\draw[->,gray,rotate=127.9] (-0.348,2.808) arc (135.4:152.9:0.492 and 4);
\draw[->,gray,rotate=72.2] (-2.516,1.296) arc (161.1:178.6:2.656 and 4);
\fill (-2,-2) circle (0.07) node[below,inner sep=3.5pt] {\small 1};
\draw (-2.56,-1.68) node {\includegraphics[height=1cm]{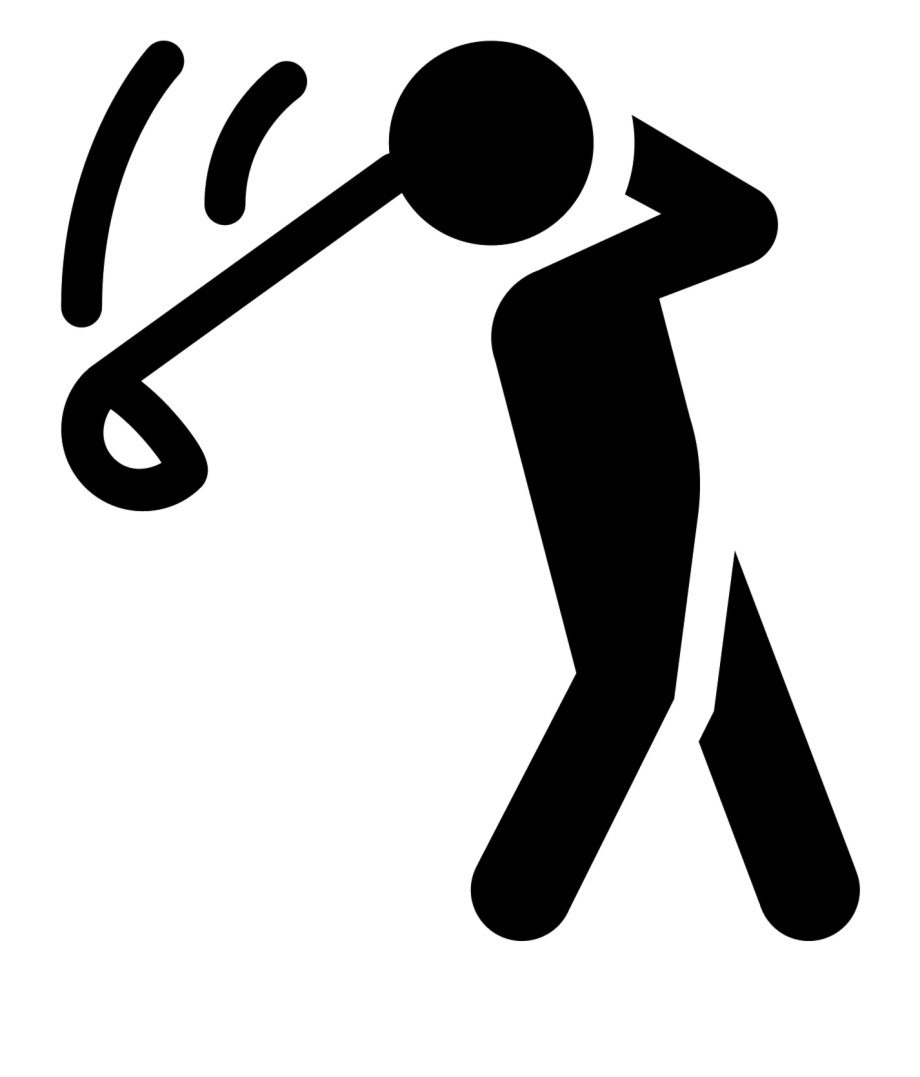}};
\fill (1.68,-1.28) circle (0.07) node[below right,inner sep=2pt] {$w_1$};
\fill (2.8,0) circle (0.07) node[right,inner sep=2.5pt] {$w_2$};
\fill (2.8,0.18) circle (0.07) node[above right,inner sep=2pt] {$g$};
\fill[darkred] (2.8,0.95) -- (3.05,0.85) -- (2.8,0.75);
\draw (2.8,0.18) -- (2.8,0.975);
\fill (-0.844,1.452) circle (0.07);
\fill (1.536,0.212) circle (0.07);
\draw (1.4,2.4) node {\large $\SU(2)$};
\end{ctp}
\caption{Zigzag golf with doubly exponential convergence. The steps $s_k$
    are each a fixed length, but they are aimed accurately using an earlier
    stage of the entire algorithm.  (The numbering here is simplified and
    differs from the main text.)}
\label{f:zgolf}
\end{figure}

To make the conjugators $u$ and $v$ precise, we construct them using at
an earlier stage the entire approximation algorithm to achieve doubly
exponential convergence.  Although this may seem expensive in word length,
we can adjust the parameters to achieve the same length exponent $\alpha$
for the total approximation $w_t$ as for an individual step $s_t$.

\begin{algorithm}{ZB} The input is a finite set
\[ A = A^{-1} \subseteq \SU(2) \]
that densely generates $\SU(2)$, a group word $\omega(g,h)$ with
$\ccan_{\SU(2)}(\omega) \ge 2$, a target element $g \in \SU(2)$, and a
rational target precision $t > 0$.  The algorithm also depends on a rational
constant $0 < \beta < 1$ (depending only on $A$ and $\omega$) and two integer
constants $a,b > 0$ (depending only on $A$) that must be chosen favorably.

The output is a word $w_t$ that satisfies \eqref{e:wtest}.
\begin{enumerate}
\item If $t(1-\beta) \le 2a$, search for an $A$-word $w_t$ with $\len(w_t)
\le b$ and $d(w_t,g) < 2^{-t}$, and return this word. If $t(1-\beta)
> 2a$, then proceed to the remaining steps.
\item Let $t' := (1-\beta)t$, and recursively calculate $w_{t'}$ using
this algorithm with input $(g,t')$.  Also calculate $s = s_{t'}$ using
\Alg{a:sb} and the word $\omega$.
\item Solve the equation
\[ s^{g_u} s^{g_v} = w_{t'}^{-1} g \]
for $g_u,g_v \in \SU(2)$.
\item Let $t'' := \beta t + a$, and calculate $u$ and $v$ recursively as
\[ u := w_{u,t''} \qquad v := w_{v,t''} \]
using $(g_u,t'')$ and $(g_v,t'')$ as inputs to this algorithm.  Return
\[ w_t := w_{t'} s^u s^v. \]
\end{enumerate}
\label{a:zb} \end{algorithm}

The rationale of step 4 in \Alg{a:zb} is that $u_{\SU(2)}$ and $v_{\SU(2)}$
only need $(t-t')+O(1) = \beta t + O(1)$ bits of precision, because
the conjugation map $u \mapsto s^u$ is Lipschitz with Lipschitz constant
$O(d(s,1)) = O(2^{-t'})$.  If $a$ is large enough, then $u$ and $v$ are close
enough to $g_u$ and $g_v$ to guarantee that $w_t$ satisfies the distance
bound in \eqref{e:wtest}.   Step 2 also guarantees that $d(w_{t'},g) <
d(s_{t'},1)$, which implies that zigzag from $w_{t'}$ reaches $g$ with
an acute triangle.  Thus the equation in step 3 is nonsingular, and the
approximation problem in step 4 is well-conditioned.  Note finally that
the conditional in step 1 ensures that $t'' \le t-a$ in step 4.

The length estimate in \eqref{e:wtest} is more delicate.  Assuming that
$\alpha > 1$, we claim that \eqref{e:wtest} holds when $\beta > 0$ is small
enough.  Informally, if we have faith that $\len(w_{t'}) = O((t')^\alpha)$
when $t' \le \beta t$, then the rate of the doubly exponential convergence
is proportional to $\beta$, while the word length of the conjugators $u$
and $v$ has a factor of $\beta^\alpha$. This indicates that $u$ and $v$
become ever more economical as $\beta \to 0$.  If $\beta$ is small enough,
then each $u$ and $v$ should contribute less to the length of $w_t$ than
the step $s$ does.  Although this informal argument smacks of circular
reasoning, it is ultimately valid.  To justify it, we give a more formal
argument with a specific value of $\beta$.

\begin{lemma} Let $\alpha > 1$ be the word length
exponent in \Alg{a:sb}, and let
\[ \beta = 4^{1/(1-\alpha)} \qquad D = \frac{3}{(1-\beta)^{1-\alpha}-1}
    \qquad t_0 \ge 1. \]
Given these parameters, choose $N$ such that \Alg{a:sb} produces a word $s_t$
with $\len(s_t) \le Nt^\alpha$ for all $t \ge 1$, and such that \Alg{a:zb}
produces a word $w_t$ with $\len(w_t) \le DNt^\alpha$ when $t_0 \ge t
\ge 1$. If $t_0$ is sufficiently large, then it follows that $\len(w_t)
\le DNt^\alpha$ for all $t \ge 1$.
\label{l:zb} \end{lemma}

Note the following in the statement of \Lem{l:zb}:
\begin{itemize}
\item Since $D > 0$, the constant $N$ can be chosen as specified.
\item Since $0 < \beta < 1$, the constant $\beta$ is valid in the algorithm.
\item The lemma states that $\len(w_t) = O(t^\alpha)$.
\end{itemize}

\begin{proof} For all $t \ge 2a/(1-\beta)$, \Alg{a:zb} produces:
\begin{gather*}
w_t = w_{t'} s_{t'}^u s_{t'}^v \qquad u = w_{u,t''}
    \qquad v = w_{v,t''} \\
t' = (1-\beta)t \qquad t'' = \beta t + a.
\end{gather*}
We now proceed by induction on $t$ when
\[ t > t_0 > \frac{2a}{1-\beta}. \]
Even though $t$ is rational and not an integer, we can still argue by
induction from the previous cases $t'$ and $t''$, given that
\[ t-t' = \beta t \ge 1 \qquad t-t'' \ge a \]
are both bounded below by positive constants.  We obtain:
\begin{align*}
\len(w_t) &= \len(w_{t'}) + 2\len(s_{t'}) + 2\len(u) + 2\len(v) \\
    &\le DN((1-\beta)t)^\alpha + 2N((1-\beta)t)^\alpha
        + 4DN(\beta t + a)^\alpha.
\end{align*}
We want to show that $\len(w_t) \le DNt^\alpha$.
Dividing through by $Nt^\alpha$, we want to show that
\[ D \ge D(1-\beta)^\alpha + 2(1-\beta)^\alpha
    + 4D\beta^\alpha + O(t^{-1}) \]
when $t$ is large enough.  If we substitute $\beta = 4\beta^\alpha$ and
move all terms with $D$ to the left side, we want to show that
\[ D(1-\beta - (1-\beta)^\alpha) \ge 2(1-\beta)^\alpha + O(t^{-1}). \]
Substituting the value of $D$, we want to show that
\[ 3(1-\beta)^\alpha \ge 2(1-\beta)^\alpha + O(t^{-1}), \]
which certainly holds when $t > t_0 \gg 1$.
\end{proof}

\begin{remark} In the simpler case $\alpha = 2$ provided by \Alg{a:sb2},
\Lem{l:zb} simplifies to $\beta = 1/4$ and $D = 6$.
\end{remark}

Just as \Alg{a:sb} produces a compressed word $C_{s,t}$ for the step
$s_t$, \Alg{a:zb} produces a compressed word $C_{w,t}$ for $w_t$.  We also
bound the compressed length and bound the time complexity of \Alg{a:zb}.
For this purpose we prove an extension of \Lem{l:zb}:

\begin{lemma} In the notation of \Lem{l:zb}, let $0 < \beta' < \beta$.
If \Alg{a:zb} uses the parameter $\beta'$ instead of $\beta$, then we
still obtain $\len(w_t) = O(t^\alpha)$.
\label{l:zbp} \end{lemma}

\begin{proof} Choose $\alpha'$ so
\[ \beta' = 4^{1/(1-\alpha')} \qquad 1 < \alpha' < \alpha, \]
and define modified lengths
\[ \len'(w_t) \defeq t^{\alpha'-\alpha} \len(w_t) \qquad
    \len'(s_t) \defeq t^{\alpha'-\alpha} \len(s_t). \]
Then the derivation in \Lem{l:zb} also holds with $\beta'$
and the modified lengths, to conclude that
\[ \len'(w_t) = O(t^{\alpha'}) \quad \implies
    \quad \len(w_t) = O(t^{\alpha}). \qedhere \]
\end{proof}

The compressed word $\Delta_{w,t}$ satisfies the length recurrence
\[ \len(\Delta_{w,t}) = \len(\Delta_{w,t'}) + \len(\Delta_{s,t})
    + \len(\Delta_{u,t''}) + \len(\Delta_{v,t''}), \]
which has smaller coefficients than the recurrence for $\len(w_t)$
in the proof of \Lem{l:zb}. Recall also that
\[ \len(\Delta_{s,t}) = \tO(1) = O(t^{\alpha'})
    \qquad \forall \alpha' > 1. \]
The two relations together let us apply \Lem{l:zb}
to the compressed length $\len(\Delta_{w,t})$ instead using
an arbitrarily small value $\beta' > 0$, to conclude that
\[ \len(\Delta_{w,t}) = O(t^{\alpha'}) = O(t^{1+\delta}) \]
for any $\alpha' > 1$, equivalently any $\delta > 0$.  This completes the
proof of \Thm{th:qudit} for $\SU(2)$.

To complete the proof of \Thm{th:runtime} for $G = \SU(2)$, if we execute
\Alg{a:zb} with generalized interval arithmetic, then its time complexity
has a similar analysis to that of \Alg{a:sb}.  Like $\Delta_{s,t}$, the
circuit $\Delta_{w,t}$ also has logarithmic depth, so the gates in $A$
need $t+O(\log(t))$ bits of precision.  The time complexity of \Alg{a:zb}
is also dominated by high precision arithmetic and is bounded by
\[ \len(\Delta_{w,t})\tO(t) = \tO(t^{2+\delta}) \]
for any $\delta > 0$.

Finally, if the gate set $A$ is algebraic and we calculate the exact values
of words over a real number field $K$, then \equ{e:bitlen} (together with
almost-linear-time arithmetic) implies that the time complexity of \Alg{a:zb}
is $\tO(t^\alpha)$, by essentially the same estimates as in \Lem{l:zb}.

\section{Background on Lie groups}
\label{s:backlie}

The reader who is interested in \Thm{th:semisimple} when $G = \SU(d)$,
or more generally when $G$ is compact and semisimple, can mostly skip
this section.

\subsection{Lie groups}
\label{ss:lie}

Much of the material in this subsection can be found in textbooks
such as the one by Varadarajan \cite{Varadarajan:gtm} and Helgason
\cite{Helgason:symmetric}.  In particular, Helgason not only treats the
classification of compact and complex simple Lie groups (which Hall
and Varadarajan also do), but also discusses Cartan decompositions,
symmetric spaces, and Hermitian type.  Section X.6 in Helgason describes
the simple real Lie algebras and a Lie group for each one.  See also Tits
\cite{Tits:tabellen}.

By definition, a \emph{real Lie group} is a set $G$ that is both a group
and a smooth manifold with real coordinates, such that the group and
inverse laws of $G$ are both smooth maps.  Every Lie group $G$ admits
compatible real analytic coordinates, and a \emph{complex Lie group} is
likewise a complex analytic manifold with a complex analytic group law.
(Note that every complex Lie group can also be viewed as a real Lie group.)
In this article, we will usually work in the generality of semisimple Lie
groups (defined below) which are connected and either real or complex.
We will usually drop the adjectives ``connected'' and ``real''.

For any commutative ring $A$, let $\M(d,A)$ be the algebra of $d
\times d$ matrices with coefficients in $A$.  If $G$ is an abstract Lie
group, then it is an (abstract) \emph{matrix Lie group} when it admits an
embedding as a closed subgroup $G \into \M(d,\R)$ for some $d$; equivalently,
$G \into \M(d,\C)$ for some $d$.  (Such an embedding is never unique,
and the abstraction here is that we assume that an embedding exists
without choosing one.  Note also that $\GL(d,\R)$ is not closed in
$\M(d,\R)$, but it has a closed embedding in $\M(d+1,\R)$.)  In addition,
if $G$ has an embedding in some $\M(d,\R)$ in which it is defined by real
polynomial equations, then it is a \emph{real algebraic group}. Likewise,
$G$ is a \emph{complex algebraic group} if some embedding of it $\M(d,\C)$
is defined by complex polynomial equations.  If $G$ is real algebraic,
then it has a \emph{complexification} $G_\C$ given by taking the equations
that define $G$ in $\M(d,\R)$, and letting $G_\C$ be the complex solutions
to the same equations in $\M(d,\C)$.

Every compact Lie group is real algebraic, and every semisimple complex
Lie group is complex algebraic.  There is also a bijection between compact
semisimple and complex semisimple Lie groups via complexification in one
direction and the Cartan decomposition (see below) in the other direction.
Every semisimple real matrix Lie group is real algebraic (using any
matrix embedding) and thus has a complexification.  By contrast, many
semisimple real Lie groups are not matrix Lie groups, and thus do not
have complexifications.

Every Lie group $G$ has a Lie algebra $L = T_1(G)$ with an abstract bilinear
Lie bracket $[x,y]$.  There is also an abstract exponential map $\exp:L
\to G$ that generalizes the matrix exponential for matrix Lie groups,
and that yields the real analytic structure of $G$.  Given $L$, it is the
Lie algebra of a unique simply connected (or \emph{maximal}) Lie group
$G_{\max}$, which is a covering space of every other Lie group $G$ with
Lie algebra $L$. The fundamental group $\pi_1(G)$ (which is necessarily
abelian) lifts to a discrete subgroup of the center $Z(G_{\max})$, and
any discrete subgroup of $Z(\tG)$ is the fundamental group of some $G$
with Lie algebra $L$.

A Lie group $G$ is called \emph{simple} when its Lie algebra $L$ is simple,
\ie, $L$ does not possess a nontrivial normal Lie subalgebra (or ideal)
$N \subset L$.  Confusingly, $G$ is simple as a Lie group if and only
if it is quasisimple rather than simple as a group, where a group $G$ is
\emph{quasisimple} when it is a perfect central extension of a simple group.
A Lie group $G$ is simple as a group if and only if it is minimal simple
(see below) as a Lie group.

Given $g \in G$, the derivative at $h=1$ of the conjugation action $h
\mapsto h^g$ of $G$ on itself is the \emph{adjoint action} $x \mapsto x^g$
of $G$ acting on $L$.  This is a linear action and is thus also called
the \emph{adjoint representation}.

\begin{example} $\SL(2,\R)$ is a simple (real) algebraic group with
fundamental group $\Z$.  It is quasisimple as a group and its center
is $\Z/2$.  Its minimal quotient $\PSL(2,\R)$ is also an algebraic group
and also has fundamental group $\Z$.  These are the only two algebraic
groups with Lie algebra $\sl(2,\R)$, even though all covering spaces of
$\PSL(2,\R)$ have the same Lie algebra.  In particular, the universal or
infinite cyclic cover $\widetilde{\SL(2,\R)}$ has a different topology
from any finite cover of $\SL(2,\R)$.

In addition, $\SL(2,\R)$ and $\SU(2)$ have the same complexification
$\SL(2,\C)$, which is simply connected and thus does not have any nontrivial
covering spaces.  More generally, every semisimple complex Lie group $G_\C$
is the complexification of more than one semisimple real $G$.
\end{example}

A Lie group $G$ is \emph{semisimple} when its Lie algebra $L$ is semisimple,
meaning that $L$ is a direct sum of simple Lie algebras.  If $L$ or $G$
is semisimple, then $Z(G_{\max})$ is discrete, and $L$ also has a canonical
minimal Lie group
\[ G_{\min} \defeq G/Z(G) \]
using any Lie group $G$ with Lie algebra $L$.  The Lie group $G_{\min}$ is
algebraic, and it is also called an \emph{adjoint Lie group}, because it is
the image of $G$ in its adjoint action on $L$.  If $L$ is semisimple, then
it also has a maximal algebraic Lie group $G_{\amax}$, and $Z(G_{\amax})$
is finite.  If $L$ is semisimple and either compact (meaning that it has
a compact Lie group) or complex, then $G_{\amax} = G_{\max}$.

Suppose that
\[ L \cong L_1 \oplus L_2 \oplus \cdots \oplus L_q \]
is a factorization of a semisimple $L$ into simple summands.
If $G$ is any of $G_{\min}$, $G_{\amax}$, or $G_{\max}$, then $G$
factors the same way that $L$ does, \ie,
\begin{eq}{e:gprod} G \cong G_1 \times G_2 \times \cdots \times G_q, \end{eq}
where each factor $G_j$ is simple.  However, other forms of $G$ might
factor less than $L$ does, because $\pi_1(G)$ can be a diagonal subgroup
of $\pi_1(G_{\min})$.

\begin{example} The Lie algebra of $\SO(4)$ factors as
\[ \so(4) \cong \su(2) \oplus \su(2), \]
but $\SO(4)$ itself does not factor.  Instead, its simply connected double
cover is
\[ \Spin(4) \cong \SU(2) \times \SU(2), \]
and $\SO(4)$ double covers its minimal form
\[ \PSO(4) \cong \SO(3) \times \SO(3). \]
As indicated, both of these alternate forms factor.
\end{example}

Another criterion for $G$ or $L$ to be semisimple is that $L$ has a
nondegenerate \emph{Killing form} $F(x,y)$, which is a certain bilinear
form which is defined intrinsically from the Lie algebra structure of $L$.
(Explicitly, if $T_{ab}^c$ is the tensor of the Lie bracket $[x,y]$, then
$F$ is given by the tensor network formula $F_{ab} = T_{ac}^d T_{bd}^c$.)

If $G$ is semisimple, then it has a generalized polar decomposition which is
unique up to conjugation and which is called the \emph{Cartan decomposition}.
The Cartan decomposition is easier to describe at the Lie algebra level
first.  If $L$ is the Lie algebra of $G$, then its Cartan decomposition
is a splitting $L = L_K \oplus L_P$ such that: (1) The Killing form $F$
is positive-definite on $L_P$ and negative-definite on $L_K$; (2) $L_P$
and $L_K$ are Killing-orthogonal; and (3) $L_K$ is a Lie subalgebra.
The subalgebra $L_K$ exponentiates to an \emph{angular subgroup} $K
\subseteq G$, while the subspace $L_P$ exponentiates to a \emph{radial
manifold}  $P \subseteq G$.  Then each $g \in G$ factors uniquely as $g =
pk$ with $p \in P$ and $k \in K$.

The Cartan decomposition $G = PK$ lets us interpret $P$ as both a
submanifold of $G$, and as the quotient space $P \cong G/K$ of left cosets
of the subgroup $K$, with an associated quotient map $\pi_P:G \to P$.
In this second interpretation, $G$ acts transitively on $P$ and $P$ is a
(nonpositively curved) \emph{symmetric space}.  Note also that even when
$G = K$, which is the case if and only if $G$ is compact, it still has a
trivial Cartan decomposition in which $P$ is a single point.

\begin{example} If $G = \SL(d,\C)$, then up to conjugation $K = \SU(d)$,
and then $P$ is the manifold of positive-definite Hermitian matrices with
determinant 1.  In this case, the equation $g = pk$ is the determinant 1
special case of the standard left polar decomposition of a complex matrix,
with $p = \sqrt{gg^*}$.
\end{example}

If $G \subseteq \M(d,\C)$ is semisimple, then it lies in $\SL(d,\C)$;
moreover, up to conjugation, its angular subgroup $K_G \subseteq \SU(d)$
is unitary and thus compact.  In this position, the radial complement $P_G$
consists of those positive-definite Hermitian matrices that lie in $G$:
\[ P_G = P_{\SL(d,\C)} \cap G. \]

Every Lie group $G$ also has a maximal compact subgroup $C \subseteq G$
which is unique up to conjugation.  If $G$ is semisimple, then up to
conjugation, $C$ is related to the angular subgroup $K$ by the inclusions
\[ [K,K] \subseteq C \subseteq K. \]
If $G$ is algebraic, then necessarily $K = C$; otherwise, they are not
always equal.  Using the fact that $Z(G) \subseteq K$, another criterion
is that $K = C$ if and only if $G$ has finite center.  Regardless, the
commutator subgroup $[K,K]$ is always compact and semisimple.  It will be
useful to see the difference between $K$ and $[K,K]$ (and therefore the
options for $C$) using the abelianization
\[ K_{\ab} \defeq K/[K,K]. \]

If $G$ is simple, then either $K$ is semisimple, $K_{\ab}$ is trivial,
and $G$ and its Lie algebra $L$ are \emph{non-Hermitian type}; or
$K_{\ab}$ is 1-dimensional and $G$ and $L$ are \emph{Hermitian type}.
(The terminology comes from the fact that the symmetric space $P$ is a
$G$-invariant complex manifold with if and only if $G$ is Hermitian type.)
If $G$ is a general semisimple Lie group, then $L$ has some $m \ge 0$
Hermitian-type summands.  In this case, the abelianization of $K_{\min}
\subseteq G_{\min}$ is an $m$-dimensional torus:
\[ (K_{\min})_{\ab} \cong U(1)^m. \]
In a general $G$ with the same Lie algebra $L$, the abelianization $K_{\ab}$
can be any covering space of $(K_{\min})_{\ab}$.  In particular, the quotient
\[ E \defeq K/C \cong \R^j \]
is a $j$-dimensional abelian Lie group for some $j \le m$, corresponding
to the directions of $(K_{\min})_{\ab}$ that have been unrolled completely.

\begin{examples} $\SL(2,\R)$ is Hermitian type as implied above. In this
case, $K = \SO(2)$ is a circle and $[K,K]$ is trivial.  More generally,
the real symplectic group $\Sp(2d,\R)$ is always Hermitian type, with $K
\cong \U(d)$ and thus $K_{\ab} \cong U(1)$.  Thus, for any $d \ge 1$,
the Euclidean factor $E$ in $\widetilde{\Sp(2d,\R)}$ is 1-dimensional.

If $G = \widetilde{\SL(2,\R)}{}^2$, then its Euclidean factor $E$
is $2$-dimensional.  If instead we let $G$ be the quotient of
$\widetilde{\SL(2,\R)}{}^2$ by the diagonal subgroup of its center, then
$K_{\ab}$ is a cylinder and $E$ is 1-dimensional.
\end{examples}

\subsection{Non-matrix Lie groups}
\label{ss:nonmat}

If $G$ is a semisimple algebraic group, then it is straightforward to
describe $g \in G$ as a matrix using any particular matrix embedding $G
\into \M(d,\R)$.  If $G$ is semisimple and not algebraic, then it is not
a matrix Lie group and it is more complicated to define a data type to
describe $g \in G$.  In this subsection, we give a solution based on the
Cartan decomposition $G = PK$ and the covering space analysis of $K$
in \Sec{ss:lie}.

Let
\[ \rho:G \to \SL(d,\C) \]
be a matrix representation of $G$ with these two properties:
\begin{enumerate}
\item The kernel $\ker(\rho)$ is discrete; equivalently, $\rho$ is injective
on the Lie algebra $L$.  For example, we can let $\rho$ be the (complexified)
adjoint representation, so that $\rho(G) \cong G_{\min}$ and $\ker(\rho)
= Z(G)$.
\item $\rho$ has been adjusted by conjugation so that $\rho(K) \subset
\SU(d)$, the maximal compact subgroup of $\SL(d,\C)$.
\end{enumerate}
(Note that we could just as well use $\SL(d,\R)$ and $\SO(d)$ throughout.)
Given any such $\rho$, the Cartan decomposition in $\rho(G)$ is consistent
with the Cartan or polar decomposition in $\SL(d,\C)$, as explained
in \Sec{ss:lie}.  Moreover, $\rho$ is a diffeomorphism between $P$ and
$\rho(P)$, and a covering map from $K$ to its compact image $\rho(K)$.
\Sec{ss:lie} also gives us the explicit model $K \cong C \times E$, where
$C$ is compact and thus a matrix group, and $E$ is a (possibly trivial)
Euclidean factor.

Thus, we can explicitly describe $g \in G$ by a matrix $p \in \rho(P)$
and angular data $(c,v) \in C \times E$, with the interpretation that
\[ \rho(g) = p\rho(c,v). \]
In particular, this form can be used as the input to \Alg{a:l} below,
or any other algorithm promised by \Thm{th:semisimple}.

\begin{example} In the Lie group $G = \widetilde{\SL(2,\R)}$, the angular
subgroup is $K \cong \R$, while the maximal compact subgroup $C$ is trivial.
Let $\rho:G \to \SL(2,\R)$ be the natural representation.  Then we can
describe $g$ by a symmetric matrix $p \in \M(2,\R)$ and an angle $\theta
\in \R$ such that
\[ \rho(g) = p \begin{bmatrix} \cos(\theta) & -\sin(\theta) \\
    \sin(\theta) & \cos(\theta) \end{bmatrix}. \]
\end{example}

\subsection{Metrics on Lie groups}
\label{ss:metrics}

Given a semisimple Lie group $G$, we need an appropriate
metric $d_G(\cdot,\cdot)$ for both the statements and proofs of
Theorems~\ref{th:qudit} and \ref{th:semisimple}.  The main answer is
that $d_G(\cdot,\cdot)$ should be a left-invariant Finsler metric.
All left-invariant Finsler metrics on $G$ are bi-Lipschitz equivalent,
and thus equivalent in the statements of our theorems.  At the same time,
certain left-invariant Riemannian metrics, which are special cases of Finsler
metrics, are the most convenient for our purposes.  In this subsection,
we sketch these concepts in more detail.

If $X$ and $Y$ are two metric spaces, then a function $\tau:X \to Y$ is
\emph{Lipschitz} means that
\[ d_Y(\tau(p),\tau(q)) = O(d_X(p,q)) \]
for all $p,q \in X$ (nonasymptotically).  In case the specific
constant matters, $\tau$ is $c$-Lipschitz when
\[ d_Y(\tau(p),\tau(q)) \le c d_X(p,q) \]
for all $p,q \in X$.  If $\tau$ is a bijection, then it is
\emph{bi-Lipschitz} when both $\tau$ and $\tau^{-1}$ are Lipschitz,
equivalently that
\[ d_Y(\tau(p),\tau(q)) = \Theta(d_X(p,q)) \]
for all $p,q \in X$.  Two different metrics on the same space can likewise
be bi-Lipschitz equivalent by the same criterion.

A \emph{pseudo-metric} on a set $X$ is a function $d:X \times X \to \R_{\ge
0}$ that satisfies the axioms of a metric space, except that $d(p,q) =
0$ is allowed even when $p \ne q$.  The equation $d(p,q) = 0$ defines
an equivalence relation $p \sim q$, and $d$ then induces a metric on the
quotient $X/\sim$.  Conversely, if $\tau:X \to Y$ is a surjective function
and $Y$ has a metric, then it induces a pseudo-metric on $X$ by the formula
\[ d_X(p,q) \defeq d_Y(\tau(p),\tau(q)). \]

If $M$ is a smooth manifold, a \emph{Finsler metric} is a continuous
assignment of a norm $\norm{\cdot}_p$ on each tangent space $T_p(M)$.
Given a Finsler metric on $M$, we define the length of a smooth path
$\gamma:[a,b] \to M$ by the integral
\[ \len(\gamma) \defeq \int_a^b \norm{\gamma'(t)}_{p = \gamma(t)} dt, \]
which is a generalization of the standard integral for the length of a
parametric curve in Euclidean space.  Given $p,q \in M$ in a connected
Finsler manifold $M$, we define the metric $d_M(p,q)$ as the infimal length
of a path $\gamma$ from $p$ to $q$.  If there is a $\gamma$ that minimizes
this length, then it is called a \emph{geodesic}.  If $M$ is connected and
complete as a metric space (which is the case for most manifolds that we
consider in this article), then every two points $p,q \in M$ are connected
by at least one geodesic.

Given a differentiable map $f:M \to N$ between two Finsler manifolds,
each value $df(p)$ of the derivative $df$ is linear map
\[ df(p):T_p(M) \to T_{f(p)}(N), \]
which thus has an operator norm $\norm{df(p)}_\infty$.

If the norm $\norm{\cdot}_p$ on a Finsler manifold $M$ from a
positive-definite inner product $\braket{\cdot,\cdot}_p$, then the metric
is \emph{Riemannian}.  If $M$ and $N$ are two Riemannian manifolds,
then there is a standard Riemannian metric on the product given by
taking the sum of the inner products on $M$ and $N$:
\begin{gather*}
\braket{(x,y),(x,y)}_{(p,q)} \defeq \braket{x,x}_p + \braket{y,y}_q \\
    p \in M \quad q \in N \quad x \in T_p(M) \quad y \in T_q(N).
\end{gather*}
This product metric then satisfies the Pythagorean theorem:
\begin{eq}{e:pythag} d_{M \times N}((p_1,q_1),(p_2,q_2))
    = \sqrt{d_M(p_1,p_2)^2 + d_N(q_1,q_2)^2}. \end{eq}

If a group $G$ acts on a smooth manifold $M$, then a Finsler metric
$\norm{\cdot}_p$ or a Riemannian inner product $\braket{\cdot,\cdot}_p$ on
$M$ may or may not be $G$-invariant.  In particular, if $G = M$ is a Lie
group, then any such structure may be left- or right-invariant (if it is
invariant under left or right multiplication by $G$), or bi-invariant (if
it is both).   If $G$ is any Lie group, then there is a bijection between
left-invariant Finsler metrics on $G$ and Banach norms $\norm{\cdot}_L$
on the Lie algebra $L = T_1(G)$:  If $x \in T_g(G)$ for some other $g \in
G$, then left multiplication gives us $g^{-1}x \in L$, and we define
\[ \norm{x}_g \defeq \norm{g^{-1}x}_L. \]
Likewise, there is a bijection between left-invariant Riemannian metrics
on $G$ and inner products $\braket{\cdot,\cdot}_L$ on $L$.  On the other
hand, if $G$ is semisimple, then it has a bi-invariant Finsler metric if
and only if it is compact.  If $G$ is compact, then it has a bi-invariant
Riemannian metric (see below).

To keep track of numerical precision, we need a counterpart of \equ{e:bbmul}
when $G$ is not compact and has a left-invariant Finsler metric.  In this
case
\[ B_G(\eps,1)h \subset hB_G(N\eps,1) = B_G(N\eps,h) \qquad N = O(d(h,1)), \]
which implies that:
\begin{eq}{e:bmul} \begin{gathered}
B_G(\eps,g)B_G(\delta,h) \subseteq B_G(N\eps+\delta) \\
    B_G(\delta,h^{-1}) \subseteq B_G(N\delta,h)^{-1}.
\end{gathered} \end{eq}

Since the Lie algebra $L$ is a finite-dimensional vector space, all norms on
it are bi-Lipschitz equivalent.   Thus, all left-invariant Finsler metrics on
a Lie group $G$ are also bi-Lipschitz equivalent.  Given this equivalence,
we adopt a specific left-invariant Riemannian metric on any semisimple $G$
as follows:

As in \Sec{ss:nonmat}, let $\rho:G \to \SL(d,\C)$ be a matrix representation
which is faithful on $L$ and unitary on $K$.  Then $L$ inherits a
Hilbert--Schmidt norm and inner product via $\rho$:
\begin{eq}{e:hsnorm} \braket{x,y}_L \defeq \Re(\tr(\rho(x)\rho(y)^*))
    \qquad  \norm{x}_L \defeq \sqrt{\braket{x,x}_L}. \end{eq}
This norm is also $K$-adjoint invariant, \ie, invariant under conjugation
by the angular subgroup $K$.  It therefore extends to a left-invariant
Riemannian metric $d_G(g,h)$ on $G$ which is also $K$-right-invariant.
If $G = K$ is compact, then this metric is bi-invariant.  When $G$
is semisimple, we can fully standardize the metric \eqref{e:hsnorm}
by letting $\rho$ be the adjoint representation, although this will not
materially matter in this article.  Finally, since the metric $d_G(g,h)$
is right $K$-invariant, any two left cosets $gK$ and $hK$ are at a constant
distance from each other.  It follows that $d_G$ induces a $G$-invariant
Riemannian metric $d_P$ on the symmetric space $P \cong G/K$.

It can be difficult to accurately compute distances $d_M(p,q)$ in a general
Riemannian or Finsler manifold $M$, because solving for geodesics carries
the complications of solving ODEs as well as combinatorial shortest-path
algorithms.  For our purposes, we will use one important case where there is
an exact algorithm, and two different types of approximation.

\begin{proposition} If $G$ is a semisimple Lie group with representation
$\rho$ as above, then distances in the symmetric space $P \cong G/K$
are given by
\[ d_P(\pi_P(g),\pi_P(h)) =
    \frac{\norm{\log(\rho(h^{-1}g)\rho(h^{-1}g)^*)}_2}2, \]
where $\log$ is the unique Hermitian logarithm of its positive Hermitian
argument, and
\[ \norm{x}_2 \defeq \sqrt{\tr(xx^*)} \]
is the Hilbert--Schmidt norm on $M(d,\C)$.
\label{p:pdist} \end{proposition}

\begin{theorem} The multiplication map
\[ \mu:P \times K \to G \]
is a bi-Lipschitz diffeomorphism, where $P \times K$ is given the
product metric $d_{P \times K}$.
\label{th:pkbilip} \end{theorem}

We will establish \Thm{th:pkbilip} as a corollary of \Thm{th:bilip}
in \Sec{s:bilip}.

The second approximation that we will need is
\begin{eq}{e:gnest} d_G(g,1) = \norm{g-1}_2(1+o(1)) \end{eq}
in the limit as $g \to 1$.  The right side of \equ{e:gnest} only makes
sense when $G$ is a matrix Lie group, but we can use it for arbitrary $G$
by replacing $G$ by $G_{\min}$, provided that $G$ is in an open region
around $1 \in G$ which is also an open region in $G_{\min}$.  The equation
is then an elementary estimate in Riemannian geometry that amounts to the
fact that the metric inner product $\braket{\cdot,\cdot}_G$ matches the
Hilbert--Schmidt inner product on the tangent space $T_1(G)$.

\begin{proof}[Proof of \Prop{p:pdist}] We can replace $G$ by $\rho(G)$ in
the question, since they have compatible Cartan decompositions.  Thus we
can assume that $\rho$ is the identity.

We use the fact that the exponential map $\exp:L_P \to P$ is a
diffeomorphism between the radial manifold $P$ and its Lie tangent space
$L_P$.  Moreover, exponential curves from the identity are also geodesics
in both $G$ and $P$, so that for any $x \in L_P$,
\[ d_G(\exp(x),1) = d_P(\exp(x),1) = \norm{x}_L = \norm{x}_2. \]
In context, $x \in L_P$ is a Hermitian matrix and $p = \exp(x)$ is positive
Hermitian.  Thus we can also write this equation as
\[ d_P(p,\pi_P(1)) = \norm{\log(p)}_2. \]
Using polar decomposition $g = pk$, it follows that
\[ d_P(\pi_P(g),1) = \norm{\log(\sqrt{gg^*)}_2}
    = \frac{\norm{\log(gg^*)}_2}2. \]
Finally the $G$-invariance of $d_P$ yields
\[ d_P(\pi_P(g),\pi_P(h)) = \frac{\norm{\log((h^{-1}g)(h^{-1}g)^*)}_2}2, \]
as desired.
\end{proof}

Recall that $G$ has a maximal compact subgroup $C \subseteq K \subseteq G$,
which yields a quotient map
\[ \pi_Q:G \to G/C \defeq Q. \]
Any two left cosets $gC$ and $hC$ are also at a constant distance from
each other, so that $d_G(g,h)$ also induces a $G$-invariant Riemannian
metric $d_Q$ on $Q \defeq G/C$.  The quotient map $\pi_Q$ is distance
nonincreasing (or 1-Lipschitz), but cannot decrease distances by more
than the diameter of $C$, which is finite.  In other words,
\begin{eq}{e:gqdist} d_G(g,h) = d_Q(\pi_Q(g),\pi_Q(h)) + O_+(1). \end{eq}
If $K = C$, then $\pi_P = \pi_Q$, and \Prop{p:pdist} along with
\eqref{e:gqdist} make $P$ a useful simplification of the metric geometry
of $G$.   If $K \ne C$ is not compact, then we further simplify the
geometry of $Q$.   Following \Sec{ss:lie}, the quotient $E = K/C \cong
\R^j$ becomes a Euclidean space under its left-invariant metric $d_E$.
(There is also an isometric group isomorphism $K \cong E \times C$,
but we will not need this in this article.)

Since the metrics $d_{P \times K}$ and $d_G$ are both right $C$-invariant,
we also obtain the following corollary of \Thm{th:pkbilip}:

\begin{corollary} The multiplication map $\mu$ descends to
a bi-Lipschitz diffeomorphism
\[ \mu_Q:P \times E \to Q \qquad P \times E \cong (P \times K)/C
    \qquad Q \cong G/C \]
given by quotienting by right multiplication by $C$.
\label{c:pebilip} \end{corollary}

\Cor{c:pebilip} and \equ{e:gqdist} together yield
\[ d_G(g,h) = \Theta(d_{P \times E}(\pi_Q(g),\pi_Q(h))) \]
as $d_G(g,h) \to \infty$.  In other words, $\mu_Q^{-1} \circ \pi_Q$ is
a \emph{quasiisometry}.

We conclude this subsection with a proof of \Cor{c:matnorm}.

\begin{proof}[Proof of \Cor{c:matnorm}] Under the hypotheses
of \Thm{th:semisimple} and the corollary, it suffices to show that
\[ \norm{g-w}_\infty = O(\norm{g}_\infty d_G(g,w)) \]
uniformly in $g$ as $d_G(g,w) \to 0$, and that
\[ d_G(g,1) =  O(\log(\norm{g}_\infty)+1) \]
for all $g$ (nonasymptotically).

For the first claim, \equ{e:gnest} tells us that
\[ d_G(g,w) = \norm{1 - g^{-1}w}_2(1+o(1)). \]
as $d_G(g,w) \to 0$.  Meanwhile
\[ \norm{g-w}_\infty \le \norm{g}_\infty \norm{1-g^{-1}w}_\infty
    = \norm{g}_\infty O(\norm{1-g^{-1}w}_2). \]
The claim follows by combining these estimates.

The second claim follows from \Prop{p:pdist} (or a step in its proof)
together with \eqref{e:gqdist}:
\[ d_G(g,1) = O(\norm{\log(gg^*)}_\infty+1)
    = O(\log(\norm{g}_\infty)+1). \qedhere \]
\end{proof}

\subsection{Cancellation and nilpotence}
\label{ss:cannil}

In this subsection, we will discuss results of Elkasapy and Thom
\cite{ET:lower,Elkasapy:lower} concerning the nilpotence degree
$\nil(\omega)$ of a group word $\omega$, and its relation to the conjugate
cancellation degree $\ccan_G(\omega)$ that we need.

Recall that every group $G$ has a \emph{lower central series}
\[ G = G_1 \supseteq G_2 \supseteq G_3 \supseteq \cdots \]
which is defined recursively by the formula
\[ G_{n+1} \defeq \comm{G_n,G}. \]
An induction argument using the Hall--Witt identity
\[ \bigcomm{\comm{g,f},h^f} \bigcomm{\comm{f,h},g^h}
    \bigcomm{\comm{h,g},f^g}=1 \]
then tells us that
\[ \comm{G_n,G_k} \subseteq G_{n+k} \]
for all $n$ and $k$.  If $G_{n+1} = 1$ for some $n$, then $G$ is called
\emph{$n$-step nilpotent}.  In particular, if $G = F_k$ is a free
group with $k$ generators and $\omega \in F_k$ is a group word, then the
\emph{nilpotence degree} $\nil(\omega)$ is the largest $n$ such that $\omega
\in (F_k)_n$.  Every nontrivial word $\omega$ has finite nilpotence degree
(which is part of \Thm{th:cannil} below, or can be proven in other ways).
If $\nil(\omega) = n$, then its evaluation $\omega(g_1,g_2,\ldots,g_k)$
is trivial in any $(n-1)$-step nilpotent group $G$, but it is nontrivial
in the \emph{free $n$-step nilpotent group}
\[ N_{k,n} \defeq F_k/(F_k)_{n+1}. \]

Elkasapy and Thom \cite{ET:lower} considered (in the case $k=2$) the minimum
word length to reach any given nilpotence degree, as in this definition:
\[ \ell_{k,n} \defeq \min_{\omega \in (F_k)_n} \len(\omega). \]
They also considered the corresponding exponent
\[ \lambda_{k,n} \defeq \frac{\log(\ell_{k,n})}{\log(n)}, \]
and they showed (Lemma 2.1) that
\[ \lambda \defeq \inf_n \lambda_{2,n}
    = \lim_{n \to \infty} \lambda_{2,n}. \]
By embedding $F_k$ into $F_2$, we also learn that $\ell_{k,n} =
\Theta(\ell_{2,n})$ for any fixed $k$, as $n \to \infty$.  This implies that
\[ \lambda = \lim_{n \to \infty} \lambda_{k,n} \]
for any fixed $k$.  With a little more work (which we will omit here),
one can also show that
\[ \lambda = \inf_n \lambda_{k,n} \]
for each $k$.   Elkasapy and Thom also explain that lower bounds due to
Fox and Malestein--Putman imply that $\lambda \ge 1$.

The group commutator $\comm{g,h}$ shows that $\lambda \le 2$, but, starting
with \eqref{e:len14}, Elkasapy and Thom realized that $\lambda < 2$.

\begin{example} Refining \eqref{e:len14}, one of the words found by
Elkasapy and Thom is:
\[ \omega(g,h) = \bigcomm{\comm{g,h},\comm{h,g^{-1}}}
    = ghg^{-2}h^{-1}ghgh^{-1}g^{-2}hgh^{-1}. \]
This $\omega$ also has length 14, but now $\nil(\omega) = 5$.
\end{example}

In a followup paper \cite{Elkasapy:lower}, Elkasapy found a sequence
of words $\omega_n \in F_2$ that imply that $\lambda \le \log_\phi(2)$.
In \Sec{s:elkasapy}, we will state his precise result as \Thm{th:elkasapy},
and give his proof with minor changes.

We establish the following comparisons between $\nil(\omega)$, and
the statistics $\can_G(\omega)$ and $\ccan_G(\omega)$ as defined in
\equ{e:candeg} and the surrounding discussion in \Sec{ss:steps}.

\begin{theorem} For any nontrivial word $\omega \in F_k$ and any semisimple
Lie group $G$,
\begin{eq}{e:cnle} \nil(\omega) \le \can_G(\omega)
    \le \ccan_G(\omega) < \infty. \end{eq}
Conversely,
\begin{eq}{e:cnlim} \lim_{d \to \infty} \ccan_{\SU(d)}{\omega}
    = \nil(\omega). \end{eq}
\label{th:cannil} \end{theorem}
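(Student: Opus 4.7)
The chain \eqref{e:cnle} contains three assertions. The middle inequality $\can_G(\omega)\le\ccan_G(\omega)$ is immediate, since pairwise-conjugate arguments form a restricted input set. For $\nil(\omega)\le\can_G(\omega)$, I would combine Baker--Campbell--Hausdorff with the classical Magnus--Witt isomorphism identifying the graded lower central series $\bigoplus_n(F_k)_n/(F_k)_{n+1}$ with the free Lie algebra on $k$ generators: then $\omega\in(F_k)_n\smallsetminus(F_k)_{n+1}$ corresponds to a nonzero degree-$n$ Lie polynomial $P$, and expanding via \equ{e:bch} gives
\[\log\omega(\exp(\eps x_1),\dots,\exp(\eps x_k)) = \eps^nP(x_1,\dots,x_k)+O(\eps^{n+1}),\]
so $\can_G(\omega)\ge n$. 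For $\ccan_G(\omega)<\infty$, a genericity argument suffices: for any non-abelian semisimple $G$ and generic $g,h_2,\dots,h_k\in G$, the tuple $(g,h_2gh_2^{-1},\dots,h_kgh_k^{-1})$ generates a Zariski-dense subgroup, so the non-trivial word $\omega$ cannot vanish on it.

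For the limit \eqref{e:cnlim}, the lower bound comes from \eqref{e:cnle}, so only $\ccan_{\SU(d)}(\omega)\le\nil(\omega)$ (for large $d$) needs work, and my plan has two steps. Step 1 constructs a symmetric nilpotent representation: by Ado's theorem, pick a faithful representation $\rho:\mathfrak{n}_{k,n}\hookrightarrow\operatorname{gl}(D_0,\C)$ of the free $n$-step nilpotent Lie algebra, where $n=\nil(\omega)$. Promote $\rho$ to $\tilde\rho$ on $V=\C^{D_0}\otimes\C[S_k]$ via $\tilde\rho(e_i)(v\otimes\delta_\sigma):=\rho(e_{\sigma(i)})v\otimes\delta_\sigma$. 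A direct check confirms $\tilde\rho$ is a faithful Lie homomorphism, and that the block-permutation matrix $T_\tau(v\otimes\delta_\sigma):=v\otimes\delta_{\sigma\tau^{-1}}$ satisfies $T_\tau\tilde\rho(e_i)T_\tau^{-1}=\tilde\rho(e_{\tau(i)})$, so the generators $\tilde\rho(e_1),\dots,\tilde\rho(e_k)$ are pairwise conjugate in $\operatorname{GL}(D_0k!,\C)$. Faithfulness forces $P(\tilde\rho(e_1),\dots,\tilde\rho(e_k))=\tilde\rho(P(e_1,\dots,e_k))\ne 0$.

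Step 2 transfers this non-vanishing to $\su(d)$ via a real-form density argument. For $d>D_0k!$, embed the Step 1 tuple block-diagonally into $\sl(d,\C)^k$; the conjugators $T_\tau$ can be adjusted by a sign on the identity block to lie in $\SL(d,\C)$, so $P$ is nonzero on a pairwise $\SL(d,\C)$-conjugate tuple. Suppose, for contradiction, that $P$ vanishes identically on the pairwise $\SU(d)$-conjugate locus of $\su(d)^k$. Consider the polynomial map
\[\Phi:\SL(d,\C)^k\times\sl(d,\C)\to\sl(d,\C)^k,\quad\Phi(g_1,\dots,g_k,x_0):=(g_ix_0g_i^{-1})_{i=1}^k,\]
whose image is the full pairwise $\SL(d,\C)$-conjugate locus. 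Then $P\circ\Phi$ is holomorphic on the connected complex manifold $\SL(d,\C)^k\times\sl(d,\C)$ and vanishes on the submanifold $\SU(d)^k\times\su(d)$, whose real dimension $(k+1)(d^2-1)$ equals the ambient complex dimension, making it maximally totally real. The identity principle for holomorphic functions then forces $P\circ\Phi\equiv 0$, contradicting Step 1. The main obstacle is precisely why this detour is needed: one cannot embed $\mathfrak{n}_{k,n}$ directly into $\su(d)$, because every nilpotent subalgebra $\mathfrak{h}\subseteq\su(d)$ is abelian---Engel's theorem makes $\operatorname{ad}(x)|_{\mathfrak{h}}$ nilpotent, while anti-Hermiticity of $x$ makes it anti-Hermitian under Hilbert--Schmidt, so it must vanish.
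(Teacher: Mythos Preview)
Your argument for $\nil(\omega)\le\can_G(\omega)$ via Baker--Campbell--Hausdorff and Magnus--Witt is correct and essentially dual to the paper's, which instead observes that $G[[\eps]]_0/\eps^n$ is $(n{-}1)$-step nilpotent and invokes the universal property of the free nilpotent group $N_{k,n-1}$.

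Your argument for $\ccan_G(\omega)<\infty$ has a gap. You assert that a generic conjugate tuple generates a Zariski-dense subgroup and that therefore $\omega$ cannot vanish on it. But Zariski-density of $\langle g_1,\ldots,g_k\rangle$ only says that no nontrivial word is an \emph{identity} on that subgroup; it does not prevent $\omega$ from vanishing at the particular generating tuple. (For instance, $\SL(2,\Z)$ is Zariski-dense in $\SL(2,\R)$, yet its standard generators satisfy relations.) What you actually need is that some conjugate tuple freely generates a free group, and your sentence assumes rather than proves this. The paper supplies the missing ingredient concretely: it exhibits free conjugate tuples in $\PSL(2,\C)$ via Schottky groups (the elements $b^nab^{-n}$, for $a,b$ generating a free group, are mutually conjugate and freely independent), and then reduces an arbitrary semisimple $G$ to this case by passing to $G_{\min}$, complexifying, and locating a copy of $\SL(2,\C)$ or $\PSL(2,\C)$.

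Your treatment of \eqref{e:cnlim} is correct and takes a genuinely different route from the paper. The paper works at the group level: it realizes the free nilpotent \emph{group} $N_{k,n-1}$ inside unitriangular matrices via the Jennings--Hall--Mal'cev theory, symmetrizes the generators by taking $k$ cyclically permuted copies (doubled so the permutations land in $\SL$), and transfers to $\SU(d)$ by noting that $\ccan$ is invariant under complexification because \eqref{e:can} is an algebraic identity. You work at the Lie-algebra level: Ado for $\mathfrak{n}_{k,n}$, symmetrization by tensoring with $\C[S_k]$, and transfer from $\SL(d,\C)$ to $\SU(d)$ via the identity principle on a maximal totally real submanifold. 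Your symmetrization is cleaner and Ado is lighter machinery than the Jennings--Hall--Mal'cev package; on the other hand the paper's transfer to $\SU(d)$ is purely algebraic where yours is analytic. Both reach the same conclusion with comparable $d$.
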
 \eatline

We introduce some definitions and facts that we will use in the proof
of \Thm{th:cannil} and afterwards.  If $G$ is an algebraic group, we
define the group $G[[\eps]]$ to be the group of $\R[[\eps]]$-points of
$G$, where $\R[[\eps]]$ is the ring of formal power series in $\eps$.
More explicitly, if $G$ is the set of solutions in $\M(d,\R)$ to a set of
polynomial equations, then $G[[\eps]]$ is by definition the set of solutions
in $\M(d,\R[[\eps]])$ to the same equations.  We define $G[[\eps]]_0$ to be
the subgroup of $G[[\eps]]$ of those elements that become the identity $1 \in
G$ if we set $\eps = 0$. The group $G[[\eps]]$ is then a semidirect product
\begin{eq}{e:semidir} G[[\eps]] \cong G \ltimes G[[\eps]]_0. \end{eq}
We will also use the quotient group $G[[\eps]]/\eps^n$, by definition
the set of  $\R[[\eps]]/\eps^n$-points of $G$, and the corresponding
subquotient $G[[\eps]]_0/\eps^n$.

Although $G[[\eps]]$ and $G[[\eps]]_0$ are not strictly Lie groups (because
they are infinite-dimensional over $\R$), they have associated Lie algebras
\[ L[[\eps]] \defeq L \tensor_\R \R[[\eps]]
    \qquad L[[\eps]]_0 \defeq \eps L[[\eps]]. \]
We collect two basic facts about $G[[\eps]]_0$ and $L[[\eps]]_0$
as a lemma.

\begin{lemma} Given $G[[\eps]]_0$ and $L[[\eps]]_0$ defined as above:
\begin{enumerate}
\item The exponential map is a bijection and thus has an inverse:
\[ \exp:L[[\eps]]_0 \to G[[\eps]]_0 \qquad
    \log:G[[\eps]]_0 \to L[[\eps]]_0. \]
\item The BCH formula \eqref{e:bch} converges on $L[[\eps]]_0$ and matches
the group law on $G[[\eps]]_0$.
\item Given a word $\omega \in F_k$, $\can_G(\omega)$ as defined by
\eqref{e:candeg} equals the largest integer $n$ such that
\begin{eq}{e:can} \omega(g_1,\ldots,g_k) = 1 \in G[[\eps]]_0/\eps^n \end{eq}
for all inputs in $G[[\eps]]_0/\eps^n$, and $\ccan_G(\omega)$ equals the
largest $n$ such that \eqref{e:can} holds for inputs in $G[[\eps]]_0/\eps^n$
that are conjugate in $G[[\eps]]/\eps^n$.
\end{enumerate}
\label{l:gleps0} \end{lemma}

If $G$ is any Lie group with Lie algebra $L$, then we can use the
conclusions of \Lem{l:gleps0} as a definition of $G[[\eps]]$.  We can define
$G[[\eps]]_0$ to be $L[[\eps]]_0$ with the BCH product as its group law,
and then use \eqref{e:semidir} to define $G[[\eps]]$ as the semidirect
product of $G$ and $G[[\eps]]_0$.

\begin{proof} Part 1 is the statement that
for any $g \in G[[\eps]]/\eps^{n+1}$,
\begin{eq}{e:gexp} g = \exp(\eps x_1 + \eps^2 x_2 + \cdots
    + \eps^{n-1} x_n) \end{eq}
always has a unique solution for each $x_j \in L$.  This can be checked
by induction on $n$:  Given $x_1,\ldots,x_{n-1}$, the equation for $x_n$
becomes linear and nonsingular.

In part 2, the BCH formula converges because the factors of $\eps$
in its inputs accumulate.  The fact that it matches the group law in
$G[[\eps]]_0/\eps^n$ is the same calculation as the usual interpretation
of the BCH formula as a convergent Taylor series expansion.

In part 3, \equ{e:can} seems more general than \equ{e:candeg}, since in the
former an input $g$ has the form \eqref{e:gexp}, while in the latter $g =
\exp(\eps x)$.  This extra generality is superficial, since we can obtain
\equ{e:can} from \equ{e:candeg} by taking partial derivatives.
\end{proof}

Since the construction of $G[[\eps]]_0$ via the BCH formula only depends
on the Lie algebra $L$, the statistic $\can_G(\omega)$ only depends on $L$
and not otherwise on $G$.  Likewise $\ccan_G(\omega)$ only depends on $L$
and the adjoint action of $G$ on $L$.  Thus
\[ \ccan_G(\omega) = \ccan_{G_{\min}}(\omega) \]
when $G$ is semisimple, and we can assume at this moment that $G = G_{\min}$.
If $G = G_{\min}$, then it is a product of simple factors by \equ{e:gprod}.
Then $\can_G(\omega)$ and $\ccan_G(\omega)$ are each the minimum of their
values for the separate factors:
\begin{align*}
\can_G(\omega) &= \min_j \can_{G_j}(\omega) \\
\ccan_G(\omega) &= \min_j \ccan_{G_j}(\omega).
\end{align*}

\begin{proof}[Proof of \Thm{th:cannil}] The group $G[[\eps]]_0/\eps^n$ is
at most $(n-1)$-step nilpotent, since the factors of $\eps$ accumulate when
we take group commutators.  (It is exactly $(n-1)$-step nilpotent when $G$
is semisimple, although we will not need this refinement here.)  Taking $n =
\nil(\omega)$, the universal property of the free nilpotent group $N_{k,n-1}$
tells us that $\nil(\omega) \le \can_G(\omega)$, as desired.  Meanwhile,
$\can_G(\omega) \le \ccan_G(\omega)$ straight from the definition.  Indeed,
these two inequalities in \eqref{e:cnle} hold in any Lie group $G$.

The final inequality in \eqref{e:cnle}, that $\ccan_G(\omega) < \infty$
whenever $\omega$ is a nontrivial group word, is not true for solvable
Lie groups.  To prove it when $G$ is semisimple, we first consider
the special case $G = \PSL(2,\C)$.  It was first proven by Schottky
\cite{Schottky:mehrfach} that $\PSL(2,\C)$ contains nonabelian free groups.
In particular, if $a,b \in \PSL(2,\C)$ generate a free group $F_2$, then
the elements $b^nab^{-n}$, for all integers $n \in \Z$, are all conjugate
and are all freely independent as well; they generate an $F_\infty$.
It follows that if $\omega \in F_k$ is any nontrivial word, then it
is a nonconstant function on conjugate group elements in $\PSL(2,\C)$.
It is also a complex analytic function (indeed, a polynomial), so it has
a nontrivial power series at the identity, which means that
\[ \ccan_{\PSL(2,\C)}(\omega) < \infty. \]

We can prove that $\ccan_G(\omega) < \infty$ for any semisimple real Lie
group $G$ by stepping through the following reductions:
\begin{enumerate}
\item Since $\ccan_G(\omega)$ depends only on the Lie algebra $L$ of $G$
and the action of $G_{\min}$ on $L$, we can replace $G$ by $G_{\min}$ and
assume that $G$ is algebraic.
\item Since \equ{e:can} is an algebraic identity, we can replace a semisimple
real algebraic $G$ by its complexification $G_\C$.
\item If $G$ is complex semisimple, then by the classification of such Lie
groups, $G$ contains either $\SL(2,\C)$ or $\PSL(2,\C)$ as a subgroup. In
the former case, we can again pass to $\SL(2,\C)_{\min} \cong \PSL(2,\C)$.
\end{enumerate}

\Equ{e:cnlim} requires a different approach.  We can replace $\SU(d)$
by its complexification $\SL(d,\C)$, since (as in the previous paragraph)
\[ \ccan_{\SU(d)}(\omega) = \ccan_{\SL(d,\C)}(\omega). \]
It suffices to show that the free nilpotent group $N_{k,n-1}$ embeds in
$\SL(d,\C)[[\eps]]_0/\eps^n$ when $d$ is large enough.  We can make such
an embedding using a free nilpotent algebra, although more economical
embeddings also exist.

For any commutative ring $A$, let $E_{k,n-1}(A)$ be the $(n-1)$-step
\emph{free nilpotent algebra} of noncommutative polynomials in
$x_1,\ldots,x_k$ with coefficients in $A$, and with the relations that
all monomials of degree $n$ vanish.  We can identify
\[ E_{k,n-1}(A) \cong A^d \qquad d = d(k,n) = \frac{k^n-1}{k-1} \]
by using monomials of degree at most $n-1$ in shortlex order as a free
basis of $E_{k,n-1}(A)$.  Also, the given value of $d$ is the number of
nonzero monomials.  Since $E_{k,n-1}(A)$ acts on $E_{k,n-1}(A) \cong A^d$
by left multiplication, it has a faithful representation
\[ \rho_A:E_{k,n-1}(A) \into \M(d,A) \]
called the \emph{regular representation}.  Using the given ordering on
monomials, every nonconstant monomial $w \in E_{k,n-1}(A)$ is represented
by a strictly upper triangular matrix $\rho_A(w)$ whose entries are all 0
or 1.  We are interested in the special case $A = \C[[\eps]]/\eps^n$:
\begin{eq}{e:epsreg} \begin{gathered}
\rho_{\C[[\eps]]/\eps^n}:E_{k,n-1}(\C[[\eps]]/\eps^n)
    \into \M(d,\C[[\eps]]/\eps^n) \\
    \rho_\eps \defeq \rho_{\C[[\eps]]/\eps^n}.
\end{gathered} \end{eq}

It is a standard fact \cite[Sec.~6]{Khukhro:formula} \cite[Ch.~5]{MKS:groups}
that the free nilpotent group $N_{k,n-1}$ with generators $g_1,\ldots,g_k$
has an embedding
\[ \sigma:N_{k,n-1} \into E_{k,n-1}(\C) \qquad \sigma(g_j) = \exp(x_j). \]
We can modify this embedding as
\[ \sigma_\eps:N_{k,n-1} \into E_{k,n-1}(\C[[\eps]]/\eps^n)
    \qquad \sigma_\eps(g_j) = \exp(\eps x_j). \]
Composing with the regular representation $\rho_\eps$, each generator
$\rho_\eps(\sigma_\eps(g_j))$ is a unitriangular matrix such that $\eps$
divides all entries above the diagonal.  Thus the composition is an
injective representation
\[ \rho_\eps \circ \sigma_\eps:N_{k,n-1} \into \SL(d,\C)[[\eps]]_0/\eps^n. \]

We also claim that the group generators $\rho_\eps(\sigma((g_j))$ are all
conjugate in $\SL(d,\C)[[\eps]]/\eps^n$.  To see this, we consider the
action of $\SL(k,\C)$ on $E_{k,n-1}(\C[[\eps]]/\eps^n)$ given by applying
the linear maps in $\SL(k,\C)$ to the generators $x_1,\ldots,x_k$ and then
extending them to algebra automorphisms.  Again using \eqref{e:epsreg},
this action yields an embedding
\[ \tau:\SL(k,\C) \into \SL(d,\C)[[\eps]]/\eps^n. \]
Now choose $f \in \SL(k,\C)$ such that $f(x_1) = x_j$ for some $j$.
Since $\tau$ acts by algebra automorphisms, it follows that
\[ \rho_\eps(x_1)^{\tau(f)} = \rho_\eps(x_j) \qquad
    \rho_\eps(\sigma_\eps(g_1))^{\tau(f)} = \rho_\eps(\sigma_\eps(g_j)). \]
Thus the matrices $\rho(\sigma(g_j))$ are all conjugate in
$\SL(d,\C)[[\eps]]/\eps^n$, which tells us that
\[ \nil(\omega) \le n \quad \implies \quad
    \ccan_{\SL(d,\C)}(\omega) \le n \]
for the given value of $d$, depending on $n$ and $k$.  Since
$\ccan_{\SL(d,\C)}(\omega)$ is monotonic in $d$, the limit statement
\eqref{e:cnlim} follows.
\end{proof}

A step beyond \equ{e:can}, we can also consider the first nontrivial term
$\omega^{(n)}$ in the power series expansion of $\omega$ as a function
on conjugate elements of $L$.  More precisely, by \Lem{l:gleps0},
we can refashion $\omega$ as a map
\[ \omega_{\log}:L[[\eps]]_0^k \to L[[\eps]]_0. \]
If $\ccan_G(\omega) = n$, then
\begin{multline*}
\omega_{\log}(\eps x^{u_1},\eps x^{u_2},\ldots,\eps x^{u_k}) \\
    = 1 + \eps^n \omega^{(n)}(x^{u_1},x^{u_2},\ldots,x^{u_k})
    \in L[[\eps]]_0/\eps^{n+1}.
\end{multline*}
here $\omega^{(n)}$ is a partial map from $L^n$ to $L$ which is defined
for conjugate inputs, and such that
\begin{eq}{e:ucc} \omega^{(n)}(x^{u_1},x^{u_2},\ldots,x^{u_k}) \ne 0 \end{eq}
for some $x \in L$ and some $u_1,u_2,\ldots,u_k \in G$.  More sharply,
if \eqref{e:ucc} holds for every $x \in L \setminus \{0\}$ and for some
$u_1,u_2,\ldots,u_k \in G$, then we say that $\omega$ has \emph{uniform
conjugate cancellation} in $G$, or for short that $\omega$ is UCC in $G$.

\Alg{a:sb} implicitly uses the fact that every $\omega$ is UCC in $\SU(2)$,
which holds because all nonzero adjoint orbits in $\su(2)$ are proportional.

Note also that if $G$ factors according to \equ{e:gprod},
then $\omega$ is UCC in $G$ if and only if:
\begin{enumerate}
\item $\omega$ is UCC in each $G_j$, and
\item $\ccan_{G_j}(\omega)$ is equal for all $j$.
\end{enumerate}

The following lemma yields a simpler algorithm for \Thm{th:semisimple}
when $G$ is compact; in particular, a simpler algorithm for \Thm{th:qudit}
than in the most general case.

\begin{lemma} If $G$ is a compact, semisimple Lie group, then each Elkasapy
word $\omega_n$ as defined in \Sec{s:elkasapy} is UCC in $G$.
\label{l:elkucc} \end{lemma}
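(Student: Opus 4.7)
My plan is a three-stage reduction. First, by the remark following \equ{e:canmin}, uniform conjugate cancellation in a compact semisimple $G$ is equivalent to the conjunction of (i) uniform conjugate cancellation in each simple factor $G_j$ of $G_{\min}$ and (ii) constancy of $\ccan_{G_j}(\omega_n)$ in $j$. It therefore suffices to treat $G$ compact simple and to establish both that $\ccan_G(\omega_n) = \nil(\omega_n)$ and that uniform conjugate cancellation holds.

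For the value of $\ccan$, I would exploit the fact that every compact simple Lie algebra $L$ contains an $\su(2)$ subalgebra---spanned by $iH_\alpha$ together with the real and imaginary parts of $E_\alpha - E_{-\alpha}$ for any root $\alpha$ of the complexification---so that $\SU(2)$ or $\SO(3)$ embeds in $G$. For any subgroup $H \subseteq G$, any $H$-conjugate tuple is automatically $G$-conjugate, so the test defining $\ccan_G$ covers a superset of the tuples tested by $\ccan_H$, giving $\ccan_G(\omega) \le \ccan_H(\omega)$. Taking $H$ to be the embedded $\SU(2)$ or $\SO(3)$, the inequality $\ccan_G(\omega_n) \le \ccan_{\SU(2)}(\omega_n) = \nil(\omega_n)$ combines with $\nil(\omega_n) \le \ccan_G(\omega_n)$ from \Thm{th:cannil} to yield $\ccan_G(\omega_n) = \nil(\omega_n)$ uniformly in $G$. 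Here the equality $\ccan_{\SU(2)}(\omega_n) = \nil(\omega_n)$ is the defining property of the Elkasapy words from \Sec{s:elkasapy}.

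The third and hardest step is uniform conjugate cancellation in each compact simple $G$. Set $n = \nil(\omega_n)$ and let $x \ne 0 \in L$. The plan is to pick an $\su(2)$-subalgebra $\mathfrak{s} \subseteq L$ that meets the adjoint orbit of $x$, \ie, to find $v \in G$ with $x^v \in \mathfrak{s}$. Then the uniform conjugate cancellation already available in $\SU(2)$---automatic because the non-zero adjoint orbits in $\su(2)$ are concentric spheres and $D^n\omega_n$ is homogeneous of degree $n$ in the pair $(y_1,y_2)$---produces $u_1', u_2'$ in the associated $\SU(2) \subseteq G$ with $D^n\omega_n((x^v)^{u_1'},(x^v)^{u_2'}) \ne 0$, and setting $u_i = u_i' v \in G$ finishes the argument. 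The main obstacle is showing that every non-zero adjoint orbit in $L$ meets some $\su(2)$-subalgebra: this fails for generic $x$ in a Cartan subalgebra if one is restricted to the root-generated $\su(2)_\alpha$'s, since $x$ need not be Weyl-conjugate to a multiple of any coroot $H_\alpha$. I would resolve this either by drawing on the richer classification of $\su(2)$-triples in compact simple Lie algebras (including Kostant's principal $\su(2)$ and its conjugates), or by a $G$-equivariant analyticity argument: the locus of $x$ for which $D^n\omega_n$ vanishes identically on the orbit of $x$ squared is closed, $G$-invariant, and scale-invariant, and the specific nested-commutator shape of the Elkasapy words should force this locus to reduce to $\{0\}$.
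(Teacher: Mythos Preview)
Your first two stages are sound: the reduction to simple factors via the remark after \equ{e:canmin} is correct, and your computation of $\ccan_G(\omega_n)$ via an embedded $\SU(2)$ (using $\ccan_G \le \ccan_H$ for subgroups $H$, together with \Lem{l:nilfib} and \Thm{th:cannil}) is clean and valid.

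The third stage has a genuine gap. Your plan requires, for each nonzero $x \in L$, an $\su(2)$-subalgebra $\mathfrak{s}$ meeting the adjoint orbit of $x$. This is false already in $\su(3)$: take $x = i\,\operatorname{diag}(2,-1,-1)$. Any $\su(2) \hookrightarrow \su(3)$ acts on $\C^3$ as either $V_1 \oplus V_0$ or $V_2$, so the image of a Cartan generator has eigenvalue multiset proportional to $\{1,-1,0\}$ in either case; hence every semisimple element of any $\mathfrak{s}$ has $0$ as an eigenvalue on $\C^3$, while $x$ does not. Your first fallback (a richer supply of $\su(2)$-triples) cannot rescue this, since the obstruction is spectral and survives passage to any embedding. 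Your second fallback (the vanishing locus being closed, conic, and $G$-invariant) does not by itself force it to be $\{0\}$; there are many such cones in $L$, and you have not explained which feature of the Elkasapy recursion excludes them.

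The paper's proof supplies exactly the missing mechanism. Rather than forcing $x$ into an $\su(2)$, it conjugates $x$ into the Cartan (possible since $G$ is compact), picks a simple root $\alpha$ with $\alpha(x) \ne 0$ (possible since the Cartan matrix is nonsingular), and after rescaling writes $x_1 = -ix^u = \tfrac{1}{2}h_\alpha + y$ with $y$ centralizing the entire root $\sl_2$ spanned by $e_\alpha, f_\alpha, h_\alpha$. Choosing $v$ inside the corresponding compact $\SU(2)$ gives $x_2 = x_1^v$ with the same centralizer piece $y$. The Elkasapy recurrence $x_{m+2} = -i[x_{m+1},x_m]$ (coming from \equ{e:elk2} via \equ{e:glbracket}) then annihilates $y$ at the very next step, because $[y,y]=0$ and $[y,\cdot\,]$ kills the $\sl_2$-components. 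From $x_3$ onward one is genuinely inside the root $\su(2)$, and the explicit Pauli computation of \Lem{l:nilfib} shows $x_m \ne 0$ for all $m$. The point you were missing is that you do not need $x$ to lie in an $\su(2)$; it suffices that its component transverse to the centralizer of some root $\su(2)$ be nonzero, since the nested-commutator structure of $\omega_n$ kills the centralizer part after one iteration.
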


\begin{remark} When $n \ge 5$, $\omega_n$ is not UCC in $\SL(2,\R)$.
\end{remark}

\begin{proof} \Thm{th:elkasapy} and \Lem{l:nilfib} tell us that
\[ \ccan_G(\omega_n) = f_n, \]
the $n$th Fibonacci number.  Given $z \in L \setminus \{0\}$, we will find
$u,v \in G$ such that
\[ \omega^{(f_n)}_n(z^u,z^{vu}) \ne 0. \]
We will use the complexifications $G_\C$ and $L_\C$ of $G$ and $L$, where
(among other things) we can multiply elements of $L$ by $i$.

The argument uses the proof of \Lem{l:nilfib}.  Following that
proof, let
\begin{align*}
\omega_1(g,h) = g &= \exp(\eps iz_1 + O(\eps^2)) \\
\omega_2(g,h) = h &= \exp(\eps iz_2 + O(\eps^2)),
\end{align*}
where
\[ z_1,z_2 \in iL \qquad z_1 = -iz^u \qquad z_2 = -iz^{vu} = z_1^v. \]
The recurrence \eqref{e:elk2} and \equ{e:glbracket} tell us that
\begin{eq}{e:zelk} \begin{aligned}
\omega_n(g,h) &= \exp(i\eps^{f_n} z_n + O(\eps^{f_n+1}))\\
    z_{n+2} &= -i[z_{n+1},z_n]
\end{aligned} \end{eq}
for $n \ge 3$.  We will show that if $g, h \in J_\eps$ are suitably chosen,
then each $z_n$ is nonzero and thus that each $\omega_n$ has uniform
conjugate cancellation.

We use the Serre relations \cite[Sec.~4.8]{Varadarajan:gtm} for $L_\C$.
The Lie algebra $L_\C$ has some (complex) rank $r > 0$ and a Cartan matrix
$A$, which is a nonsingular integer $r \times r$ matrix such that $A_{j,k}
\ge 0$ and $A_{j,j} = 2$ for all $j$ and $k$.  Then $L_\C$ is generated
by elements $h_j$, $x_j$ and $y_j$ with $1 \le j \le r$, subject to the
relations
\begin{align*}
[h_j,h_k] &= 0 & [x_j,y_j] &= h_j \\
[h_j,x_k] &= A_{j,k} x_k & [h_j,y_k] &= -A_{j,k} y_k,
\end{align*}
and some other relations that we will not need.  We can define an abstract
conjugate-linear involution $*$ on $L_\C$ (generalizing the Hermitian
adjoint of matrices) by the formulas
\[ h_j^* = h_j \qquad x_j^* = y_j \qquad y_j^* = x_j. \]
The Lie algebra $L$ is then the anti-self-adjoint subspace of $L_\C$.

Since $z \in L \setminus \{0\}$, and since $G$ is compact, $z$ is a
diagonalizable element of $L$ with an imaginary spectrum.  Thus, we can
find $u \in G$ to place $z_1 = -iz^u$ in the real linear span of the
elements $h_j$.  Since $A$ is nonsingular and $z$ is nonzero, there is
a $j$ such that
\[ [z_1,x_j] = a x_j \qquad [z_1,y_j] = -a y_j \qquad a \ne 0. \]
Without loss of generality, we can assume that $j=1$, since we can permute
the Serre generators; and $a=1$, since we can rescale $z$ and $z_1$
in tandem.  We can thus write
\[ z_1 = \frac{h_1}2 + z_0 \qquad [z_0,x_1] = [z_0,y_1] = 0. \]
The elements $x_1$, $y_1$, and $h_1$ span a Lie algebra isomorphic to
$\sl(2,\C)$ in $L_\C$.   We explicitly describe such an isomorphism with
a Lie algebra representation $\rho$ of these three elements of $L_\C$:
\begin{align*}
\rho(x_1) &= \begin{bmatrix} 0 & 1 \\ 0 & 0 \end{bmatrix} = \frac{X+iY}2 \\
\rho(y_1) &= \begin{bmatrix} 0 & 0 \\ 1 & 0 \end{bmatrix} = \frac{X-iY}2 \\
\rho(h_1) &= \begin{bmatrix} 1 & 0 \\ 0 & -1 \end{bmatrix} = Z
\end{align*}
Since $z_0$ is linearly independent from $x_1$, $y_1$, and $h_1$ and commutes
with them, we can extend $\rho$ to $z_0$ by setting $\rho(z_0) = 0$.  We set
\[ v = \exp\bigl(\frac{\pi i}4(x_1+y_1)\bigr), \]
and we can say both that $v \in G$ and that $\rho(v) \in \PSU(2)$ by
exponentiating matrices.  We obtain
\begin{align*}
\rho(v) =  \frac{1}{\sqrt{2}} \begin{bmatrix} 1 & i
    \\ i & 1 \end{bmatrix} \qquad
\rho(h_1^v) = \begin{bmatrix} 0 & -i \\ i & 0 \end{bmatrix}
    = Y.
\end{align*}
If we let $\omega_2 = \omega_1^v$, then
\begin{align*}
z_1 &= \frac{h_1}2 + z_0 & \rho(z_1) = Z \\
z_2 &= z_1^v = \frac{i(y_1-x_1)}2 + z_0 & \rho(z_2) = Y
\end{align*}
Using \equ{e:zelk} and induction, we obtain that
\[ \rho(z_n) = \begin{cases}
Z/2 & n \equiv 1 \pmod 3 \\
Y/2 & n \equiv 2 \pmod 3 \\
X/2 & n \equiv 0 \pmod 3
\end{cases} \]
for all $n$, and the $z_0$ term disappears from $z_n$ when $n \ge 3$.
We thus learn that
\[ \omega^{(f_n)}(z^u,z^{vu}) = iz_n \ne 0 \]
for all $n$, as desired.
\end{proof}

\section{Proof of \Thm{th:semisimple}}
\label{s:proof}

In this section, we generalize and extend Algorithms~\ref{a:sb}
and \ref{a:zb} to any semisimple Lie group $G$, in order to prove
\Thm{th:semisimple}.  The generalization must address these four
complications:
\begin{enumerate}
\item Especially if $G$ is not compact, then we need an initial stage of
long-distance golf to get within a bounded distance of a target element
$g \in G$.
\item We need more complicated zigzags when $G$ is any simple Lie group
other than $\SU(2)$ or $\SO(3)$, because no individual conjugacy class
points in all directions.
\item The conjugate cancellation degree of a word $\omega$ might not be
constant on different conjugacy classes, which complicates the construction
of roughly exponential steps.
\item If $G$ is semisimple but not simple, then a word $\omega$ could
have different cancellation degrees in different simple factors, which
also complicates the roughly exponential steps.
\end{enumerate}

\subsection{Long-distance golf}
\label{ss:long}

Most of the methods in this subsection are important when the Lie group
$G$ is not compact.  If so, then the distance $d_G(g,1)$ of a target element
$g \in G$ can be arbitrarily large, while the gate set $A$ is finite
and is thus a bounded distance from 1.  Thus if $d_G(w,g) = O(1)$ for an
$A$-word $w$, then $\len(w) = \Omega(d_G(g,1))$ by the triangle inequality.
\Thm{th:semisimple} claims (among other things) that we can choose such a $w$
with $\len(w) = O(R)$ when $d_G(g,1) < R$, so that the length is optimal up
to a constant factor.  We call the resulting algorithm \emph{long-distance
golf}, referring to the situation in golf where the target hole is much
further away than the longest-distance golf stroke.

Our version of long-distance golf in $G$ rests on two ideas.  The first idea
is to simplify the geometry from $G$ to $P \times E$ using \Thm{th:pkbilip}.
The second idea is to shorten a geodesic $\gamma$ in $P \times E$ with a
step $s$ of distance $\Theta(1)$ and word length $O(1)$ that approximately
follows $\gamma$.

We use the maps
\[ G \stackrel{\pi_Q}{\longto} Q
    \stackrel{\mu_Q^{-1}}{\longto} P \times E \]
to pass from distances in $G$ to distances in $P \times E$.  We can
abbreviate
\begin{align*}
d_{P \times E}(g,h) \defeq d_{P \times E}(\mu_Q^{-1}(\pi_Q(g)),
    \mu_Q^{-1}(\pi_Q(h))) \\
\end{align*}
and thus interpret the metric on $P \times E$ as a pseudo-metric on $G$.
We can calculate the metric $d_{P \times E}$ using \Prop{p:pdist} and
\equ{e:pythag}.

\begin{lemma} Given $w,g \in G$ with $d_{P \times E}(g,w) \ge 2$,
there is an $A$-word $s$ such that $\len(s) = O(1)$ and
\[ d_{P \times E}(g,ws) \le d_{P \times E}(g,w) - 1. \]
\label{l:longstep} \end{lemma}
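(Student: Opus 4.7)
The plan is to reduce, by left-invariance, to the case $w=1$, then to exhibit an explicit point on the geodesic from $1$ to $g$ in $P \times E$ at distance exactly $2$, and finally to approximate it by a short $A$-word using dense generation plus compactness.

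First, I would check that the pseudometric $d_{P \times E}$ on $G$ is left-invariant. The left $G$-action on $G/C$ is by isometries because it commutes with the right $C$-action used to form the quotient, and this translates back to invariance of the lifted pseudometric on $G$. Translating both arguments by $w^{-1}$ on the left reduces the claim to the following: for $D \defeq d_{P \times E}(1,g) \ge 2$, produce an $A$-word $s$ of bounded length with $d_{P \times E}(s,g) \le D-1$.

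Next, I would write the geodesic from $(1,0)$ to $\mu_C^{-1}(\pi_C(g)) = (p,e)$ in $P\times E$ explicitly. By \Sec{ss:metrics}, exponentials of elements of $L_P$ give geodesics in $P$, so writing $p = \exp(X)$ with $X \in L_P$, the path $t \mapsto (\exp(tX), te)$ is a geodesic of length $D$ (using \equ{e:pythag}). Set $t_* \defeq 2/D \in (0,1]$ and let $h \in G$ be any lift of the geodesic point at parameter $t_*$, for instance $h = \exp(t_* X)\cdot \tilde e$, where $\tilde e$ is the image of $t_*e$ under the splitting $K \cong C \times E$. By construction $d_{P \times E}(1,h) = 2$ and $d_{P \times E}(h,g) = D-2$.

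Finally, I would approximate $h$ by an $A$-word. The closed ball $B \defeq \overline B_{G/C}(1,2)$ is compact by Hopf--Rinow since $G/C$ is a complete Riemannian manifold, and dense generation of $G$ by $A$ together with continuity of $\pi_C$ implies that $\bigcup_\ell \pi_C(A^\ell)$ is dense in $G/C$. A standard open-cover argument then yields a universal constant $L_0$, depending only on $G$, the metric, and $A$, such that every point of $B$ lies within $d_{G/C}$-distance $1/2$ of some $\pi_C(A^{\ell})$ with $\ell \le L_0$. Applying this to $\pi_C(h)$ gives the desired $s$ with $\len(s) \le L_0$ and $d_{P \times E}(s,h) \le 1/2$, after which the triangle inequality yields
\[
d_{P \times E}(g,s) \le d_{P \times E}(g,h)+d_{P \times E}(h,s) \le (D-2)+\tfrac12 \le D-1.
\]
The main thing to verify carefully is the uniform compactness/covering step giving $L_0$ independent of $g$; the rest is bookkeeping around the explicit geodesic and the hypothesis $D \ge 2$, which is exactly what guarantees $t_* \le 1$ so that $h$ lies on the geodesic and not past $g$.
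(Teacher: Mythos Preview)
Your overall strategy---choose a point $h$ on the $(P\times E)$-geodesic at distance $2$ from $w$, approximate it by a short $A$-word using compactness of a fixed ball, then apply the triangle inequality---matches the paper's. The genuine gap is your reduction by left-invariance: the pseudometric $d_{P\times E}$ on $G$ is \emph{not} left-$G$-invariant in general. Your justification establishes that $d_{G/C}$ is $G$-invariant, which is correct, but $d_{P\times E}$ is pulled back through the further map $\mu_C^{-1}$, and by \Thm{th:pkbilip} this map is only bi-Lipschitz, not isometric. Concretely, the induced $G$-action on $P\times E$ has the form $g\cdot(p,e)=(g\cdot p,\;k'(g,p)\cdot e)$, where $k'(g,p)\in K$ is determined by the Cartan decomposition $g\,p = p'\,k'(g,p)$ in $G$. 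When $E$ is nontrivial (for instance $G=\widetilde{\SL(2,\R)}$), the image of $k'(g,p)$ in $E$ genuinely varies with $p$, so the product metric is not preserved. (If $K=C$, so that $E$ is trivial---in particular for all matrix Lie groups---your reduction is valid, since then $d_{P\times E}=d_P$, which is $G$-invariant.)

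The paper repairs exactly this step by routing the translation through $d_G$, which \emph{is} left-invariant: from $d_{P\times E}(h,w)=2$ one gets $d_G(h,w)\le R$ by the bi-Lipschitz comparison, hence $d_G(w^{-1}h,1)\le R$; dense generation on the fixed $d_G$-ball of radius $R$ then yields $s$ of bounded length with $d_G(s,w^{-1}h)\le r$, so $d_G(ws,h)\le r$, and one returns to $d_{P\times E}(ws,h)\le 1$ by the reverse comparison. A minor echo of the same issue appears in your step~3, where you approximate in $d_{G/C}$ but conclude $d_{P\times E}(s,h)\le \tfrac12$; this only follows after inserting the bi-Lipschitz constant from \Thm{th:pkbilip}.
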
 \eatline

\begin{remark} A version of \Lem{l:longstep} with the constants
1 and 2 replaced by $R$ and $2R$ for some $R > 0$ follows solely from
the fact that $\mu_Q^{-1} \circ \pi_Q$ is a quasiisometry.
\end{remark}

\begin{proof} We can choose $h \in G$ which is on the connecting $(P \times
E)$-geodesic at a distance of 2 from $w$ in $P \times E$;
\[ d_{P \times E}(h,w) = 2 \qquad
    d_{P \times E}(g,h) = d_{P \times E}(g,w) - 2. \]
Given that $d_Q(h,w)$ is the minimum distance between the coset $hC$
and $w \in G$ and given \Cor{c:pebilip}, we can choose $h$ such that
\[ d_G(h,w) = d_Q(h,w) = O(1). \]
Since the gate set $A$ densely generates $G$, we can choose $s$
with $\len(s) = O(1)$ and
\[ d_G(h,ws) = d_G(w^{-1}h,s) < \eps \]
for any $\eps > 0$.  In particular, since $\pi_Q$ is 1-Lipschitz
while $\mu_Q^{-1}$ is Lipschitz by \Cor{c:pebilip},
we can choose $\eps$ small enough to guarantee that
\[ d_{P \times E}(h,ws) < 1. \]
Thus
\[ d_{P \times E}(g,ws) < d_{P \times E}(g,w) - 1 \]
by the triangle inequality, as desired.
\end{proof}

\begin{figure}
\begin{ctp}[thick]
\draw (0,0) circle (4);
\draw (-1.6,1.1) node[rotate=10] {\includegraphics[height=.5cm]{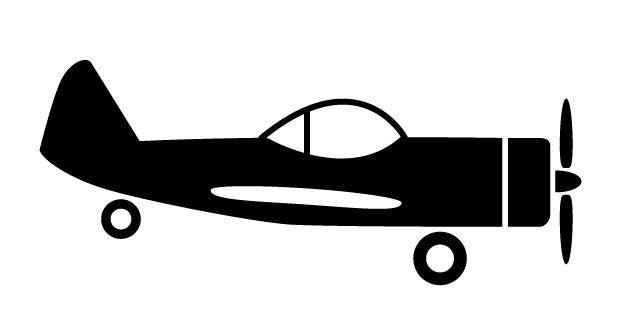}};
\draw (0.8,2.4) node {\includegraphics[height=1cm]{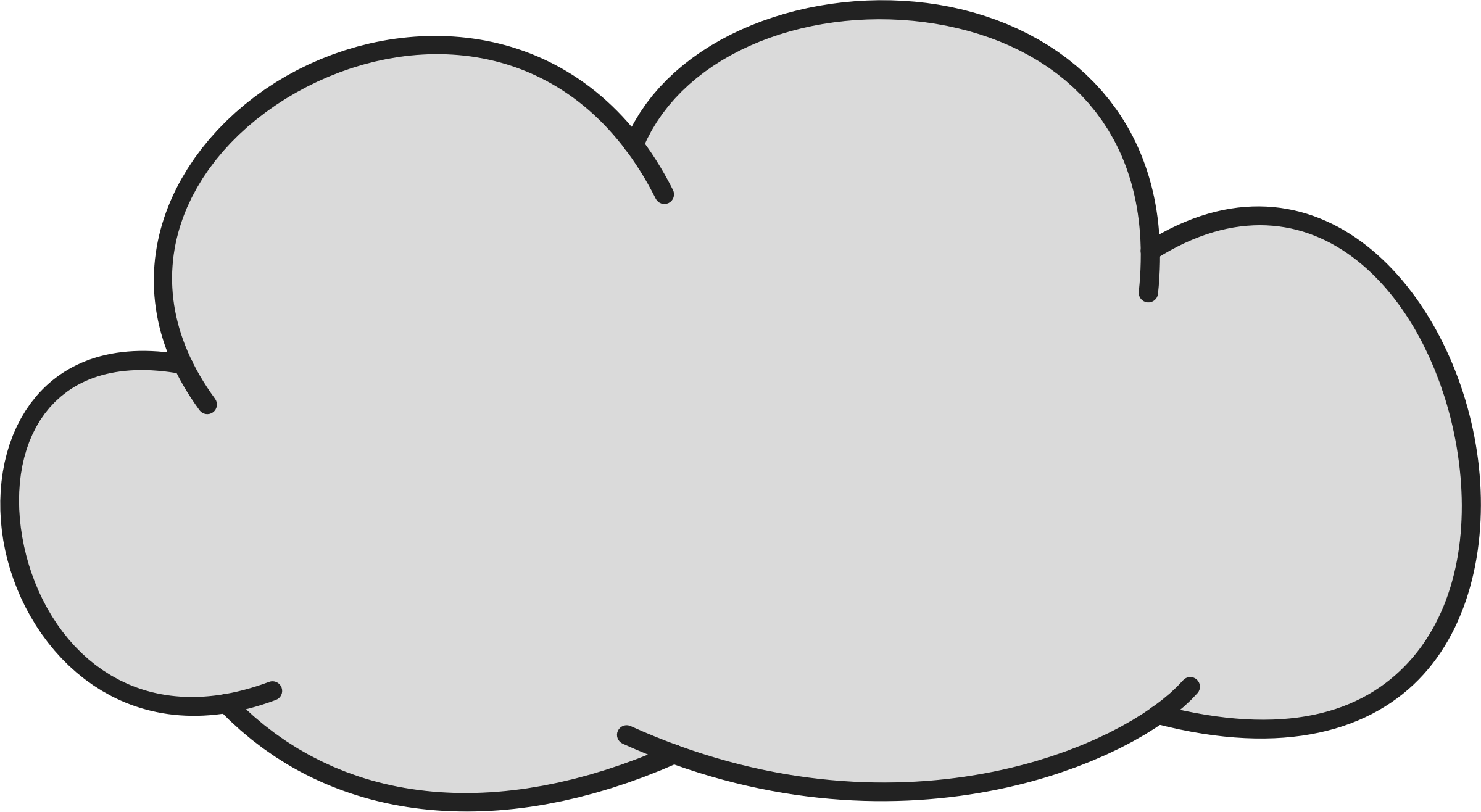}};
\draw (-2.34,-1.8) node {\includegraphics[height=.55cm]{golfer.png}};
\draw[->,gray] (-2,-2) arc (-9.5:9.5:2.432);
\draw[->,gray] (-2,-2) arc (110.9:97.2:3.46);
\draw[->,gray] (-2,-2) arc (81.9:61.3:1.768);
\draw[darkred] (-2,-2) arc (-31.0:-22.2:5.832);
\draw[darkred] (-1.6,-1.2) arc (117.9:115.2:19.236);
\draw[darkred] (-0.8,-0.8) arc (94.6:85.4:10.032);
\draw[darkred] (0.8,-0.8) arc (152.4:140.2:6.816);
\draw[->,darkred] (1.6,0.4) arc (-91.5:-88.5:15.004);
\draw[darkblue,dashed] (-2.0,-2.0) arc (128.3:101.2:12.204);
\fill (-2.0,-2.0) circle (0.06) node[below,inner sep=3pt] {\small 1};
\fill (-1.6,-1.2) circle (0.06) node[above,inner sep=3pt] {$w_1$};
\fill (-0.8,-0.8) circle (0.06) node[above,inner sep=3pt] {$w_2$};
\fill (0.8,-0.8) circle (0.06) node[below,inner sep=3.5pt] {$w_3$};
\fill (1.6,0.4) circle (0.06) node[above,inner sep=3pt] {$w_4$};
\fill (3.2,0.4) circle (0.06) node[right,inner sep=3pt] {$g$};
\fill[darkred] (3.2,0.95) -- (3,0.87) -- (3.2,0.79);
\draw (3.2,0.4) -- (3.2,0.975);
\draw (1.4,-2.4) node {\large $\H^2$};
\tikzset{scale=4};
\end{ctp}
\caption{Greedy long-distance golf in $Q \cong P \times E$, here $\H^2$.}
\label{f:lgolf}
\end{figure}

\Lem{l:longstep} yields a greedy algorithm, \Alg{a:l} below, to find an
$A$-word $w$ such that $d_G(w,g) = O(1)$; see \Fig{f:lgolf}.  The resulting
word $w$ has length $\len(w) = O(d_G(g,1))$ as $d_G(g,1) \to \infty$,
which is optimal up to a constant factor.  We finish the algorithm with
an exhaustive search to improve the upper bound on $d_G(g,w)$ from a
happenstance constant to a desired target value.

\begin{algorithm}{L} The input is a finite set
\[ A = A^{-1} \subseteq G \]
that densely generates a semisimple Lie group $G$, a target element $g \in
G$, a rational target precision $t \ge 0$, and an integer constant $b > 0$
(depending only on $A$ and $G$) that must be chosen favorably.  The output
is an $A$-word $w$ such that $d_G(g,w) < 2^{-t}$ and $\len(w) = O(d_G(g,1))$.
\begin{enumerate}
\item Set $w := 1$.
\item While $d_{P \times E}(g,w) \ge 2$:
\begin{enumerate}
\item Search for an $A$-word $s$ with $\len(s) \le b$ such that
\[ d_{P \times E}(g,ws) < d_{P \times E}(g,w) - 1. \]
\item Set $w := ws$.
\end{enumerate}
\item Search for an $A$-word $s$ with $\len(s) \le b$ and with the aid of
\equ{e:gnest}, such that $g^{-1}ws$ satisfies
\[ d_G(g,ws) = d_G(g^{-1}ws,1) \le 2\norm{g^{-1}ws-1}_2 \le 2^{-t}. \]
Return $ws$.
\end{enumerate}
\label{a:l} \end{algorithm}

The main part of \Alg{a:l}, namely the loop in step 2, is justified by
\Lem{l:longstep}. By \Cor{c:pebilip} and \equ{e:gqdist}, the iteration
in step 2 stops when
\[ d_{P \times E}(g,w) < 2 \quad \implies \quad d_G(g,w) = O(1).\]
After that, step 3 is an exhaustive search that uses \equ{e:gnest}, as the
algorithm says.  This becomes expensive as $t \to \infty$, but \Alg{a:l}
will only be used for a bounded value of $t$.  We will still use \Alg{a:l}
when $G$ is compact for the sake of step 3, even though in this case step
2 is trivial.

We can analyze \Alg{a:l} using similar reasoning as the analysis of the
algorithms in \Sec{s:su2}, except that there is no particular advantage
to using compressed words.  The algorithm takes $O(R)$ steps.  We must
be more careful with numerical precision because the metric on $G$ is not
bi-invariant when $G$ is not compact.  Because $d_G(s,1) = O(1)$ for each
step $s$, the first relation in \eqref{e:bmul} tells us that each step of
\Alg{a:l} sacrifices $O(1)$ bits of precision.  Because $\len(s) = O(1)$
as well, it suffices to give each gate $O(R)$ bits of precision beyond the
$t+O(\log(t))$ bits of precision that \Alg{a:z} will need, for a total
precision of $t+O(\log(t)+R)$ bits.  Since\Alg{a:z} will use $w^{-1}g$
as its target, the initial target $g$ needs $t+O(R)$ bits of precision.
The total time complexity of \Alg{a:l} is then $\tO(R(R+t))$.  We can
simplify this to $\tO((R+t)^2)$, because the time complexity formula for
\Alg{a:z} subsumes an extra $\tO(t^2)$ term.

If the Lie group $G$ and the gate set $A$ are both algebraic, then
we can improve the time complexity with a divide-and-conquer algorithm
that is typical for fast arithmetic.  Note that $E$ is trivial
and can be omitted in this case.

\begin{algorithm}{LD} The input is a finite set
\[ A = A^{-1} \subseteq G \]
of algebraic gates that densely generate a semisimple algebraic group $G$,
a target element $g \in G$, an optional rational target precision $t \ge 0$,
and an integer constant $b > 0$ (depending only on $A$ and $G$) that must
be chosen favorably.  The output is an $A$-word $w$ such that $d_G(g,w)
< 2^{-t}$ and $\len(w) = O(d_G(g,1))$.
\begin{itemize}
\item Calculate $p = \sqrt{gg^*}$.
\item If $d_P(g,1) > 4$, recursively call this algorithm with the target
$g' := \sqrt{p}$ and with no target precision to obtain a word $w_1$.
Recursively call this algorithm a second time with the target $g' :=
w_1^{-1}g$ to obtain a word $w_2$, and let $w := w_1w_2$.
\item If $d_P(g,1) \le 4$, search for an $A$-word $w$ with $\len(w) \le b$
such that $d_P(g,w) < 1$.
\item If the target precision $t$ is not present, return $w$.
\item If $t$ is present, search for an $A$-word $s$ with $\len(s) \le b$
and with the aid of \equ{e:gnest}, such that $g^{-1}ws$ satisfies
\[ d_G(g,ws) = d_G(g^{-1}ws,1) \le 2\norm{g^{-1}ws-1}_2 \le 2^{-t}. \]
Return $ws$.
\end{itemize}
\label{a:ld} \end{algorithm}

\begin{remark} It should be possible to generalize \Alg{a:ld} to any
semisimple $G$, assuming then that $A$ is a lift of an algebraic gate
set in $G_{\min}$.  However, the details are more complicated when $E$
is nontrivial, and it is arguably a less natural case.
\end{remark}

The recursion in \Alg{a:ld} has logarithmic depth.
Although $\len(w) = O(R)$, each word $w'$
at the $k$th-to-last stage of the recursion has length
\[ \len(w') \le 2^{-k}NR \]
for some constant $N$.  Thus we can apply \equ{e:bitlen} to conclude
that the total bit complexity of all of the values $(w')_G$ is $\tO(R)$.
\Alg{a:ld} also only needs $O(R)$ bits of precision of the target $g$,
and not the extra $t+O(\log(t))$ bits used later in \Alg{a:z}.  Similarly,
each target $g'$ at the $k$th-to-last stage of the recursion requires at
most $2^{-k}NR$ bits of precision for some constant $N$.  Thus the total
bit complexity of all of the numbers in the algorithm is $\tO(R)$, and
thanks to fast arithmetic the total time complexity is also $\tO(R)$.

\subsection{Locally surjective zigzags}
\label{ss:surject}

The algorithm in \Sec{ss:zgolf} uses the fact that
\[ d_{\SU(2)}(g,1) = d_{\SU(2)}(h,1) \]
for two elements $g,h \in \SU(2)$ if and only if they are conjugate.
Among semisimple Lie groups $G$, the ``if'' direction is only true when
$G$ is compact and thus has a bi-invariant metric, while the ``only if''
direction is only true when $G$ is $\SU(2)$ or $\SO(3)$.  Nonetheless,
using \Lem{l:zigzag} below, a generalized zigzag strategy works in any
semisimple Lie group $G$, with these changes:
\begin{enumerate}
\item We use $m$ zigs and $m$ zags to accurately reach a target element
$g \in G$, where $m$ is some integer that depends only on $G$.
\item A step $s$ might not be conjugate to its inverse $s^{-1}$.  We use
conjugates of both $s$ and $s^{-1}$ to make the zigzags.
\item If $G$ is semisimple but not simple, then a roughly exponential
step $s$ must be appropriately scaled in each simple factor of $G$.
\end{enumerate}

Algorithms~\ref{a:l} and \ref{a:ld} in \Sec{ss:long} produce a word $w$
such that
\[ d_G(w^{-1}g,1) = d_G(g,w) < 2^{-t} \]
for any target precision $t$.  The element $w^{-1}g \in G$ becomes a new
target for the short-distance \Alg{a:z}.  By setting $t$ to a high enough
value that depends only on $G$, we can assume that $g'$ lies in a ball
$B_G(2^{-t},1)$ that also embeds in $G_{\min}$.  Thus we can henceforth
assume that $G = G_{\min}$ and that $G$ factors according to \equ{e:gprod}.
Given
\[ g = (g_1,g_2,\ldots,g_q) \in G \qquad h = (h_1,h_2,\ldots,h_q) \in G \]
with $g_j,h_j \in G_j$, we define a family of pseudo-metrics (using the
left-invariant metric $d_{G_j}$ defined in \Sec{ss:metrics} for each factor
$G_j$) to address the third concern in the above list:
\[ d_{G,j}(g,h) \defeq d_{G_j}(g_j,h_j). \]

\newcommand{\hh}{\hat{h}}

We empower our more complicated zigzags with the following lemma.
To set up the lemma, let
\[ G = G_1 \times G_2 \times \cdots \times G_q \]
be a product of simple Lie groups as in \equ{e:gprod}, let $R,\eps > 0$,
and let $s \ne 1 \in G$ be a nontrivial element such that
\[ d_{G,j}(s,1) = \Theta(\eps) \]
for all $j$.  Let $m \ge 1$, and let
\begin{eq}{e:zzs} h(\vu,\vv) = s^{u_1} (s^{-1})^{v_1} s^{u_2}
    (s^{-1})^{v_2} \cdots s^{u_m} (s^{-1})^{v_m} \end{eq}
be a general product of $m$ zigzag pairs using conjugations of $s$ and
$s^{-1}$ by $u_j,v_j \in B_G(R,1)$.  Now fix $(\vu,\vv)$.  Using the change
of coordinates
\[ u'_j = u_j\exp(x_j) \qquad v'_j = v_j\exp(y_j), \]
define $\hh:B_L(\delta,0)^{2g} \to L$ by
\[ \hh(\vx,\vy) = \log(h(\vu',\vv')) \]
for $\delta > 0$ small enough that the logarithm is single-valued.  Then:

\begin{lemma} Let $G$, $R$, $\eps$, $s$, $m$, and $\hh$ be as above.
Then there is a value of $m$ (depending only on $G$)
such that if $\eps$ is small enough, then:
\begin{enumerate}
\item The function $h(\vu,\vv)$ has a bounded first derivative:
\[ \norm{dh(\vu,\vv)}_\infty = O(\eps) \]
for all $(\vu,\vv) \in B_G(2R,1)$.
\item There is a radius $r = \Omega(\eps)$ (depending only on $G$) and a
favorable point $(\vu,\vv) \in B_G(R/2,1)^{2m}$ (depending on $G$ and $s$)
and such that: For every $h_0 \in B_G(r,1)$, the equation
\[ \hh(\vx,\vy) = \log(h_0) \]
can be solved for $(\vx,\vy)$ using Newton's method initialized at
$(\vec0,\vec0)$. Moreover, the resulting solution for $(\vu',\vv')$ lies
in $B_G(R,1)^{2m}$.
\end{enumerate}
\label{l:zigzag} \end{lemma}

Conceptually, \Lem{l:zigzag} says that more complicated zigzags in a
general semisimple $G$ have the same numerical precision and stability
properties as the acute triangles in \Alg{a:zb}.  Likewise in the proof,
the pairs of steps $s^u$ and $(s^{-1})^v$ are at an acute angle of size
$O(\delta)$, where $\delta$ is small but independent of $\eps$.

In the lemma statement, although the system of equations for $(\vx,\vy)$
is underdetermined, Newton's method can be generalized to this case using
the Moore-Penrose pseudoinverse of $d\hh$ \cite{BenIsrael:system}.

We use the following sublemma in the proof \Lem{l:zigzag}.  The sublemma
statement involves the condition number $\kappa(f)$ of a linear map
$f:V \to W$ between two finite-dimensional real inner product spaces,
by definition the ratio of the largest and smallest singular values:
\[ \kappa(f) \defeq \frac{\sigma_{\max}(f)}{\sigma_{\min}(f)}. \]
Note that $\kappa(f) = \infty$ when $f$ does not have maximal rank.

\begin{lemma} Let $G$ be a simple algebraic group with Lie algebra $L$,
let $R > 0$, and let $m = \dim(G)$.  Given $z \in L$, let
\[ Z:\R^m \to L \qquad Z(\ve_j) = z^{g_j} \]
be the given linear map, where $(\ve_j)$ is the standard basis of
$\R^m$.  Then for all $z$ with unit norm $\norm{z}_L = 1$, there exists
\[ \vg = (g_1,g_2,\ldots,g_m) \in B_G(R,1)^m \]
such that $Z$ is invertible, and the condition number $\kappa(Z)$ is
bounded as a function of $z$.
\label{l:zzbasis} \end{lemma}

\begin{proof} Let
\[ S_L \defeq \{z \in L \st \norm{z}_L = 1 \} \]
be the unit sphere in $L$, and define the functions
\begin{align*}
f_z&:G^m \to \R & f_z(\vg) &\defeq \det(Z) \\
f&:S_L \to \R & f(z) &\defeq \max_{\vg \in B_G(R,1)^m} f_z(\vg).
\end{align*}
Since $G$ is simple, $L$ is irreducible as the adjoint representation of $G$.
Thus  $L$ is the span of all $z^g$ with $g \in G$ for any fixed $z \in S_L$.
Some subset $z^{g_1},z^{g_2},\ldots,z^{g_m}$ is then a basis of $L$, which
means that $f_z$ is not identically zero on its domain $G^m$.  Since $f_z$
is also real analytic, it is not identically zero on the domain $B_G(R,1)^m$;
and note also that $f_z$ is anti-symmetric as a function of its arguments.
Therefore the maximum value $f(z)$ of $f_z$ is positive for every $z \in
S_L$.  Since $S_L$ is compact and $f$ is continuous, we learn that $f(z)
= \det(Z)$ is bounded away from zero for the most favorable choice of $\vg
\in B_G(R,1)^m$ as a function of $z \in S_L$.

On the other side, each value $Z(\ve_j) = z^{g_j}$ is uniformly bounded
in norm when $z \in S_L$ and $g_j \in B_G(R,1)$.  Thus, $\norm{Z}_\infty =
\sigma_{\max}(Z)$ is also bounded.  Since $Z$ is a fixed-sized matrix, the
two inequalities together tell us that $\sigma_{\min}(Z)$ is bounded below
for the most favorable choice of $\vg \in B_G(R,1)^m$ as a function of $z$.
Thus the condition number $\kappa(Z)$ is bounded on $S_L$.
\end{proof}

\begin{proof}[Proof of \Lem{l:zigzag}] We first reduce to the case when $G$
is simple.  If $G$ is more generally semisimple, and if the lemma is true
for simple Lie groups, if we find solutions in each simple factor $G_j$,
then we can give them a common value of $m$ by padding \eqref{e:zzs} with
trivial factors of $ss^{-1}$ as needed.  The hypotheses of the lemma are
chosen so that we can combine these solutions to make a solution for the
product $G$.

Henceforth, we assume that $G$ is simple with Lie algebra $L$, and we
let $m = \dim(G)$ (which is not meant as an optimal value).  We let
\begin{eq}{e:sxeps} s = \exp(z) \qquad \norm{z}_L = \eps
    \qquad d_G(s,1) = \Theta(\eps). \end{eq}
Also let
\[ (\vu,\vv) \in B_G(R,1)^{2m} \qquad (\vx,\vy) \in L^{2m}
    \qquad \norm{x_j}_L,\norm{x_j}_L \le \delta. \]

We claim that the $k$th derivative of $\hh$ satisfies the norm bound
\begin{eq}{e:dkhh} d^k\hh(\vx,\vy) = O(\eps) \end{eq}
for each $k \ge 1$, if $\delta > 0$ (depending only on $G$ and $m$) is
small enough.  Here we interpret the $k$th derivative as a linear map
\[ d^k\hh(\vx,\vy):(L^{2m})^{\otimes k} \to L, \]
and then use the usual operator norm $\norm{\cdot}_\infty$ of a linear
map between two inner product spaces.

\Equ{e:dkhh} follows from the BCH formula \ref{e:bch} iterated $O(m)$ times.
We illustrate the reasoning with the simplest case $m=1$ so that we can
omit indices.  Given
\[ \qquad u' = u\exp(x) \qquad v' = v\exp(y) \qquad
    \norm{x}_L , \norm{y}_L \le \delta, \]
we obtain
\begin{align*} \hh(x,y) &= \log(s^{u'}(s^{-1})^{v'}) \\
    &= \log\bigl(\exp(z)^{u\exp(x)}\exp(-z)^{v\exp(y)}\bigr) \\
    &= \log\bigl(\exp(z^u)^{\exp(x^u)}\exp(-z^v)^{\exp(y^v)}\bigr) \\
    &= z^u+[x^u,z^u]-z^v-[y^v,z^v]-\frac{[z^u,z^v]}2
        + O(\eps\delta(\eps+\delta))
\end{align*}
as $\eps,\delta \to 0$. In context, the error term
$O(\eps\delta(\eps+\delta))$ has a stronger meaning than usual, because
the BCH formula is a power series with a radius of convergence, so we are
allowed to differentiate the error term.  If we differentiate with respect to
$(x,y)$ by taking $\delta \to 0$ and assume that $\eps$ is small, we obtain
\[ d\hh(0,0) = [dx,z]^u-[dy,z]^v + O(\eps^2) = \Theta(\eps). \]
If we differentiate $k$ times for any $k$, we obtain \equ{e:dkhh}, because
all terms in the power series for $\hh(x,y)$ that have $x$ or $y$ also
have at least one factor of $z$.  The same reasoning applies when $m >
1$, where will also use the more explicit first derivative expression
\begin{eq}{e:dh} d\hh(\vec0,\vec0) = \sum_{j=1}^m
    \left([dx_j,z]^{u_j} - [dy_j,z]^{v_j}\right) + O(\eps^2).
\end{eq}

Claim 1 is essentially a special case of \equ{e:dkhh}.
Since $G$ has a left-invariant metric, there is a left-invariant
isomorphism $T_g(G) \cong L$ for all $g \in G$.  Thus we can
interpret $dh$ as a linear map
\[ dh(\vu,\vv):L^{2m} \to L. \]
The definition of $\hh$ using $(\vu,\vv)$ then tells us that
\[ dh(\vu,\vv) = d\hh(\vec0,\vec0) \]
as linear maps, so they are both $O(\eps)$.

Claim 2 is more involved.  In outline, we use derivative estimates to
establish the implicit function theorem with a lower bound on the solution
radius.  The same bounds establish the convergence of Newton's method,
which is closely related.

As in \Lem{l:zzbasis}, let $S_L$ be the unit sphere in $L$ with respect
to the norm $\norm{\cdot}_L$, so that $z \in \eps S_L$.  Because $L$
is simple, there is a $y \in L$ such that $\norm{[y,z]}_L > 0$.  Moreover,
\[ f:S_L \to \R \qquad f(x) \defeq \max_{y_0 \in S_L} \norm{[y_0,z]}_L \]
is a continuous function of $z$, which implies that $f(z)$ is bounded away
from 0 on $\eps S_L$ as well as a bounded function.  We choose $y_0 \in S_L$
to maximize $\norm{[y_0,z]}_L$ and we let
\[ z' = \frac{[y_0,z]}{\norm{[y_0,z]}_L}. \]
We apply \Lem{l:zzbasis} with $z$ replaced by $z'$, and thus choose
\[ g_1,g_2,\ldots,g_m \in B_G(R,1) \]
to minimize the condition number $\kappa(Z)$ of the corresponding linear
map $Z$.  Since $z'$ is scaled to eliminate $\eps$ and $Z$ is defined from
$z'$, we can say that $\kappa(Z) = \Theta(1)$ as $\eps \to 0$.

We want to relate $\kappa(Z)$ to $\kappa(d\hh)(\vec0,\vec0)$ for a favorable
choice of $(\vu,\vv)$, namely $(\vu,\vv) = (\vg,\vg)$.  For the moment,
we also consider a special case of the perturbation $(\vx,\vy)$, namely:
\begin{align*}
\vx &= \vec0 \\
\vy &= (a_1y_0,a_2y_0,\ldots,a_my_0) = \va y_0 \\
\va &= (a_1,a_2,\ldots,a_m) \in \R^m,
\end{align*}
We can let $Y_0 \subseteq L^{2m}$ be the span
of all vectors $(\vec0,\va y_0)$, so that
\[ \hh|_{Y_0} \defeq \hh(\vec0,\va y_0) \]
Then
\[ d\hh|_{Y_0}(\vec0) = \norm{[y_0,z]}_L \bigl(Z(d\va) + O(\eps)\bigr). \]
Since $\norm{[y_0,z]}_L = \Theta(\eps)$ and $\kappa(Z) = \Theta(1)$,
we learn that
\[ \kappa(d\hh|_{Y_0}) = \Theta(1) \qquad
    \sigma_{\max}(d\hh|_{Y_0}), \sigma_{\min}(d\hh|_{Y_0}) = \Theta(\eps). \]
To relate this to $d\hh$ itself, in general if $f:V \onto W$ is a
surjective linear transform which is still a surjection on a subspace $V'
\subseteq V$, then
\[ \sigma_{\min}(f) \ge \sigma_{\min}(f|_{V'}). \]
We can apply this in our situations with
\[f = d\hh(\vec0,\vec0) \qquad V = L^{2m} \qquad V' = Y_0, \]
to conclude (together with the upper bound in claim 1) that
\begin{align*}
\sigma_{\min}(d\hh(\vec0,\vec0)) &= \Omega(\eps) \\
\sigma_{\max}(d\hh(\vec0,\vec0)) &=
    \norm{d\hh(\vec0,\vec0)}_\infty = O(\eps) \\
\kappa(d\hh(\vec0,\vec0)) &= \Theta(1).
\end{align*}

Finally for claim 2, the three equations
\begin{align*}
\norm{d\hh(\vec0,\vec0)}_\infty &= \Theta(\eps) \\
\kappa(d\hh(\vec0,\vec0)) &= \Theta(1) \\
\norm{d^2\hh(\vx,\vy)}_\infty &= O(\eps)
\end{align*}
are exactly those that establish the implicit function theorem and the
convergence of Newton's method in a region $B_L(\delta,0)^{2m}$ with $\delta
= \Omega(1)$, and the image of this region under $\hh$ contains $B_G(r,1)$
with $r = \Omega(\eps)$.
\end{proof}

\subsection{Roughly exponential short steps}
\label{ss:reshort}

In this subsection, we generalize \Alg{a:sb} to the case of an arbitrary
semisimple Lie group $G$.  Our generalization of \Alg{a:sb} is simpler when
the higher commutator $\omega$ is UCC.  We focus first on the UCC special
case, which by \Lem{l:elkucc} includes the cases when $G$ is compact and
$\omega = \omega_n$ is an Elkasapy word.  In particular, it includes the
case $G = \SU(d)$ which is stated as \Thm{th:qudit}.

\begin{lemma} Let
\[ G = G_1 \times G_2 \times \cdots \times G_q \]
be a product of simple Lie groups as in \equ{e:gprod}, let $R > 0$, and
let $\omega(g,h)$ be a UCC word in $G$ with
\[ \ccan_G(\omega) = n \ge 2. \]
Let $\eps > 0$, and then choose $s \in G$ such that $d_j(s,1) = \Theta(\eps)$
for all $j$.  Then the range over all $u \in B_G(R,1)$ of the list of
distances
\[ (d_{G,1}(\omega(s,s^u),1),d_{G,2}(\omega(s,s^u),1),
    \ldots,d_{G,q}(\omega(s,s^u),1)) \]
includes a cube $[0,r]^q$ with $r = \Omega(\eps^n)$, in the limit
as $\eps \to 0$.  Moreover, $\omega(s,s^u)$ is $O(\eps^n)$-Lipschitz as
a function of $u$.
\label{l:rangeucc} \end{lemma}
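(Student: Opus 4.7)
The plan is to work componentwise in the product $G = G_1 \times \cdots \times G_q$. Since $\omega$ is a word in two variables and $s^u = (s_1^{u_1}, \ldots, s_q^{u_q})$, the $j$-th coordinate of $\omega(s, s^u)$ is $\omega(s_j, s_j^{u_j})$ and depends only on $u_j$. Writing $s_j = \exp(x_j)$ with $\eps_j = \norm{x_j}_L = \Theta(\eps)$ and $\tilde{x}_j = x_j/\eps_j$ on the unit sphere of $L_j$, \Equ{e:omegalin} gives
\[
\omega(s_j, s_j^{u_j}) = \exp\bigl(\eps_j^n\, D^n\omega(\tilde{x}_j, \tilde{x}_j^{u_j}) + O(\eps_j^{n+1})\bigr),
\]
which reduces both claims of the lemma to properties of the smooth map $u_j \mapsto D^n\omega(\tilde{x}_j, \tilde{x}_j^{u_j})$ on $B(r,1) \cap G_j$.

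The Lipschitz bound is immediate. The conjugation action $u_j \mapsto \tilde{x}_j^{u_j}$ is $O(1)$-Lipschitz on the bounded set $B(r,1) \cap G_j$, so $u_j \mapsto D^n\omega(\tilde{x}_j, \tilde{x}_j^{u_j})$ is $O(1)$-Lipschitz; multiplying by $\eps_j^n$ and composing with $\exp$ (which is $O(1)$-Lipschitz near $0$) yields the claimed $O(\eps^n)$-Lipschitz estimate.

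For the range claim, I would first establish a direction-uniform version of conjugate cancellation: for every unit vector $\tilde{x}_j \in L_j$, one can find $u_j^*(\tilde{x}_j) \in B(r/\sqrt{q},1) \cap G_j$ with $\norm{D^n\omega(\tilde{x}_j, \tilde{x}_j^{u_j^*})}_L \ge c$ for some $c > 0$ independent of $\tilde{x}_j$. Uniform conjugate cancellation only supplies \emph{some} $u_j \in G_j$ where $D^n\omega \ne 0$, a priori perhaps outside $B(r/\sqrt{q}, 1)$. The key observation is that $u_j \mapsto D^n\omega(\tilde{x}_j, \tilde{x}_j^{u_j})$ is real-analytic (an immediate consequence of \Equ{e:bch}); were it to vanish on the nonempty open set $B(r/\sqrt{q}, 1) \cap G_j$, it would vanish identically on the connected Lie group $G_j$, contradicting uniform conjugate cancellation. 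Compactness of the unit sphere in $L_j$ together with continuity of $\|D^n\omega\|_L$ then produces the desired uniform lower bound $c$.

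With such $u_j^*$ in hand, $d_{G,j}(\omega(s, s^u), 1) = \Theta(\eps^n)$ whenever the $j$-th component of $u$ equals $u_j^*$, by \Equ{e:glest}. Fixing the other components and sliding $u_j$ continuously from $1$ to $u_j^*$, the scalar distance $d_{G,j}(\omega(s_j, s_j^{u_j}), 1)$ varies continuously from $0$ (using that $\omega(g, g) = 1$ because $\omega$ has trivial abelianization) to a value of order $\eps^n$; its image is a connected subset of $\R_{\ge 0}$ and hence contains an interval $[0, \delta_j]$ with $\delta_j = \Omega(\eps^n)$. Since the $u_j$ may be chosen independently within $B(r/\sqrt{q}, 1) \cap G_j$, keeping $u \in B(r,1)$, the full list of distances ranges over the box $[0, \delta]^q$ with $\delta = \min_j \delta_j$. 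The main obstacle is the direction-uniform selection of $u_j^*$ inside the prescribed ball: this is precisely where the \emph{uniform} (rather than pointwise) conjugate cancellation hypothesis is essential, and where real-analyticity of the Baker--Campbell--Hausdorff expansion combined with sphere compactness closes the gap.
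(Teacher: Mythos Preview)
Your argument is correct and follows essentially the same route as the paper: both use real-analyticity of $u \mapsto D^n\omega(x,x^u)$ to transfer nonvanishing from some $u \in G_j$ to the given ball, compactness of the unit sphere in $L_j$ to get a uniform lower bound, the identity $\omega(g,g)=1$ together with connectedness of the ball for the intermediate-value step, and \equ{e:omegalin} to convert between $D^n\omega$ and $d_{G,j}$. The only cosmetic difference is that you work in all factors simultaneously with the explicit radius $r/\sqrt{q}$, whereas the paper first reduces to a single simple factor and then recombines using an abstract $r'$ with $\prod_j B_j(r',1) \subseteq B(r,1)$.
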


Note that could have stated \Lem{l:rangeucc} for words $\omega$ with more
than two arguments; we will only need the two-argument case.

\begin{proof} We again reduce the existence statement to the case when $G$
is simple.  If $G$ is more generally semisimple, and if the lemma is true
for simple Lie groups, then we can find another radius $R'$
such that
\[ B_1(R',1) \times B_2(R',1) \times \cdots
    \times B_q(R',1) \subseteq B_G(R,1), \]
where
\[ B_j(R',1) = B_{G_j}(R',1) \subseteq G_j \]
for each $j$.  (Note that all of these radii are independent of $\eps$.)
Then the lemma for each $G_j$ produces an interval $[0,r_j]$, and we can
let $r = \min_j r_j$.  Note that this reduction also uses the fact that
if $\omega$ is UCC in $G$, then it also is in $G_j$ and the conjugate
cancellation degrees are all the same.

Now assume that $G$ is simple, and choose $s$, $x$, and $\eps > 0$ such that
\[ s = \exp(\eps x) \qquad \norm{x}_L = 1.\]
We can simplify the uniform conjugate cancellation formula \eqref{e:ucc}
to the statement that for every $x \in L \setminus \{0\}$, there is a $u
\in G$ such that
\[ \omega^{(n)}(x,x^u) \ne 0. \]
Fixing $x$ for the moment, $\omega^{(n)}(x,x^u)$ is real analytic in $u$,
so that it is nonzero for a generic $u \in B_G(R,1)$ if it is ever nonzero
for any $u \in G$.  By continuity and compactness,
\[ m \defeq \min_{\norm{x}_L = 1}\biggl[\max_{u \in B_G(R,1)}
    \norm{\omega^{(n)}(x,x^u)}_L\biggr] > 0.  \]
Since $\ccan \omega \ge 2$, $\omega(s,s) = 1$ identically, which implies that
$\omega^{(n)}(x,x) = 0$.  Thus $\norm{\omega^{(n)}(x,x^u)}_L$ achieves both
$0$ and $h$ as $u \in B_G(R,1)$ varies, for each $x$ with $\norm{x}_L = 1$.
Since $B_G(R,1)$ is connected, $\norm{\omega^{(n)}(x,x^u)}_L$ achieves the
entire interval $[0,m]$.  Since
\begin{eq}{e:dnomega} \omega(\exp(\eps x),\exp(\eps x^u))
    = \exp(\eps^n \omega^{(n)}(x,x^u)+O(\eps^{n+1})), \end{eq}
we obtain
\[ d_G(\omega(s,s^u),1) =
    \eps^n \norm{\omega^{(n)}(x,x^u)}_L + O(\eps^{n+1}). \]
Considering the error term, we can thus set $r = \eps^n m/2$.

The proof of the Lipschitz statement is similar to the proof of the
corresponding statement in \Lem{l:zigzag}.  First, $x \mapsto x^u$ is
$O(1)$-Lipschitz.  Second, since \eqref{e:dnomega} comes from a power series
within its radius of convergence, we are allowed to differentiate within the
error term to conclude that the entire expression is $O(\eps^n)$-Lipschitz.
\end{proof}

We turn to the algorithm to produce roughly exponential steps for a general
semisimple $G$, assuming that the word $\omega$ is UCC.  As explained
at the end of \Sec{ss:long}, we can replace $G$ with $G_{\min}$ and thus
assume that $G$ is a product of simple Lie groups.

\begin{algorithm}{SU} The input is a finite set
\[ A = A^{-1} \subseteq G \]
that generates a product
\[ G = G_1 \times G_2 \times \cdots \times G_q \]
of simple groups, a UCC word $\omega(g,h)$ with
\[ n = \ccan_G(\omega) \ge 2, \]
and a rational target precision $t > 0$.  The algorithm also depends
on two integer constants $a,b > 0$ (depending only on $A$) that must be
chosen favorably.

The output is an $A$-word $s_t$ such that
\begin{eq}{e:ustep} 2^{1-t}  > d_{G,j}(s_t,1) > 2^{-t}. \end{eq}
\begin{enumerate}
\item If $t \le 2a$, search for an $A$-word $s_t$ with $\len(s_t) \le b$ that
satisfies the \equ{e:ustep}, and return this word.

\item If $t > 2a$, recursively (using this algorithm) calculate an
$A$-word $s_{t'}$ with
\[ t' := \frac{t-a}n. \]
Then search for an $A$-word $u$ such that $\len(u) \le b$, and such that
\[ s_t := \omega(s_{t'},s_{t'}^u) \]
satisfies \equ{e:ustep}.
\end{enumerate}
\label{a:su} \end{algorithm}

Our description of \Alg{a:su} does not have an explicit formula for the
distance $d_{G,j}(s_t,1)$ analogous to \Prop{p:cross} for \Alg{a:sb2}
or \Lem{l:trig} in \Alg{a:sb}.  However, the algorithm still works using
the more abstract \Lem{l:rangeucc}.  Given the Lipschitz property in
\Lem{l:rangeucc}, it suffices to choose $u \in B_G(R,1)$ with bounded
precision and thus $\len(u) = O(1)$.  \Alg{a:su} also produces a compressed
form $\Delta_t$ of $s_t$.  Using the parameters \eqref{e:elal}, the
recurrences \eqref{e:lenrec} and \eqref{e:clenrec} still hold.  Thus
\[ \len(s_t) = O(n^\alpha) \qquad \len(\Delta_t) = O(\log(t)) = \tO(1) \]
as before.

The time complexity and precision demands of \Alg{a:su} are also
asymptotically the same as those of \Alg{a:sb}.  If the algorithm is
implemented with generalized interval arithmetic, then all of the group
elements in $\SU(2)$ that it uses need $n+\Theta(n)$ bits of precision,
and the algorithm runs in time $\tO(n)$.  If instead the algorithm
is implemented with exact arithmetic over a number field $K$, then
$\norm{s_t}_{\bit} = O(t^\alpha)$ by \equ{e:bitlen}, and the algorithm
runs in time $\tO(t^\alpha)$.

To conclude this subsection, we consider the most general algorithm to
generate roughly exponential steps, in any semisimple $G$ and using any
word $\omega(g,h)$ with $\ccan \omega \ge 2$.  The algorithm addresses
the remaining two complications (listed as 3 and 4) at the beginning
of this section.  When $G$ has more than one simple factor $G_j$ and
$\ccan_{G_j}(\omega)$ is different in different factors, the remedy is
roughly exponential steps with multidimensional scaling.  If $\omega$
isn't UCC in some factor $G_j$, then we can desingularize it with a zigzag
product as in \Lem{l:zigzag}.  The remedies are implemented carefully so
that they do not affect the length exponent $\alpha$ in \eqref{e:elal}.

\begin{lemma} Let $G$ be a simple Lie group, let $R > 0$, and let
$\omega(g,h)$ with
\[ n = \ccan_G(\omega) \ge 2. \]
Then the range of $d_G(\omega(s,s^u),1)$ over all $u \in B_G(R,1)$ and $s
\in G$ such that $d_G(s,1) = O(\eps)$ includes an interval $[0,r]$ with $r =
\Omega(\eps^n)$, in the limit as $\eps \to 0$.  Moreover, $\omega(s,s^u)$
is $O(\eps^n)$-Lipschitz as a function of $u$ and $O(\eps^{n-1})$-Lipschitz
as a function of $s$.
\label{l:range} \end{lemma}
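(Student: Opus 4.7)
My plan is to mirror the proof of \Lem{l:rangeucc}, but now with the extra freedom of varying $s$ (equivalently its direction $x \in L$, where $s = \exp(\eps x)$) in addition to $u$. Uniform conjugate cancellation was precisely what made the earlier argument work for every fixed direction $x$; without it, some $x$ will make $D^n\omega(x,x^u)$ vanish identically in $u$, and I need the liberty to choose a favorable direction as part of the problem data.

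The main step is analytic. First, I would write $s = \exp(\eps x)$ with $\norm{x}_L = \Theta(1)$, so that the expansion \eqref{e:omegalin} reduces the problem to the real analytic function $f(x,u) \defeq D^n\omega(x, x^u)$ on $L \times G$. From $n = \ccan_G(\omega)$ there exist $y \in L$ and $u_1,u_2 \in G$ with $D^n\omega(y^{u_1},y^{u_2}) \ne 0$; absorbing $u_1$ into $y$ via the identity $(y^{u_1})^{u_1^{-1}u_2} = y^{u_2}$ yields a pair at which $f$ itself is non-zero, so $f$ is not identically zero on the connected real analytic manifold $L \times G$. By the identity theorem, $f$ cannot vanish identically on the non-empty open subset $L \times B(r,1)$ either, using that $B(r,1)$ is path-connected as a geodesic ball in a connected Riemannian manifold. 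Hence I can pick $x_0 \in L$ with $\norm{x_0}_L = 1$ and $u_0 \in B(r,1)$ with $c \defeq \norm{f(x_0,u_0)}_L > 0$.

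Next, fixing $s = \exp(\eps x_0)$ and connecting $u_0$ to $1$ by a continuous path $u(t)$ inside $B(r,1)$, the function $t \mapsto \norm{f(x_0, u(t))}_L$ is continuous, equals $0$ at $t=0$ (because $\omega(s,s) = 1$ identically forces $D^n\omega(x,x) = 0$), and equals $c$ at $t=1$. Combining the intermediate value theorem with
\[ d_G(\omega(s, s^{u(t)}),1) = \eps^n \norm{f(x_0, u(t))}_L + O(\eps^{n+1}), \]
the image of $t \mapsto d_G(\omega(s,s^{u(t)}),1)$ is a connected subset of $\R_{\ge 0}$ containing $0$ and $\eps^n c - O(\eps^{n+1})$, so it contains $[0,\delta]$ with $\delta = \eps^n c / 2$ for all sufficiently small $\eps$.

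The Lipschitz claims should fall out of the Baker-Campbell-Hausdorff expansion of the exponent of $\omega(s,s^u)$. Its leading term is $\eps^n D^n\omega(x, x^u)$ with $O(\eps^{n+1})$ corrections; differentiating in $u$ keeps the estimate at order $\eps^n$ because the only $u$-dependence is through $x^u$, which yields the $O(\eps^n)$-Lipschitz bound in $u$. Differentiating in $s$ is equivalent to differentiating in $\eps x$: by homogeneity $D^n\omega$ is of total degree $n$ in $x$, so its $x$-derivative is of degree $n-1$, and the chain-rule factor $1/\eps$ from the reparametrization $s \leftrightarrow \eps x$ produces the $O(\eps^{n-1})$-Lipschitz bound in $s$. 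The hardest part of the argument is the opening analytic step: without uniform conjugate cancellation the zero set of $f$ is genuinely nontrivial, so the proof relies crucially on the ability to move $x$ alongside $u$ and on the identity theorem for real analytic functions on the connected manifold $L \times G$.
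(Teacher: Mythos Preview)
Your proposal is correct and follows essentially the same route as the paper: the paper sketches the range claim by reference to the proof of \Lem{l:rangeucc} (simplified because only one favorable direction $x$ is needed rather than all of them), and obtains the $O(\eps^{n-1})$-Lipschitz bound in $s$ by exactly your chain-rule argument, observing that $x = \eps^{-1}\log s$ is $O(\eps^{-1})$-Lipschitz as a function of $s$. One inconsequential slip: with the convention $g^h = hgh^{-1}$ used in the paper, the conjugation identity should read $(y^{u_1})^{u_2 u_1^{-1}} = y^{u_2}$ rather than $(y^{u_1})^{u_1^{-1}u_2}$, but this does not affect your argument.
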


The proof of \Lem{l:range} is mostly similar to the proof of
\Lem{l:rangeucc}, except simpler.  The main difference is in the Lipschitz
estimate at the end, that $\omega(s,s^u)$ is $O(\eps^{n-1})$-Lipschitz
as a function of $s$.  If we let $s = \exp(\eps x)$ as in the proof
of \Lem{l:rangeucc}, then \equ{e:dnomega} says that $\omega(s,s^u)$
is $O(\eps^n)$-Lipschitz as a function of $x$.  However, $x$ itself is
$O(\eps^{-1})$-Lipschitz as a function of $s$ in a neighborhood of $x=0$.
Hence the composition yields a Lipschitz bound of $O(\eps^{n-1})$.

In the algorithm, we will replace the simple group $G$ in \Lem{l:range}
with a product of simple groups, and we will combine the lemma with
\Lem{l:zigzag}.   In other words, we will let $h$ be a zigzag product
as in \equ{e:zzs}, and then use the range and Lipschitz regularity of
$\omega(h,h^w)$.  Since $h$ itself is $O(\eps)$-Lipschitz as a function
of the conjugators $u_1,\ldots,u_m$ and $v_1,\ldots,v_m$, the composition
is $O(\eps^n)$-Lipschitz as a function of all of the conjugators.

\begin{algorithm}{S} The input is a finite set
\[ A = A^{-1} \subseteq G \]
that generates a product
\[ G = G_1 \times G_2 \times \cdots \times G_q \]
of simple groups, a word $\omega(g,h)$ with
\[ n_j = \ccan_{G_j}(\omega) \ge 2, \]
and a list of rational target precisions
\[ \vt = (t_1,t_2,\ldots,t_q) \qquad t_j > 0. \]
The algorithm also depends on two integers constants $a,b > 0$ (depending
only on $G$ and $A$) and an integer $m$ (depending only on $G$) that must
be chosen favorably.

The output is an $A$-word $s_{\vt}$ such that
\begin{eq}{e:jstep} 2^{1-t_j} > d_{G,j}(s_{\vt}) > 2^{-t_j} \end{eq}
for all $j$.

\begin{enumerate}
\item If $t'_j \le 2a$ for any $j$, search for an $A$-word $s_{\vt}$
with $\len(s_{\vt}) \le b$ that satisfies \equ{e:jstep} for all $j$,
and return this word.  If $t'_j > 2a$, follow the remaining steps.
\item Recursively (using this algorithm) calculate an $A$-word
$s := s_{\vt'}$ with
\[ \vt' := (t'_1,t'_2,\ldots,t'_q) \qquad t'_j := \frac{t'_j-a}{n_j}. \]
\item Search among lists of $A$-words
\[ u_1,u_2,\ldots,u_m,v_1,v_2,\ldots,v_m,w \]
with
\[ \len(u_j),\len(v_j),\len(w) \le b. \]
For each candidate list, set
\[ h := s^{u_1} (s^{-1})^{u_2} s^{u_3} (s^{-1})^{u_4} \cdots
    s^{u_{2m-1}} (s^{-1})^{u_{2m}}, \]
and stop the search when
\[ s_{\vt} := \omega(h,h^w) \]
satisfies \equ{e:jstep} for all $j$.  Return $s_{\vt}$.
\end{enumerate}
\label{a:s} \end{algorithm}

\Alg{a:s} uses \Lem{l:zigzag} to desingularize the inputs to $\omega$,
so it suffices to use a value of $m$ guaranteed by that lemma.

It is not possible to choose $a$ and $b$ for \Alg{a:s} to succeed as written
for all choices of the target precision list $\vt$.  If $\min_j(t_j)$ is
small, the algorithm will only execute step 1, but this step can only succeed
if $\max_j(t_j)$ is bounded.  This can be remedied with a more complicated
algorithm, but we will avoid this failure mode by guaranteeing that
\[ \max_j(t_j) \to \infty \quad \implies \quad \min_j(t_j) \to \infty \]
for all inputs to \Alg{a:s}, including recursive calls.  The outer calls
to the algorithm will always be of the form
\[ s_t := s_{(t,t,\ldots,t)}, \]
which implies that $t_j = \Theta(t_k)^{\Theta(1)}$ for all $j$ and $k$
throughout the recursion.

Recall that
\[ n = \ccan_G(\omega) = \min_j \ccan_{G_j}(\omega). \]
A length recurrence argument analogous to those for Algorithms~\ref{a:sb}
and \ref{a:su} tells us that
\[ \len(s_t) = O(t^\alpha) \qquad \alpha = \log_n(2m \ell), \]
that
\[ \len(\Delta_t) = O(\log(t)) = \tO(1) \]
for the compressed word $\Delta_t$ for $s_t$, and that the time complexity
and required numerical precision are the same as that of \Alg{a:su},
except with a larger value of $\alpha$.  The modified exponent
\[ \alpha = \alpha_{G,m}(\omega) = \frac{\log(\len(\omega))
    + \log(2m)}{\log(\ccan_G(\omega))} \]
is not as good as the exponent
\[ \alpha = \alpha_G(\omega)
    = \frac{\log(\len(\omega))}{\log(\ccan_G(\omega))} \]
from \Alg{a:su} when $\omega$ is UCC.  However, by the following lemma,
we claim the infimum of all values of $\alpha$ obtained in this way is
the same.  In particular, this claim combined with \Thm{th:elkasapy}
tells us that
\[ \inf_{\omega} \alpha_{G,m}(\omega) \le \log_\phi(2), \]
which we need for the full generality of \Thm{th:semisimple}.

\begin{lemma} If $\omega(g,h)$ is a nontrivial group word
with trivial abelianization, then there is a sequence
of nontrivial group words $\omega^{(n)}(g,h)$ such that
\[ \lim_{n \to \infty} \alpha_{G,m}(\omega^{(n)})
  = \lim_{n \to \infty} \alpha_G(\omega^{(n)}) \le \alpha_G(\omega). \]
\end{lemma}

The idea of the proof is to amplify $\omega$ by composing it with itself
so that the extra desingularization term matters less and less.

\begin{proof} The cancellation properties of $\omega(g,h)$ do not change
if we conjugate it in the free group $F_2$.  Thus we can find a conjugate
\[ \tomega(g,h) \sim \omega(g,h) \qquad \len(\tomega) \le \len(\omega) \]
which is cyclically reduced and begins with $g$.  It follows that
\[ \tomega(g,h)\tomega(g,h^{-1}) \ne \tomega(g,h^{-1})\tomega(g,h), \]
Nielsen--Schreier theorem tells us that $\tomega(g,h)$ and
$\tomega(g,h^{-1})$ generate a free group; since they do not commute,
they generate a free group of rank 2.  Therefore
\[ \omega^{(2)}(g,h) \defeq \omega(\tomega(g,h),\tomega(g,h^{-1})) \]
is nontrivial and satisfies
\[ \len(\omega^{(2)}) \le \len(\omega)^2 \qquad
    \ccan_G(\omega^{(2)}) \ge \ccan_G(\omega)^2. \]
If we repeat this construction to make
\[ \omega^{(n)}(g,h) \defeq
    \omega(\tomega^{(n-1)}(g,h),\tomega^{(n-1)}(g,h^{-1})), \]
words), we obtain the lemma statement.
\end{proof}

\subsection{The full algorithm}
\label{ss:full}

In this subsection, we give the full algorithm to establish
\Thm{th:semisimple}, one that works in any semisimple Lie group $G$
and that uses any higher commutator $\omega$.  \Thm{th:semisimple} then
follows if we choose $\omega$ using \Thm{th:elkasapy}.

\begin{algorithm}{Z} The input is a finite set
\[ A = A^{-1} \subseteq G \]
that densely generates a semisimple real Lie group $G$, a group word
$\omega(g,h)$ with $\ccan_{G}(\omega) \ge 2$, a target element $g \in G$,
and a rational target precision $t > 0$.  The algorithm also depends on a
rational constant $0 < \beta < 1$ and four positive integer constants $a$,
$b$, $c$, $m$, and $t_0$ that must all be chosen favorably and independently
of $t$.

The output is a word $w_t$ that satisfies \eqref{e:wtest}.
\begin{enumerate}
\item If $d_G(g,1) \ge 2^{-t_0}$, use Algorithm \ref{a:l} or \ref{a:ld}
with target precision $t_0$ to produce a word $w_0$ such that $d_G(g,w_0)
< 2^{-t'}$.  If $t \le t_0$, then return $w_0$.  If $t > t_0$, then set
\[ g := w_0^{-1}g \qquad G := G_{\min}, \]
and proceed to the remaining steps.
\item Let $t' := (1-\beta)t$, recursively calculate $w_{t'}$ using
this algorithm with input $(g,t')$ and set $g := w_{t'}^{-1}g$.
\item Calculate $s := s_{t'-c}$ using \Alg{a:su} or \Alg{a:s} and the
word $\omega$.
\item Solve the equation
\[ g = s^{g_{u,1}} (s^{-1})^{g_{v,1}} s^{g_{u,2}} (s^{-1})^{g_{u,2}} \cdots
   s^{g_{u,m}} (s^{-1})^{g_{v,m}} \]
for $g_j,h_j \in B_G(1,1)$.
\item Set $t'' := \beta t + a$. For each $1 \le j \le m$,
calculate $u_j$ and $v_j$ recursively as
\[ u_j := w_{u,j,t''} \qquad v_j := w_{v,j,t''} \]
using $(g_u,t'')$ and $(g_v,t'')$ as inputs to this algorithm.  Return
\[ w_t := w_{t'} s^{u_1} s^{v_1} s^{u_2} s^{v_2} \cdots s^{u_m} s^{v_m}. \]
\end{enumerate}
\label{a:z} \end{algorithm}

The constants in \Alg{a:z} can be set in the following order and manner:
\begin{enumerate}
\item The product length $m$ is provided by the statement and
proof of \Lem{l:zigzag}.
\item The constant $c$, which lengthens the steps $s$ by a factor of $2^c$,
should be set to exceed the scales $O(r)$ and $O(\delta)$ in \Lem{l:zigzag}
and its proof.
\item The constant $a$, which provided each conjugator $u_j$ and $v_j$ with
$2^a$ extra bits of precision, should be set so that the total precision
exceeds the Lipschitz constant of $h$ in the statement of \Lem{l:zigzag}.
\item We can set
\[ t_0 = \max\biggl(\frac{2a}{1-\beta},2c\biggr), \]
so that $t-t'$ and $t-t''$ are positive and bounded away
from 0 in the recursion in \Alg{a:z}.
\item Finally, we can choose $b$ based on the gate set $A$ at
each step when \Alg{a:z} needs a bounded-length $A$-word.
\end{enumerate}

To complete the proof of \Thm{th:semisimple}, we need to know that if
Algorithm~\ref{a:su} or \ref{a:s} produces a step $s_t$ of length
$O(t^\alpha)$, then \Alg{a:z} produces a word $w_n$ of length
\[ \len(w_t) = O(t^\alpha+R). \]
While \Sec{ss:long} addresses the long-distance $O(R)$ term, we need a
lemma analogous to \Lem{l:zb} for the short-distance term $O(t^\alpha)$ term.

\begin{lemma} Let $\alpha > 1$ be the word length exponent in
Algorithm~\ref{a:su} or \ref{a:s}, let $m$ be the given constant in
\Alg{a:z}, and suppose that the target element $g \in G$ in \Alg{a:z}
satisfies $d_G(g,1) \le 1$.  Let:
\[ \beta = (4m)^{1/(1-\alpha)} \qquad D =
    \frac{3m}{(1-\beta)^{1-\alpha}-1} \qquad N \ge 1. \]
Given these parameters, choose $N$ such that Algorithm~\ref{a:su} or
\ref{a:s} produces a word $s_t$ with $\len(s_t) \le Nt^\alpha$ for all
$t > 0$, and such that \Alg{a:z} produces a word $w_t$ with $\len(w_t)
\le DNt^\alpha$ when $t_1 \ge t \ge 1$. If $t_1$ is sufficiently large,
then it follows that $\len(w_t) \le DNt^\alpha$ for all $t$.
\label{l:z} \end{lemma}

\begin{proof} We generalize the proof of \Lem{l:zb}.

For all sufficiently large $t$, \Alg{a:z} produces:
\begin{gather*}
w_t = w_0 w_{t'} s^{u_1} (s^{-1})^{v_1} s^{u_2} (s^{-1})^{v_2}
    \cdots s^{u_m} (s^{-1})^{v_m} \\
s = s_{t'-c} \qquad u_j = w_{u,j,t''} \qquad v_j = w_{v,j,t''} \\
t' = (1-\beta)t \qquad t'' = \beta t + a
\end{gather*}
We proceed by induction on $t > t_1 \ge t_0$, using the fact that $t-t'$
and $t-t''$ are both bounded away from zero.
\begin{align*}
\len(w_t) &= \len(w_0) + \len(w_{t'}) + 2m\len(s_{t'-c}) \\
    &\qquad + 2\sum_j \bigl(\len(w_{u,j,t''})+\len(w_{v,j,t''})\bigr) \\
    &\le DNt_0^\alpha + DN(1-\beta)^\alpha t^\alpha
        + 2m N(1-\beta)^\alpha t^\alpha \\
    &\qquad + 4mDN(\beta t+a)^\alpha.
\end{align*}
We want to show that $\len(w_n) \le DNt^\alpha$.  Dividing through by
$Nt^\alpha$, we want to show that
\[ D \ge D(1-\beta)^\alpha + 2m(1-\beta)^\alpha+4mD\beta^\alpha +
    O(t^{-1}). \]
We apply the substitution $\beta = 4m \beta^\alpha$ and move all terms
with $D$ to the left side.  Then we want to show that
\[ D(1-\beta - (1-\beta)^\alpha) \ge 2m(1-\beta)^\alpha + O(t^{-1}). \]
Substituting the value of $D$, we want to show that
\[ 3m(1-\beta)^\alpha \ge 2m(1-\beta)^\alpha + O(t^{-1}), \]
which holds when $t > t_1 \gg 1$.
\end{proof}

As in \Sec{ss:zgolf}, we also need a generalization of \Lem{l:z} to analyze
the time complexity of \Alg{l:z} and the compressed word length of the
result, and thus complete the proof of Theorems~\ref{th:semisimple}
and \ref{th:runtime}.

\begin{lemma} In the notation of \Lem{l:z}, let $0 < \beta' < \beta$.
If \Alg{a:z} uses the parameter $\beta'$ instead of $\beta$, then we
still obtain $\len(w_t) = O(t^\alpha)$.
\label{l:zp} \end{lemma}

The proof is the same as that of \Lem{l:zbp}.

\Sec{ss:long} and \Lem{l:z} together establish the main word length estimate
in \Thm{th:semisimple}, that $\len(w_t) = O(n^\alpha+R)$.  We turn to the
compressed length estimate in the same theorem.

If $\Delta_{w,t}$ is the compressed word form of $w_t$ as produced by
\Alg{a:z}, and similarly for the other words in that algorithm, we obtain
the compressed length recurrence
\begin{multline*}
\len(\Delta_{w,t}) = \len(w_0) + \len(\Delta_{w,t'}) + \len(\Delta_{s,t'-c}) \\
    + \sum_j \bigl(\len(\Delta_{u,j,t''})+\len(\Delta_{v,j,t''})\bigr)
\end{multline*}
in the bounded distance special case $d_G(g,1) \le 1$.  This recurrence
is dominated by the recurrence for $\len(w_t)$, and $\len(\Delta{s,t})
= \tO(1)$.  By the same argument as in \Sec{ss:zgolf}, we obtain
\[ \len(\Delta_{w,t}) = O(t^{1+\delta}) \]
for every $\delta > 0$, or
\[ \len(\Delta_{w,t}) = O(t^{1+\delta}+R) \]
with no restriction on $d_G(g,1)$.

Finally, we establish the precision and runtime estimates in
\Thm{th:runtime}.  The target $g$ needs $t+O(1)$ bits of precision,
while the gates in $A$ needs the same precision plus the precision lost
to group multiplication, which at each step is bounded by the inclusions
\eqref{e:bmul}.  If $d_G(g,1) \le 1$, then the recursion in \Alg{a:z}
(including \Alg{a:s} along with it) has logarithmic depth, and so does
the circuit $\Delta_{w,t}$.  Moreover, in this regime each step of
\Alg{a:z} has a bounded product of group elements, and each such group
element is a bounded distance from $1 \in G$.  Hence by \eqref{e:bmul},
each step loses only $O(1)$ bits of precision, so that the gates need
$t+O(\log(t))$ bits of precision.  Again by the same reasoning as in
\Sec{ss:zgolf}, the total runtime of the \Alg{a:z} when $d_G(g,1) \le 1$
is
\[ \len(\Delta_{w,t})\tO(t) = \tO(t^{2+\delta}) \]
for any $\delta >
0$. With no restriction on $d_G(g,1)$, we must add time complexity and
precision requirements of the first call \Alg{a:l}.  The target $g$ needs
$t+O(R)$ bits of precision, the gates in $A$ need $t+O(\log(t)+R)$ bits
of precision, and the total time complexity is $\tO(t^{2+\delta}+(R+t)^2)$.

If the gate set $A$ is algebraic, then as before we can do better with
exact arithmetic.  Combining the time complexity of \Alg{a:ld} with the
reasoning in \Sec{ss:zgolf} applied to \Alg{a:z}, the total time complexity
is $\tO(t^\alpha+R)$.

\section{Proof of \Thm{th:pkbilip}}
\label{s:bilip}

In this section, we establish \Thm{th:pkbilip} as a special case of
\Thm{th:bilip} given below.  The generalization also subsumes a lemma of
Chatterji, Pittet, and Saloff-Coste \cite[Lem.~8.7]{CPS:rd}.

To state the generalization, we need the concept of a \emph{pseudo-Riemannian
metric} on a connected, smooth manifold $M$.  By definition, it is
a continuous assignment of a nonsingular symmetric inner product
$\braket{\cdot,\cdot}_p$ to each tangent space $T_p(M)$.  This type of
tensor field on $M$ is called a ``metric'' even though it does not make
$M$ into a metric space in the usual sense (nor even a pseudo-metric
space) in the novel case when $\braket{\cdot,\cdot}_p$ is indefinite.
If $\braket{\cdot,\cdot}_p$ is either positive or negative definite,
then the pseudo-Riemannian metric amounts to a Riemannian metric.

Borrowing from the theory of relativity (which uses pseudo-Riemannian
metrics), a nonzero tangent vector $x \in T_p(M)$ is \emph{spacelike},
\emph{timelike}, \emph{non-spacelike}, \emph{non-timelike} or \emph{null}
when, respectively, $\braket{x,x}_p$ is positive, negative, nonpositive,
nonnegative, or zero.  Likewise, a submanifold $P \subseteq M$ can be
\emph{spacelike}, \emph{timelike}, \etc, when all nonzero tangent vectors
$x \in T_p(P)$ have the same property.

Let $G$ be a semisimple Lie group.  To state \Thm{th:bilip}, we use
two left-invariant pseudo-Riemannian metrics on $G$, both given by
symmetric inner products on the Lie algebra $L$.  As in \Sec{ss:nonmat}
and \Sec{ss:metrics}, let $\rho:G \to \SL(d,\C)$ be a matrix representation
which is faithful on $L$ and unitary on $K$.  Given $x \in L$, we write
it as $x = y+z$ with $y \in L_P$ and $z \in L_K$.  First, we have the
positive-definite inner product \eqref{e:hsnorm} given by
\begin{eq}{e:gplus} \begin{aligned}
\braket{x,x}_{G,+} &\defeq \tr(\rho(x)\rho(x)^*) \\
\braket{y,y}_P &\defeq \tr(\rho(y)\rho(y)^*) \\
\braket{z,z}_K &\defeq \tr(\rho(y)\rho(y)^*) \\
\braket{x,x}_{G,+} &= \braket{y,y}_P + \braket{z,z}_K
\end{aligned} \end{eq}
Second, we have the inner product
\begin{eq}{e:gminus} \braket{x,x}_{G,-} \defeq \Re(\tr(\rho(x)^2))
    = \braket{y,y}_P - \braket{z,z}_K. \end{eq}
The inner product \eqref{e:gminus}, which is indefinite when $G$ is
not compact, is $G$-adjoint invariant and thus yields a bi-invariant
pseudo-Riemannian metric on $G$. (If $\rho$ is adjoint, then
$\braket{\cdot,\cdot}_{G,-}$ is the Killing form mentioned in \Sec{ss:lie}.)

As discussed in \Sec{ss:metrics}, we give $P \times K$ the product Riemannian
metric $\braket{\cdot,\cdot}_{P \times K}$, where $\braket{\cdot,\cdot}_K$
is the $K$-bi-invariant Riemannian metric on $K$ induced by the bilinear
form $\braket{\cdot,\cdot}_K$ on $L_K$, and $\braket{\cdot,\cdot}_P$
is the $G$-invariant Riemannian metric on $P \cong G/K$ induced by the
bilinear form $\braket{\cdot,\cdot}_P$ on $L_P$.

\begin{theorem}  Let $\sigma:P \to G$ be a smooth embedding of $P$ in $G$
which is a right inverse to the projection $\pi_P:G \to P$. Suppose
also that $\sigma(P)$ is a nowhere-timelike smooth manifold with respect to
the pseudo-Riemannian metric $\braket{\cdot,\cdot}_{G,-}$.  Then the
map $\mu:P \times K \to G$ given by
\[ \mu(p,k) \defeq \mu(p)k \]
is a bi-Lipschitz diffeomorphism between $P \times K$ with the
product metric $\braket{\cdot,\cdot}_{P \times K}$ and $G$ with the metric
$\braket{\cdot,\cdot}_{G,+}$.
\label{th:bilip} \end{theorem}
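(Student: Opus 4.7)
The strategy is to show $\mu$ is a bijective smooth map whose differential is uniformly bi-Lipschitz on every tangent space; integrating the pointwise bound along smooth paths then yields the bi-Lipschitz equivalence of the two Riemannian distances (and in particular shows $\mu$ is a diffeomorphism). Bijectivity follows from the Cartan decomposition $G = PK$: each $g \in G$ factors uniquely as $g = pk_0$, and since $\pi_K \circ \mu = \mathrm{id}_P$ forces $\mu(p) \in pK$, the equation $\mu(p')k = g$ determines $p' = p$ and then $k \in K$ uniquely. Because $\mu$ intertwines the right $K$-action on the $K$-factor of $P \times K$ with right $K$-translation on $G$, and both the product metric on $P \times K$ and the metric $\braket{\cdot,\cdot}_{G,+}$ are right-$K$-invariant, it suffices to bound $d\mu_{(p,1)}$ uniformly in $p \in P$.

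For the pointwise computation, write $d\mu_p(y) = L_{\mu(p)*}x(y)$ with $x(y) = x_P(y) + x_K(y) \in L_P \oplus L_K$. The condition $\pi_K \circ \mu = \mathrm{id}$, together with the fact that the horizontal subspace $L_{\mu(p)*}L_P$ projects isometrically onto $T_p P$, yields $\norm{y}_P = \norm{x_P(y)}_P$. Left-translating back to $T_1 G \cong L$, the differential sends $(y, v) \in T_p P \oplus L_K$ to $x_P(y) + x_K(y) + v$, so
\[ \norm{d\mu_{(p,1)}(y,v)}_{G,+}^2 = \norm{x_P(y)}_P^2 + \norm{x_K(y)+v}_K^2, \]
whereas the product metric is $\norm{(y,v)}_{P \times K}^2 = \norm{x_P(y)}_P^2 + \norm{v}_K^2$. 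The two forms differ only in their $K$-block, so a uniform comparison of $\norm{x_K(y)+v}_K$ with $\norm{v}_K$ is all that remains.

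This is precisely where the non-timelike hypothesis on $\mu(P)$ enters, and I expect the associated linear-algebraic estimate to be the main obstacle of the proof. By \eqref{e:minus} and the left-invariance of $\braket{\cdot,\cdot}_{G,-}$, the condition that $\mu(P)$ is non-timelike translates into the inequality $\norm{x_K(y)}_K \le \norm{x_P(y)}_P$ at every $p$. Combining this with the triangle inequality $\norm{x_K(y)+v}_K \le \norm{x_P(y)}_P + \norm{v}_K$ gives an upper bound $\norm{d\mu(y,v)}_{G,+}^2 \le 3\,\norm{(y,v)}_{P \times K}^2$, and the reverse triangle inequality $\norm{v}_K \le \norm{x_K(y)+v}_K + \norm{x_P(y)}_P$ together with a convex combination against the trivial bound $\norm{d\mu(y,v)}_{G,+}^2 \ge \norm{x_P(y)}_P^2$ yields a lower bound of the form $\norm{d\mu(y,v)}_{G,+}^2 \ge (1/5)\,\norm{(y,v)}_{P \times K}^2$. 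The hypothesis is essential: without it, $\norm{x_K(y) + v}_K$ could be made tiny by choosing $v$ to cancel $x_K(y)$ even when $\norm{v}_K$ is large, and no uniform lower Lipschitz constant could hold. With the pointwise bi-Lipschitz estimate in hand at every tangent space, the global bi-Lipschitz property follows by integrating along smooth paths and using that $\mu$ is a diffeomorphism.
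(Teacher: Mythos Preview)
Your argument is correct and follows the same overall architecture as the paper: reduce to a uniform bound on the differential $D\mu$ at each point, use left translation to identify $T_gG \cong L_P \oplus L_K$, observe that $D\mu$ is block lower-triangular with isometric diagonal blocks (one from the Riemannian-submersion property of $\pi_K$, the other from the right $K$-action), and then use the non-timelike hypothesis to bound the off-diagonal block $y \mapsto x_K(y)$ by $\|x_K(y)\|_K \le \|x_P(y)\|_P$.

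The one genuine difference is in the linear-algebra step. The paper packages the block-triangular situation into a separate lemma (its Lemma~\ref{l:slant}): after passing to orthonormal bases and using the singular value decomposition of the off-diagonal block $B$, the problem reduces to $2\times 2$ blocks $\begin{bmatrix}1&0\\b&1\end{bmatrix}$ with $0\le b\le 1$, giving the sharp operator-norm bound $\phi = (1+\sqrt5)/2$ for both $D\mu$ and $D\mu^{-1}$. Your triangle-inequality argument is more elementary and avoids the SVD, at the cost of looser constants (you get $\sqrt3$ and $\sqrt5$ rather than $\phi$; in fact your lower bound can be tightened to $1/3$ since $c\le a+b$ gives $a^2+c^2\le 3(a^2+b^2)$ directly). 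For the purposes of Theorem~\ref{th:bilip} the constants are irrelevant, so your shortcut is perfectly adequate; the paper's sharper bound is a nice bonus but not used elsewhere.
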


\Thm{th:pkbilip} is the special case of \Thm{th:bilip} in which $\sigma$
is the identity on $P$.  Thus, we need the following lemma.

\begin{lemma} The radial submanifold $P \subseteq G$ is spacelike.
\label{l:radial} \end{lemma}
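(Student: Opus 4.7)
The plan is to verify the spacelike condition at every $p \in P$ by a direct matrix calculation in the representation $\rho:G \to \SL(d,\C)$ of \Sec{ss:metrics}, where $\rho(K) \subseteq \SU(d)$ and $\rho(P)$ consists of positive Hermitian matrices. For readability I would suppress $\rho$ and work as if $G \subseteq \SL(d,\C)$ with $K$ unitary and $P$ positive Hermitian.

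The first step, and the only one I expect to require care, is the observation that for every $p \in P$ the tangent space $T_pP$ is a subspace of Hermitian matrices. This is because $P$ is a submanifold of the open cone of positive Hermitian matrices in $\SL(d,\C)$, which is itself an open subset of the real vector space of Hermitian matrices, so every $x \in T_pP$ is Hermitian when viewed in the ambient matrix space.

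Given that, by left-invariance of the pseudo-Riemannian metric and \eqref{e:minus},
\[ \braket{x,x}_{G,-} \;=\; \Re\tr (p^{-1}x)^2. \]
Setting $a = p^{-1}x$, the identity $(pa)^* = pa$ becomes $a^* = pap^{-1}$. Conjugating symmetrically with the positive square root, I would set $b = p^{1/2}\, a\, p^{-1/2}$ and check $b^* = p^{-1/2}(pap^{-1})p^{1/2} = b$, so that $b$ is Hermitian. Cyclicity of trace then gives $\tr a^2 = \tr b^2 \ge 0$, with equality iff $b = 0$, equivalently iff $x = 0$.

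Hence for $x \ne 0$ one obtains $\braket{x,x}_{G,-} = \tr b^2 > 0$, so $P$ is spacelike. There is no obvious shortcut that avoids a computation of this kind, since $T_pP$ is genuinely not contained in $L_P$ for $p \ne 1$, so some symmetric conjugation argument like the $b = p^{1/2}ap^{-1/2}$ trick seems unavoidable.
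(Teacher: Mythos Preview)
Your proof is correct and follows essentially the same route as the paper's: both reduce to the matrix picture with $P$ positive Hermitian, use that $T_pP$ consists of Hermitian matrices, and then conjugate the left-translated tangent vector by $p^{1/2}$ to obtain a Hermitian element whose trace-square is manifestly nonnegative. Your notation differs (the paper writes the tangent vector as $py$ and works with $p^{1/2}yp^{-1/2}$, invoking that this lies in $L_P$ rather than computing $\tr b^2$ directly), and you are slightly more explicit about strict positivity for $x \ne 0$, but the argument is the same.
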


\begin{proof} We can (as elsewhere) replace $G$ by $G_{\min}$, and assume
that $G \subseteq \SL(d,\C)$ is algebraic and that $\rho$ is the identity.

Let $p \in P$ and let $y \in L \cong T_1(G)$ be such that
$py \in T_p(P)$.  Since $P$ lies completely in the Hermitian subspace of
$\M(d,\C)$, every tangent vector $py \in T_p(P)$ is also Hermitian as an
element of $\M(d,\C)$.  Therefore
\[ py = (py)^* = y^*p. \]
Since $p$ is positive and Hermitian, it has a unique positive, Hermitian
square root $p^{1/2}$.  So we can write
\[ p^{1/2}yp^{-1/2} = p^{-1/2}y^*p^{1/2} = (p^{1/2}yp^{-1/2})^*. \]
Although in general $y \notin L_P$, $y$ is conjugate to $p^{1/2}yp^{-1/2}$,
which is Hermitian and does lie in $L_P$.  In particular, $\braket{y,y}_{G,-}
\ge 0$; $y$ lies in the spacelike subspace $p^{-1/2}L_Pp^{1/2}$.
\end{proof}

\begin{remark} In their special case of \Thm{th:bilip}, Chatterji, Pittet,
and Saloff-Coste use the Iwasawa decomposition $G = NAK$, where $N$ is
nilpotent, $A$ is abelian, and $S = NA$ is solvable.  There is a canonical
diffeomorphism $\sigma_S:P \to S$ because $S \cong G/K \cong P$; \ie, both
$P$ and $S$ are sections of the quotient map $\pi_P:G \to G/K$.  By analogy
with \Lem{l:radial}, the solvable subgroup $S$ is nowhere-timelike because
it is null along the nilpotent subgroup $N$ and spacelike along the abelian
subgroup $A$.  Thus \Thm{th:bilip} shows that the Iwasawa multiplication map
$\mu(p,k) = \sigma_S(p)k$ is also bi-Lipschitz.  Chatterji, Pittet, and
Saloff-Coste instead observe that the differential (or derivative) $d\mu$
of the Iwasawa multiplication map is the same everywhere in the sense that
$\mu$ is invariant under left multiplication by $S$ and right multiplication
by $K$.  This immediately shows that their $\mu$ is bi-Lipschitz, but
(unlike in our argument) with no particular bound on the Lipschitz constants.
\end{remark}

To prove \Thm{th:bilip} we use the following lemma from linear algebra.

\begin{lemma} Consider $\R^{n+m}$ with the indefinite inner product
\[ \braket{\vx,\vy}_- = \sum_{j \le n} x_j y_j - \sum_{j > n} x_j y_j. \]
Let
\[ A = \begin{bmatrix} C & 0 \\ B & D \end{bmatrix} \]
be block unitriangular, where $B$ is an $n \times m$ submatrix.  Suppose also
that $C$ and $D$ are orthogonal, and that if $\vv \in \R^n$, then
\[ \vx = A(\vv \oplus \vec0) = C(\vv) \oplus B(\vv) \]
is non-timelike, \ie, $\braket{\vx,\vx}_- \ge 0$.  Then
\[ \norm{A}_\infty \le \phi = \frac{1+\sqrt5}2, \]
where $\norm{A}_\infty$ is the operator norm using the standard
positive-definite inner product
\[ \braket{\vx,\vy}_+ = \sum_j x_j y_j. \]
\label{l:slant} \end{lemma}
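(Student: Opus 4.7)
The plan is to convert the matrix norm bound into an explicit single-variable optimization. First I would unpack the non-timelike hypothesis on the image of $\R^n \oplus 0$ under $A$. For $v \in \R^n$, set $x = A(v \oplus 0) = C(v) \oplus B(v)$. The assumption $\braket{x,x}_- \ge 0$ reads $\norm{C v}^2 \ge \norm{Bv}^2$; since $C$ is orthogonal this becomes $\norm{v}^2 \ge \norm{Bv}^2$, so the operator norm of $B$ with respect to the positive-definite inner product satisfies $\norm{B}_\infty \le 1$.

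Next, take a general vector $w = u \oplus v$ with $u \in \R^n$, $v \in \R^m$, and $\norm{u}^2 + \norm{v}^2 = 1$. Using the block form of $A$ and orthogonality of $C$ and $D$, expand
\[
\norm{Aw}_+^2 = \norm{Cu}^2 + \norm{Bu + Dv}^2 = \norm{u}^2 + \norm{Bu}^2 + 2\braket{Bu,Dv} + \norm{v}^2.
\]
Applying Cauchy--Schwarz, $\norm{Dv} = \norm{v}$, and the bound $\norm{B}_\infty \le 1$ yields
\[
\norm{Aw}_+^2 \le 1 + \norm{Bu}^2 + 2\norm{Bu}\,\norm{v} \le 1 + \norm{u}^2 + 2\norm{u}\,\norm{v}.
\]

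Finally, set $a = \norm{u}$ so that $\norm{v} = \sqrt{1-a^2}$, and maximize
\[
f(a) = 1 + a^2 + 2a\sqrt{1-a^2}
\]
over $a \in [0,1]$. A direct differentiation gives the critical equation $5a^4 - 5a^2 + 1 = 0$; the relevant root is $a^2 = \tfrac{1}{2} + \tfrac{1}{2\sqrt{5}}$, at which $a\sqrt{1-a^2} = 1/\sqrt 5$ and
\[
f(a) = 1 + \frac{\sqrt 5 + 1}{2\sqrt 5} + \frac{2}{\sqrt 5} = \frac{3+\sqrt 5}{2} = \phi^2.
\]
Thus $\norm{A}_\infty \le \phi$, as desired.

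There is no serious obstacle; the argument is a short calculation. The only mild subtlety is the reason the golden ratio shows up: it is precisely the value for which the extra Cauchy--Schwarz cross term $2\norm{u}\,\norm{v}$ can inflate the already unit-norm image of a unit input. Should the non-timelike hypothesis only hold approximately (as one might worry about in the application to \Thm{th:pkbilip}), the same calculation still yields a Lipschitz bound depending continuously on how far $\braket{x,x}_-$ is from non-negativity.
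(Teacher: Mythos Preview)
Your proof is correct.  The paper takes a different but equally short route: it first uses the left and right action of $\O(n)\times\O(m)$ together with the singular value decomposition of $B$ to reduce to the case where $C$ and $D$ are identity matrices and $B$ is diagonal with non-negative entries.  The matrix $A$ then splits as a direct sum of $1\times 1$ and $2\times 2$ blocks, and for a block
\[
\begin{bmatrix} 1 & 0 \\ b & 1 \end{bmatrix}, \qquad 0 \le b \le 1,
\]
one checks directly that the operator norm is maximized at $b=1$ and equals $\phi$.  Your argument bypasses the SVD reduction entirely, bounding $\norm{Aw}_+^2$ via Cauchy--Schwarz and a one-variable optimization.  The paper's approach makes it structurally clear that the extremal configuration is already two-dimensional; your approach is more portable, and, as you observe, immediately gives a bound depending continuously on $\norm{B}_\infty$ when the non-timelike hypothesis is relaxed.
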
 \eatline

\begin{proof} Since the lemma is invariant under both the left and right
actions of the group $\O(n) \times \O(m)$, we can assume that $C$ and $D$
are the identity, and we can use the singular value decomposition of $B$
to assume that $B$ is diagonal with nonnegative entries.   If $B$ is
diagonal, then we can decompose $A$ as a direct sum
\[ A = A_1 \oplus A_2 \oplus \ldots \oplus A_q \]
of $1 \times 1$ and $2 \times 2$ submatrices that satisfy the hypotheses
of the lemma, where for each summand $A_j$, the signature $(n_j,m_j)$
is either $(1,0)$, $(0,1)$, or $(1,1)$.   Moreover,
\[ \norm{A}_\infty = \max_j \norm{A_j}. \]
We can thus replace $A$ by any given summand $A_j$, and reduce the lemma
to the special cases where $\max(n,m) = 1$, and the sole entry of $B$
(if it exists) is nonnegative.

The cases where $(n,m)$ is $(1,0)$ or $(0,1)$ are trivial, leaving only
the case $(n,m) = (1,1)$.  In this case
\[ A = \begin{bmatrix} 1 & 0 \\ b & 1 \end{bmatrix} \]
with $0 \le b \le 1$.  It is easy to check in this case that
$\norm{A}_\infty$ is maximized when $b=1$ and equals $\phi$.
\end{proof}

\begin{proof}[Proof of \Thm{th:bilip}] Since the map $\mu$ is a
diffeomorphism between the Riemannian manifolds $P \times K$ and $G$,
we can check the bi-Lipschitz condition by establishing uniform bounds
on the $\norm{\cdot}_\infty$ norms of the differentials $d\mu(p,k)$ and
$d\mu^{-1}(g)$ at every point $g = \sigma(p)k$.

Given $g = \sigma(p)k$, we can isometrically identify
\begin{align*}
T_p(P) &\cong T_1(P) \cong L_P \\
T_k(K) &\cong T_1(K) \cong L_K \\
T_g(G) &\cong T_1(G) \cong L_P \oplus L_K
\end{align*}
using (respectively) the action of $p^{-1} \in G$ on $P$, left or right
multiplication of $k^{-1}$ on $K$, and left multiplication of $g^{-1}$
on $G$.  With these identifications, we can interpret $d\mu(p,k)$ as a
linear map
\[ d\mu(p,k):L_P \oplus L_K \to L_P \oplus L_K. \]
By the construction of $\mu$, the linear map $d\mu(p,k)$ takes $L_K$ to
$L_K$ isometrically. $d\mu(p,k)$ does not map $L_P$ to $L_P$ in general.
However, if we regard the differential $d\pi_P(g)$ of the projection
$\pi_P:G \to P$ as a linear map
\[ d\pi_P(g):L_P \oplus L_K \to L_P, \]
it has kernel $L_K$ and the composition $d\pi_P(g) \circ d\mu(p,k)$ is an
isometry from $L_P$ to $L_P$.  Finally $d\mu(p,k)(L_P)$ is non-timelike
in $T_g(G)$ with respect to $\braket{\cdot,\cdot}_{G,-}$ by hypothesis.

Thus, if we choose orthonormal bases for $L_P$ and $L_K$, the linear map
$A = d\mu(p,k)$ becomes a matrix that satisfies all of the hypotheses
of \Lem{l:slant}.  Moreover, if
\[ A = \begin{bmatrix} C & 0 \\ B & D \end{bmatrix} \]
then
\[ A^{-1} = \begin{bmatrix} C^{-1} & 0 \\
    -D^{-1}BC^{-1} & D^{-1} \end{bmatrix} \]
does too.  Thus by the lemma, both $\mu$ and $\mu^{-1}$ are $\phi$-Lipschitz.
\end{proof}

\section{Appendix: Elkasapy's theorem}
\label{s:elkasapy}

In this section, we state and prove Elkasapy's theorem from his article
\cite{Elkasapy:lower}, which was never published in a journal.

\begin{theorem}[Elkasapy {\cite[Lems.~2.2~\&~2.5]{Elkasapy:lower}}]
There is a sequence of words $\omega_1,\omega_2,\ldots \in F_2$ such that
$\nil(\omega_n) = f_n$ is the $n$th Fibonacci number, and such that
\[ \len(\omega_n) = \begin{cases}
    (13 \cdot 2^{n-2} + 2)/7 & n \equiv 0 \pmod 3 \\
    (13 \cdot 2^{n-2} + 4)/7 & n \equiv 1 \pmod 3 \\
    (13 \cdot 2^{n-2} - 6)/7 & n \equiv 2 \pmod 3
    \end{cases} \]
when $n \ge 2$.  Thus, $\lambda \le \log_\phi(2)$.
\label{th:elkasapy} \end{theorem}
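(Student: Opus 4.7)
The plan is to build $\omega_n$ via a Fibonacci-style commutator recurrence. Take the base cases $\omega_1 = g$ and $\omega_2 = h$ (each of length $1$), and define $\omega_{n+2} = \comm{\omega_{n+1}, \omega_n^{\eps_n}}$ for signs $\eps_n \in \{\pm 1\}$ chosen periodically in $n$ so as to maximize free cancellation at the two internal junctions of the commutator. This is the natural group-level lift of the Lie-bracket recurrence $x_{n+2}=[x_{n+1},x_n]$ that already appears in the proof of \Lem{l:elkucc}, and it is modeled on the length-$14$ word $\comm{\comm{g,h},\comm{h,g^{-1}}}$ highlighted in the paper.

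The nilpotence claim is the easy half. The standard containment $\comm{(F_2)_{a}, (F_2)_{b}} \subseteq (F_2)_{a+b}$ in the lower central series combined with the Fibonacci recurrence gives $\nil(\omega_n) \ge f_n$ by induction. For sharpness, I would pass to the associated graded object: under a Magnus-type expansion, $\omega_n$ mod $(F_2)_{f_n+1}$ descends to an iterated Lie bracket of $g$ and $h$ obeying the same Fibonacci scheme, and a Hall-basis (or Lyndon-basis) argument shows this iterated bracket is nonzero, so $\nil(\omega_n) = f_n$ exactly.

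The length bound is the delicate half. The naive reduction $\len(\omega_{n+2}) \le 2\len(\omega_{n+1}) + 2\len(\omega_n)$ has characteristic root $1+\sqrt{3}$, which would only yield the weaker exponent $\log_\phi(1+\sqrt{3})$. To recover $\ell_n = \Theta(2^n)$, I would run a simultaneous induction tracking not only $\len(\omega_n)$ but also a bounded-length prefix and suffix of the reduced word representing $\omega_n$ and of its inverse. A short finite case analysis, periodic of period three in $n$, should show that $\eps_n$ can always be chosen so that at least one letter pair cancels at each of the two junctions $\omega_{n+1}\cdot \omega_n^{\eps_n}$ and $\omega_{n+1}^{-1}\cdot \omega_n^{-\eps_n}$, yielding a refined recurrence of the form $\ell_{n+2} = 2\ell_{n+1} + 2\ell_n - 2c_n$ with a periodic correction $c_n$. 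Solving this linear recurrence with period-$3$ forcing produces exactly the three-branch closed form $(13\cdot 2^{n-2}+r_n)/7$ stated in the theorem.

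The main obstacle is the combinatorial bookkeeping needed to justify the refined recurrence: one must verify that the prefix/suffix structure of $\omega_n$ is stable under the recursion, so that the claimed cancellation is actually realized at every step rather than only initially. Once the length recurrence is pinned down, the bound $\lambda \le \log_\phi 2$ is immediate from $\len(\omega_n) = \Theta(2^n)$ and $\nil(\omega_n) = f_n \sim \phi^n/\sqrt{5}$, since $\log \len(\omega_n)/\log \nil(\omega_n) \to \log 2/\log\phi = \log_\phi 2$.
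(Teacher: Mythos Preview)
Your nilpotence argument is fine in outline: the lower bound $\nil(\omega_n)\ge f_n$ comes from $\comm{(F_2)_a,(F_2)_b}\subseteq (F_2)_{a+b}$ exactly as you say, and your Hall-basis plan for the upper bound would work. The paper does the upper bound differently and more concretely, by evaluating $\omega_n$ on a specific pair of conjugate elements of $\SU(2)$ built from Pauli matrices and checking that the leading $\eps^{f_n}$ term survives; this simultaneously gives $\ccan_{\SU(2)}(\omega_n)=f_n$.

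The length argument, however, has a genuine gap. You correctly note that the naive commutator recurrence gives growth $(1+\sqrt3)^n$, and you propose to fix this by choosing signs $\eps_n$ so that ``at least one letter pair cancels at each of the two junctions,'' leading to $\ell_{n+2}=2\ell_{n+1}+2\ell_n-2c_n$ with bounded periodic $c_n$. But a bounded $c_n$ does not change the characteristic root: the homogeneous part is still $x^2=2x+2$, and the general solution still grows like $(1+\sqrt3)^n$, not $2^n$. To reach $\ell_n=\Theta(2^n)$ you need cancellation of order $\ell_n$ at each step, not $O(1)$.

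What actually happens in Elkasapy's construction is that an \emph{entire} copy of $\omega_n$ cancels at the middle junction of $\comm{\omega_{n+1}^{-1},\omega_n}=\omega_{n+1}^{-1}\omega_n\omega_{n+1}\omega_n^{-1}$, because $\omega_{n+1}$ itself begins (as a reduced word) with $\omega_n^{-1}$. The paper encodes this structural fact by introducing an auxiliary sequence $\zeta_n$ and the joint recurrence
\[
(\omega_{n+1},\zeta_{n+1})=(\omega_n^{-1}\zeta_n^{-1},\ \omega_n\zeta_n),
\]
which makes the doubling transparent: each new word is a product of only two old words, so $\len(\omega_{n+1})+\len(\zeta_{n+1})\le 2(\len(\omega_n)+\len(\zeta_n))$. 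A short period-$3$ induction on the first and last letters of $\omega_n$ and $\zeta_n$ then gives the exact closed form $(13\cdot 2^{n-2}+r_n)/7$. This auxiliary-sequence idea is the missing ingredient in your proposal; once you have it, the prefix/suffix bookkeeping you describe becomes straightforward rather than the main obstacle.
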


In the statement of \Thm{th:elkasapy}, we use the standard indexing of the
Fibonacci numbers with $f_1 = f_2 = 1$.  It seems possible that the theorem
yields the exact minimum value of $\len(\omega)$ whenever $\nil(\omega)
= f_n$ is a Fibonacci number.  Less sharply, Elkasapy conjectures that
$\lambda = \log_\phi(2)$.  (Note that $\log_\phi(2)$ is transcendental
according to the Gelfond--Schneider theorem \cite{Gelfond:hilbert}.)

The proof of \Thm{th:elkasapy} is based on the following recursive
construction of two words $\omega_n,\zeta_n \in F_2$. We set
\begin{eq}{e:elk1} \begin{aligned}
(\omega_1,\zeta_1) &= (g,h^{-1}g^{-1}) \\
(\omega_{n+1},\zeta_{n+1}) &= (\omega_n^{-1} \zeta_n^{-1},\omega_n \zeta_n).
\end{aligned} \end{eq}
Given that
\begin{align*}
\omega_{n+2} &= \omega_{n+1}^{-1} \zeta_{n+1}^{-1}
    = \omega_{n+1}^{-1} \cdot \omega_n \omega_n^{-1}
        \cdot \zeta_n^{-1} \omega_n^{-1} \\
    &= \omega_{n+1}^{-1} \cdot \omega_n
        \cdot \omega_n^{-1} \zeta_n^{-1} \cdot \omega_n^{-1}
    = \omega_{n+1}^{-1} \omega_n \omega_{n+1} \omega_n^{-1},
\end{align*}
we can also define the sequence $(\omega_n)$ with a second recurrence:
\begin{eq}{e:elk2} \omega_1 = g \qquad \omega_2 = h
    \qquad \omega_{n+2} = \comm{\omega_{n+1}^{-1},\omega_n}. \end{eq}
For the moment, we nearly ignore the initial conditions, assuming only
that $\omega_1$ and $\zeta_1$, equivalently $\omega_1$ and $\omega_2$,
do not commute and thus generate a nonabelian free group.  Even with
just this, the recurrence \eqref{e:elk1} tells us that $\len(\omega_n) =
O(2^n)$, while the recurrence \eqref{e:elk2} tells us that
\begin{eq}{e:nilfib} \nil(\omega_{n+2})
    \ge \nil(\omega_{n+1})+\nil(\omega_n), \end{eq}
and thus that $\nil(\omega_n) = \Omega(\phi^n)$.  We thus already learn the
most important estimate, that $\lambda \le \log_\phi(2)$.

To prove the precise formula for $\len(\omega_n)$ in \Thm{th:elkasapy}, we
use the following lemma which gives both $\len(\omega_n)$ and $\len(\zeta_n)$
as well as the beginning and ending letters of both types of words.

\begin{lemma} With $\omega_n$ and $\zeta_n$ as above, if $n = 3k$, then:
\begin{align*}
\omega_n &= h^{-1}\cdots hg^{-1}
    & \len(\omega_n) &= (13 \cdot 2^{n-2} + 2)/7 \\
\zeta_n &= h\cdots h^{-1}g^{-1}
    & \len(\zeta_n) &= (13 \cdot 2^{n-2} + 2)/7.
\end{align*}
If $n = 3k+1 \ge 4$, then
\begin{align*}
\omega_n &= gh^{-1}\cdots h^{-1}
    & \len(\omega_n) &= (13 \cdot 2^{n-2} + 4)/7 \\
\zeta_n &= h^{-1}\cdots h^{-1}g^{-1}
    & \len(\zeta_n) &= (13 \cdot 2^{n-2} + 4)/7.
\end{align*}
If $n = 3k+2$, then
\begin{align*}
\omega_n &= h \cdots h & \len(\omega_n)
    &= (13 \cdot 2^{n-2} - 6)/7 \\
\zeta_n &= gh^{-1}\cdots h^{-1}g^{-1}
    & \len(\zeta_n) &= (13 \cdot 2^{n-2} + 8)/7.
\end{align*}
\label{l:endslen} \end{lemma}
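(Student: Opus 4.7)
The plan is to prove \Lem{l:endslen} by a single induction on $n$, cycling through the three residue classes modulo $3$, while tracking simultaneously the free-reduced lengths and the first two and last two letters of both $\omega_n$ and $\zeta_n$.

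First I would dispose of the base cases by direct computation of $(\omega_n,\zeta_n)$ for small $n$ using the recurrence \eqref{e:elk1}. Computing $\omega_1,\zeta_1,\ldots,\omega_5,\zeta_5$ verifies the asserted lengths and endpoint data in each of the three residue classes and shows that the claimed shapes $h^{-1}\cdots hg^{-1}$, $gh^{-1}\cdots h^{-1}$, $h\cdots h$ (and the analogous shapes for $\zeta$) have stabilized by $n = 3$.

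For the inductive step, suppose the lemma holds for some $n$ in a given residue class. From the known first letter and last two letters of $\omega_n$ and $\zeta_n$ one immediately reads off the last letter and first two letters of $\omega_n^{-1}$ and $\zeta_n^{-1}$. Writing out the junction inside $\omega_{n+1} = \omega_n^{-1}\zeta_n^{-1}$, these leading/trailing data determine exactly how many letters cancel: either none or exactly one pair, depending on the residue class. Because we know two letters on each side of the junction, we can also rule out any second cancellation (the two letters exposed after the first cancellation always coincide in sign rather than opposite). The analogous analysis handles $\zeta_{n+1} = \omega_n\zeta_n$. This yields a length recurrence of the form
\[ \len(\omega_{n+1}) = \len(\omega_n) + \len(\zeta_n) - 2c_n, \]
with $c_n\in\{0,1\}$ determined by the residue of $n$, and similarly for $\len(\zeta_{n+1})$. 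Plugging in the inductive formulas and simplifying produces the claimed closed forms in the next residue class, while the outermost letters of $\omega_{n+1}$ and $\zeta_{n+1}$ after cancellation can be read off directly and matched to the stated patterns.

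The main obstacle is organizational rather than conceptual: there are three residue classes, each demanding two separate junction computations, so six free-reduction bookkeeping checks per cycle of the induction, together with six arithmetic verifications that expressions like $(13\cdot 2^{n-2}+a)/7 + (13\cdot 2^{n-2}+b)/7 - 2c_n$ collapse to the formula for the next residue. The lemma has been arranged precisely so that the first and last letters propagate consistently, so once the bookkeeping framework is laid out, each of the six cases reduces to a short word-combinatorics check and an elementary arithmetic identity.
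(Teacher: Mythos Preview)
Your proposal is correct and matches the paper's own proof, which simply says the lemma is ``routine'' by induction on $n$ via the recurrence \eqref{e:elk1}, verifying the base case $n=2$ directly and then letting each residue class mod $3$ imply the next. Your plan spells out the bookkeeping that the paper leaves implicit; tracking two letters at each end is slightly more than the lemma records but is harmless and makes the no-second-cancellation checks transparent.
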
 \eatline
\begin{proof} It is routine to prove \Lem{l:endslen} by induction on $n$,
using the recurrence \eqref{e:elk1}.  We can directly verify the case $n=2$
given that
\[ (\omega_2,\zeta_2) = (h,gh^{-1}g^{-1}). \]
After that, each case implies the next one, going in a circle mod 3.
\end{proof}

Finally, we establish the precise value of $\nil(\omega_n)$ in
\Thm{th:elkasapy}, with the following lemma.

\begin{lemma} With $\omega_n$ and $f_n$ as above,
\[ \nil(\omega_n) = \ccan_{\SU(2)}(\omega_n) = f_n. \]
\label{l:nilfib} \end{lemma}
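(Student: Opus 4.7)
The plan is to pin down both quantities by squeezing them between $f_n$ from below and $f_n$ from above, via the chain $f_n \le \nil(\omega_n) \le \ccan_{\SU(2)}(\omega_n) \le f_n$.

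For the lower bound I will use the recurrence \equ{e:elk2}, which writes $\omega_{n+2}$ as a commutator of $\omega_{n+1}^{-1}$ and $\omega_n$. Combined with the standard lower-central-series containment $\comm{G_i,G_j} \subseteq G_{i+j}$ from \Sec{ss:cannil} (a Hall--Witt argument), this gives $\nil(\omega_{n+2}) \ge \nil(\omega_{n+1}) + \nil(\omega_n)$, which is the already-noted inequality \equ{e:nilfib}. The base cases $\nil(\omega_1) = \nil(\omega_2) = 1 = f_1 = f_2$ then propagate by induction to $\nil(\omega_n) \ge f_n$, and the comparison $\nil(\omega) \le \ccan_G(\omega)$ from \equ{e:cnle} of \Thm{th:cannil} upgrades this to $\ccan_{\SU(2)}(\omega_n) \ge f_n$.

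For the upper bound I will exhibit a single pair of conjugate elements in $\SU(2)[[\eps]]_0$ for which $\omega_n$ already has nonzero leading term at order $\eps^{f_n}$. The natural test pair is $g = \exp(i\eps Z)$ and $h = \exp(i\eps Y)$, which are $\SU(2)$-conjugate because $Z$ and $Y$ share spectrum $\{\pm 1\}$. I define Hermitian matrices recursively by $x_1 = Z$, $x_2 = Y$, and $x_{n+2} = -i[x_{n+1},x_n]$; the operation $-i[\cdot,\cdot]$ sends pairs of Hermitian matrices to Hermitian matrices, so the recurrence is well defined. Using \equ{e:glbracket} together with the group recurrence \equ{e:elk2}, I prove by induction on $n$ that
\[ \omega_n(g,h) = \exp\bigl(i\eps^{f_n} x_n + O(\eps^{f_n+1})\bigr), \]
with the inductive step reducing to the identity $[-ix_{n+1},ix_n] = [x_{n+1},x_n] = i x_{n+2}$. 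A direct check gives $x_3 = 2X$, $x_4 = 4Z$, $x_5 = 16Y$, and an easy induction on the Pauli multiplication table then shows that $x_n$ cycles (up to nonzero real scalars) through $Z, Y, X$ with period three; in particular $x_n \ne 0$ for every $n$, so $\ccan_{\SU(2)}(\omega_n) \le f_n$ as required.

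The only real obstacle is careful bookkeeping of factors of $i$ so that the Hermitian subspace is preserved and the Fibonacci bracket structure is correctly matched to the leading term of the Baker--Campbell--Hausdorff expansion; every other step is either a short direct computation or an appeal to earlier material in the paper.
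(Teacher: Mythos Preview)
Your proof is correct and follows essentially the same approach as the paper: the same chain $f_n \le \nil(\omega_n) \le \ccan_{\SU(2)}(\omega_n) \le f_n$, the same inductive lower bound via \equ{e:nilfib}, and the same upper bound by evaluating at the conjugate pair $\exp(i\eps Z)$, $\exp(i\eps Y)$ and tracking the leading Pauli term through the recurrence \equ{e:elk2}. The only cosmetic difference is that the paper uses $\exp(\eps iZ/2)$ and $\exp(\eps iY/2)$, which keeps the leading coefficient equal to $1/2$ at every stage rather than growing as in your $c_{n+2}=2c_{n+1}c_n$; either way the coefficient is nonzero and the conclusion is the same.
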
 \eatline

(Note that Elkasapy showed that $\nil(\omega_n) \ge f_n$ and conjectured
that it is an equality.)

\begin{proof} We will show that
\[ f_n \le \nil(\omega_n) \le \ccan_{\SU(2)}(\omega_n) \le f_n. \]
Since
\[ (\omega_1,\omega_2) = (g,h) \qquad \nil(\omega_1) = \nil(\omega_2) = 1, \]
the inequality $f_n \le \nil(\omega_n)$ follows by induction from
\eqref{e:nilfib}.  Meanwhile, the inequality $\nil(\omega_n) \le
\ccan_{\SU(2)}(\omega_n)$ is part of \Thm{th:cannil}.

Finally, we will show that $\ccan_{\SU(2)}(\omega_n) \le f_n$ by evaluating
$\omega_n$ at particular conjugate group elements $g,h \in \SU(2)$ which
we make from Pauli matrices.  Let
\begin{eq}{e:ghval} \begin{aligned}
\omega_1(g,h) = g &= \exp(\eps i Z/2) \\
\omega_2(g,h) = h &= \exp(\eps i Y/2),
\end{aligned} \end{eq}
where $\eps > 0$ is small and can depend on $n$. (Or we can work in
$\SU(2)[[\eps]]$.)  We substitute the matrix values \eqref{e:ghval}
into the recurrence \eqref{e:elk2}, and then simplify the result with
the approximation \eqref{e:glbracket} and the commutation relations for
Pauli matrices:
\[ [X,Y] = 2iZ \qquad
[Y,Z] = 2iX \qquad
[Z,X] = 2iY. \]
We learn by induction that
\[ \omega_n(g,h) = \begin{cases}
\displaystyle \exp\bigl(\frac{\eps^{f_n} i Z}2
    + O(\eps^{f_n+1})\bigr) & n \equiv 1 \pmod 3  \\[1.5ex]
\displaystyle \exp\bigl(\frac{\eps^{f_n} i Y}2
    + O(\eps^{f_n+1})\bigr) & n \equiv 2 \pmod 3 \\[1.5ex]
\displaystyle \exp\bigl(\frac{\eps^{f_n} i X}2
    + O(\eps^{f_n+1})\bigr) & n \equiv 0 \pmod 3 \end{cases} \]
for all $n$.  In particular,
\[ \omega_n(g,h) \ne \exp(O(\eps^{f_n+1})), \]
which shows that $\ccan_{\SU(2)}(\omega_n) \le f_n$.
\end{proof}

\vspace{1in} \mbox{ }

\bibliography{books,gr,gt,na,nt,rt,qp,me}

\providecommand{\bysame}{\leavevmode\hbox to3em{\hrulefill}\thinspace}
\providecommand{\MR}{\relax\ifhmode\unskip\space\fi MR }
% \MRhref is called by the amsart/book/proc definition of \MR.
\providecommand{\MRhref}[2]{%
  \href{http://www.ams.org/mathscinet-getitem?mr=#1}{#2}
}
\providecommand{\href}[2]{#2}
\providecommand{\eprint}{\begingroup \urlstyle{rm}\Url}
\begin{thebibliography}{KBRY15}

\bibitem[AAEL07]{AAEL:tutte}
Dorit Aharonov, Itai Arad, Elad Eban, and Zeph Landau, \emph{Polynomial quantum
  algorithms for additive approximations of the {Potts} model and other points
  of the {Tutte} plane}, 2007, \eprint{arXiv:quant-ph/0702008}.

\bibitem[BB88]{BB:familiar}
J.~M. Borwein and P.~B. Borwein, \emph{On the complexity of familiar functions
  and numbers}, SIAM Rev. \textbf{30} (1988), no.~4, 589--601.

\bibitem[BdS16]{BdS:gap}
Yves Benoist and Nicolas de~Saxc\'{e}, \emph{A spectral gap theorem in simple
  {Lie} groups}, Invent. Math. \textbf{205} (2016), no.~2, 337--361,
  \eprint{arXiv:1405.1808}.

\bibitem[BG08]{BG:su2}
Jean Bourgain and Alex Gamburd, \emph{On the spectral gap for
  finitely-generated subgroups of {$\rm SU(2)$}}, Invent. Math. \textbf{171}
  (2008), no.~1, 83--121.

\bibitem[BG12]{BG:sud}
Jean Bourgain and Alexander Gamburd, \emph{A spectral gap theorem in
  {$\text{SU}(d)$}}, J. Eur. Math. Soc. (JEMS) \textbf{14} (2012), no.~5,
  1455--1511, \eprint{arXiv:1108.6264}.

\bibitem[BGS13]{BGS:vbasis}
Alex Bocharov, Yuri Gurevich, and Krysta~M. Svore, \emph{Efficient
  decomposition of single-qubit gates into {$V$} basis circuits}, Phys. Rev. A
  \textbf{88} (2013), 012313, \eprint{arXiv:1303.1411}.

\bibitem[BGT21]{BG:ifsk}
Adam Bouland and Tudor Giurgica-Tiron, \emph{Efficient universal quantum
  compilation: an inverse-free {Solovay-Kitaev} algorithm}, 2021,
  \eprint{arXiv:2112.02040}.

\bibitem[BI66]{BenIsrael:system}
Adi Ben-Israel, \emph{A {Newton-Raphson} method for the solution of systems of
  equations}, J. Math. Anal. Appl. \textbf{15} (1966), 243--252.

\bibitem[BISG17]{BIG:local}
R{\'e}mi Boutonnet, Adrian Ioana, and Alireza Salehi-Golsefidy, \emph{Local
  spectral gap in simple {Lie} groups and applications}, Invent. Math.
  \textbf{208} (2017), no.~3, 715--802, \eprint{arXiv:1503.06473}.

\bibitem[CPSC07]{CPS:rd}
Indira Chatterji, Christophe Pittet, and Laurent Saloff-Coste, \emph{Connected
  {Lie} groups and property {RD}}, Duke Math. J. \textbf{137} (2007), no.~3,
  511--536, \eprint{arXiv:math/0409413}.

\bibitem[CR73]{CR:random}
Stephen~A. Cook and Robert~A. Reckhow, \emph{Time bounded random access
  machines}, J. Comput. System Sci. \textbf{7} (1973), 354--375.

\bibitem[DN05]{DN:solkit}
Christopher~M. Dawson and Michael~A. Nielsen, \emph{The {Solovay-Kitaev}
  algorithm}, 2005, \eprint{arXiv:quant-ph/0505030}.

\bibitem[Dri84]{Drinfeld:additive}
Vladimir Drinfeld, \emph{Finitely-additive measures on {$S^{2}$} and {$S^{3}$},
  invariant with respect to rotations}, Funct. Anal. Appl. \textbf{18} (1984),
  no.~3, 245--246, Translated from Russian.

\bibitem[Elk16]{Elkasapy:lower}
Abdelrhman Elkasapy, \emph{A new construction for the shortest non-trivial
  element in the lower central series}, 2016, \eprint{arXiv:1610.09725}.

\bibitem[ET15]{ET:lower}
Abdelrhman Elkasapy and Andreas Thom, \emph{On the length of the shortest
  non-trivial element in the derived and the lower central series}, J. Group
  Theory \textbf{18} (2015), no.~5, 793--804, \eprint{arXiv:1311.0138}.

\bibitem[Gel34]{Gelfond:hilbert}
Alexander Gelfond, \emph{Sur le septi\`eme probl\`eme de hilbert}, Bull. Acad.
  Sci. URSS (1934), no.~4, 623--634.

\bibitem[Hal03]{Hall:gtm}
Brian~C. Hall, \emph{Lie groups, {Lie} algebras, and representations}, Graduate
  Texts in Mathematics, vol. 222, Springer-Verlag, 2003.

\bibitem[Hel78]{Helgason:symmetric}
Sigurdur Helgason, \emph{Differential geometry, {Lie} groups, and symmetric
  spaces}, Pure and Applied Mathematics, vol.~80, Academic Press, 1978.

\bibitem[HRC02]{HRC:discrete}
Aram~W. Harrow, Benjamin Recht, and Isaac~L. Chuang, \emph{Efficient discrete
  approximations of quantum gates}, J. Math. Phys. \textbf{43} (2002), no.~9,
  4445--4451, \eprint{arXiv:quant-ph/0111031}.

\bibitem[Kaz67]{Kazhdan:dual}
David~A. Kazdan, \emph{On the connection of the dual space of a group with the
  structure of its closed subgroups}, Funct. Anal. Appl. \textbf{1} (1967),
  63--65, Translated from Russian.

\bibitem[KBRY15]{KBRY:framework}
Vadym Kliuchnikov, Alex Bocharov, Martin Roetteler, and Jon Yard, \emph{A
  framework for approximating qubit unitaries}, 2015,
  \eprint{arXiv:1510.03888}.

\bibitem[KBS14]{KBS:topological}
Vadym Kliuchnikov, Alex Bocharov, and Krysta~M. Svore, \emph{Asymptotically
  optimal topological quantum compiling}, Phys. Rev. Lett. \textbf{112} (2014),
  140504, \eprint{arXiv:1310.4150}.

\bibitem[Khu99]{Khukhro:formula}
Evgeny Khukhro, \emph{Applications of the {Baker-Hausdorff} formula in finite
  {$p$}-groups}, Groups St. Andrews 1997 in Bath, II, London Math. Soc. Lecture
  Note Ser., vol. 261, Cambridge Univ. Press, Cambridge, 1999, pp.~460--473.

\bibitem[Kit97]{Kitaev:error}
Alexei Kitaev, \emph{Quantum computations: algorithms and error correction},
  Uspekhi Mat. Nauk \textbf{52} (1997), no.~6(318), 53--112.

\bibitem[KMM16]{KMM:practical}
Vadym Kliuchnikov, Dmitri Maslov, and Michele Mosca, \emph{Practical
  approximation of single-qubit unitaries by single-qubit quantum {C}lifford
  and {T} circuits}, IEEE Trans. Comput. \textbf{65} (2016), no.~1, 161--172,
  \eprint{arXiv:1212.6964}.

\bibitem[KSV02]{KSV:cqc}
Alexei Kitaev, Alexander Shen, and Mikhail~N. Vyalyi, \emph{Classical and
  quantum computation}, Graduate Studies in Mathematics, vol.~47, American
  Mathematical Society, 2002.

\bibitem[Kup15]{K:jones}
Greg Kuperberg, \emph{How hard is it to approximate the {Jones} polynomial?},
  Theory Comput. \textbf{11} (2015), 183--219, \eprint{arXiv:0908.0512}.

\bibitem[LMS19]{LMS:volume}
Pierre Lairez, Marc Mezzarobba, and Mohab {Safey El Din}, \emph{Computing the
  volume of compact semi-algebraic sets}, ISSAC '19: Proceedings of the 2019
  International Symposium on Symbolic and Algebraic Computation, Association
  for Computing Machinery, 2019, \eprint{arXiv:1904.11705}, pp.~259--266.

\bibitem[Loh14]{Lohrey:groups}
Markus Lohrey, \emph{The compressed word problem for groups}, {SpringerBriefs
  in Mathematics}, Springer, New York, 2014.

\bibitem[LPS86]{LPS:hecke1}
Alexander Lubotzky, Ralph Phillips, and Peter Sarnak, \emph{Hecke operators and
  distributing points on the sphere. {I}}, Comm. Pure Appl. Math. \textbf{39}
  (1986), no.~Supplement, S149--S186.

\bibitem[LPS87]{LPS:hecke2}
\bysame, \emph{Hecke operators and distributing points on {$S^2$}. {II}}, Comm.
  Pure Appl. Math. \textbf{40} (1987), no.~4, 401--420.

\bibitem[Mar80]{Margulis:means}
G.~A. Margulis, \emph{Some remarks on invariant means}, Monatsh. Math.
  \textbf{90} (1980), no.~3, 233--235.

\bibitem[MKS76]{MKS:groups}
Wilhelm Magnus, Abraham Karrass, and Donald Solitar, \emph{Combinatorial group
  theory}, 2nd ed., Dover Publications, Inc., 1976.

\bibitem[NC00]{NC:qcqi}
Michael~A. Nielsen and Isaac~L. Chuang, \emph{Quantum computation and quantum
  information}, Cambridge University Press, 2000.

\bibitem[PR81]{PR:versus}
Wolfgang Paul and {R\"udiger} Reischuk, \emph{On time versus space. {II}}, J.
  Comput. System Sci. \textbf{22} (1981), no.~3, 312--327.

\bibitem[RS16]{RS:optimal}
Neil~J. Ross and Peter Selinger, \emph{Optimal ancilla-free {Clifford+T}
  approximation of {$z$}-rotations}, Quantum Inf. Comput. \textbf{16} (2016),
  no.~11-12, 901--953, \eprint{arXiv:1403.2975}.

\bibitem[Sch77]{Schottky:mehrfach}
Friedrich Schottky, \emph{Ueber die conforme {Abbildung} mehrfach
  {zusammenh\"angender} ebener {Fl\"achen}}, J. Reine Angew. Math. \textbf{83}
  (1877), 300--351.

\bibitem[SS71]{SS:schnelle}
Arnold {Sch\"onhage} and Volker Strassen, \emph{Schnelle {Multiplikation}
  grosser {Zahlen}}, Computing (Arch. Elektron. Rechnen) \textbf{7} (1971),
  281--292.

\bibitem[Sul81]{Sullivan:additive}
Dennis Sullivan, \emph{For {$n>3$} there is only one finitely additive
  rotationally invariant measure on the {$n$}-sphere defined on all {Lebesgue}
  measurable subsets}, Bull. Amer. Math. Soc. (N.S.) \textbf{4} (1981), no.~1,
  121--123.

\bibitem[Tit67]{Tits:tabellen}
Jacques Tits, \emph{Tabellen zu den einfachen {Lie Gruppen} und ihren
  {Darstellungen}}, Springer-Verlag, 1967.

\bibitem[Var84]{Varadarajan:gtm}
V.~S. Varadarajan, \emph{Lie groups, {Lie} algebras, and their
  representations}, Graduate Texts in Mathematics, vol. 102, Springer-Verlag,
  1984, Reprint of the 1974 edition.

\bibitem[Var13]{Varju:compact}
P{\'e}ter~P{\'a}l Varj{\'u}, \emph{Random walks in compact groups}, Doc. Math.
  \textbf{18} (2013), 1137--1175, \eprint{arXiv:1209.1745}.

\end{thebibliography}

\end{document}